\title{Finding equilibria: simpler for pessimists, simplest for optimists} 
\pgfplotsset{compat=1.17}
\newtheorem*{question}{Question}
\newcommand{\tpl}[1]{\left( #1 \right)}
\newcommand{\Gc}{\mathcal{G}}
\newcommand{\Oh}{\mathcal{O}}
\newcommand{\prob}{\mathbb{P}}
\newcommand{\Nb}{\mathbb{N}}
\newcommand{\Rb}{\mathbb{R}}
\newcommand{\Eb}{\mathbb{E}}
\newcommand{\Qb}{\mathbb{Q}}
\renewcommand{\M}{\mathbb{M}}
\newcommand{\re}{\M}
\newcommand{\RM}{\M}
\newcommand{\OM}{\mathbb{OM}}
\newcommand{\PM}{\mathbb{PM}}
\newcommand{\oexp}{\OM}
\newcommand{\pexp}{\PM}
\newcommand{\X}{\mathbb{X}}
\newcommand{\xr}{\X}
\newclass{\TOWER}{TOWER}
\newclass{\ACKERMANN}{ACKERMANN}
\newclass{\EXPTIME}{EXPTIME}
\newclass{\IIEXPTIME}{2-EXPTIME}
\newclass{\NEXPTIME}{NEXPTIME}
\newclass{\SQRTSUM}{SqrtSum}
\newclass{\THREESAT}{3SAT}
\newclass{\PTIME}{PTIME}
\newclass{\RGame}{2PlayerReach}
\newcommandx{\theju}[2][1=]{\todo[linecolor=blue,backgroundcolor=blue!25,bordercolor=blue,#1]{\tiny T: #2}}
\newcommandx{\leon}[2][1=]{\todo[linecolor=red,backgroundcolor=red!25,bordercolor=red,#1]{\tiny L:#2}}
\theoremstyle{claimstyle}
\newtheorem{invariant}{Invariant}
\newcommand{\modifiedreward}[2]{f_{#1,#2}}
\renewcommand{\Game}{\mathcal{G}}
\DeclareMathOperator{\essinf}{ess\,inf}
\DeclareMathOperator{\esssup}{ess\,sup}
\renewcommand{\|}{\upharpoonright}
\newcommand{\bsigma}{{\bar{\sigma}}}
\newcommand{\btau}{{\bar{\tau}}}
\newcommand{\brho}{{\bar{\rho}}}
\newcommand{\bx}{{\bar{x}}}
\newcommand{\by}{{\bar{y}}}
\newcommand{\bz}{\bar{z}}
\newcommand{\bM}{\bar{M}}
\renewcommand{\d}{\mathsf{d}}
\renewcommand{\p}{\mathsf{p}}
\newcommand{\val}{\mathsf{val}}
\newcommand{\Occ}{\mathsf{Occ}}
\newcommand{\Inf}{\mathsf{Inf}}
\newcommand{\Hist}{\mathsf{Hist}}
\newcommand{\Plays}{\mathsf{Plays}}
\newcommand{\punish}{\mathsf{punish}}
\newcommand{\Supp}{\mathsf{Supp}}
\newcommand{\anchor}{\mathsf{anchor}}
\newcommand{\last}{\mathsf{last}}
\newcommand{\Attr}{\mathsf{Attr}}
\newcommand{\Yes}{\mathsf{Yes}}
\newcommand{\No}{\mathsf{No}}
\newcommand{\Strat}{\mathsf{Strat}}
\newcommand{\bad}{{\frownie}}
\renewcommand{\epsilon}{\varepsilon}
\renewcommand{\l}{\ell}
\newcommand{\stack}[2]{\stackrel{#1}{#2}}
\newcommand{\anch}{\scalebox{0.5}{\text{\faAnchor}}}
\tikzset{stoch/.style={state, fill=black, inner sep=1.5pt, text=white, minimum size=0pt}}
\author{Léonard Brice}{Université Libre de Bruxelles, Belgium}{leonard.brice@ulb.be}{https://orcid.org/0000-0001-7748-7716}{}
\author{Thomas A. Henzinger}{Institute of Science \& Technology Austria}{tah@ist.ac.at}{https://orcid.org/0000-0002-2985-7724}{}
\author{K. S. Thejaswini}{Institute of Science \& Technology Austria}{thejaswini.k.s@ista.ac.at}{https://orcid.org/0000-0001-6077-7514}{}
\authorrunning{L. Brice, T. A. Henzinger, K. S. Thejaswini} 
\keywords{Equilibria, Nash equilibria,  stochastic games, graph games, risk measure, entropic risk measure, risk-sensitive equilibria} 
\begin{document}

\maketitle

\begin{abstract}
We consider simple stochastic games with terminal-node rewards and multiple players, who have differing perceptions of risk. Specifically, we study risk-sensitive equilibria (RSEs), where no player can improve their perceived reward---based on their risk parameter---by deviating from their strategy. We start with the entropic risk (ER) measure, which is widely studied in finance. ER characterises the players on a quantitative spectrum, with positive risk parameters representing optimists and negative parameters representing pessimists. Building on known results for Nash equilibira, we show that  RSEs exist under ER for all games with non-negative terminal rewards. However, using similar techniques, we also show that the corresponding \emph{constrained} existence problem---to determine whether an RSE exists under ER with the payoffs in given intervals---is undecidable.

To address this, we introduce a new, qualitative risk measure---called \emph{extreme risk} (XR)---which coincides with the limit cases of positively infinite and negatively infinite ER parameters. Under XR, every player is an extremist: an extreme optimist perceives their reward as the maximum payoff that can be achieved with positive probability, while an extreme pessimist expects the minimum payoff achievable with positive probability. Our first main result proves the existence of RSEs also under XR for non-negative terminal rewards. Our second main result shows that under XR the constrained existence problem is not only decidable, but $\NP$-complete. Moreover, when all players are extreme optimists, the problem becomes $\PTIME$-complete. Our algorithmic results apply to all rewards, positive or negative, establishing the first decidable fragment for equilibria in simple stochastic games with terminal objectives without restrictions on strategy types or number of players.
\end{abstract}

\section{Introduction}
Stochastic systems have been used extensively in several areas including  verification~\cite{FKNP11}, learning theory~\cite{AJKS21}, epidemic processes~\cite{Lef81} to name a few. Several real-world systems however do not work with a centralised control. Therefore, modelling using stochastic systems with multiple agents makes for more faithful abstractions of such systems without a centralised control. Some examples of fields in which multi-agents stochastic modelling include cyber physical systems~\cite{SEC16}, distributed and probabilistic computer programs~\cite{dAHJ01}, probabilistic planning~\cite{TKI10}. In such cases, the problem of reasoning about multiple agents with several, often times orthogonal objectives, becomes important. 
For multi-agent systems modelled with stochasticity on the underlying arena, a fundamental question to ask is the existence or finding of an equilibrium.
The most popular equilibria in literature are Nash equilibria~\cite{Nas50}. However, those come with their own downsides. The computational complexity for studying Nash equilibria over multi-agent systems is prohibitively expensive, and even undecidable in the general case, where systems have $10$ or more players~\cite{UW11}. 
Further, even if Nash equilibria could be computed efficiently, they do not faithfully model the agents in real world settings
since they do not consider their tolerance or averseness to risk.

Let us consider a $1$-player game where a protagonist is proposed two options: (a) earning \$1; (b) playing a lottery in which, with probability $\frac{1}{40}$, she gets \$40, and with probability $\frac{39}{40}$, she does not earn anything.
Classically, rational strategies would be maximising the expected payoff. From this perspective, both options yield an expected payoff of \$1, making them equivalent.
This approach is particularly justified when the game represents a scenario that can be repeated many times: the law of large numbers ensures that, in the long run, the average payoff will converge to the expected payoff. However, when the game is played only once, the protagonist may prioritise immediate needs. If she urgently requires \$1, the guaranteed option (a) becomes preferable.

Conversely, if she is a risk-taker or finds herself in a situation where only the \$40 can make a significant difference, she may prefer the high-risk option (b).
Although this choice might appear irrational, it mirrors the behaviour of millions of people who participate everyday in games with a negative expected payoff, driven by the allure of a potentially life-changing win, and generating an annual turnover of USD 536 billions~\cite{GamblingNewspaper23} for the gambling industry.
That industry, on the other hand, operates on a large scale where expected payoff becomes the key metric. 
This contrast underscores the importance of alternative measures to expected payoff that account for each agent's risk tolerance.

\subparagraph*{Risk Measures.}
A \emph{risk measure} captures the perception that a player has of what their payoff will be. In that sense, they generalise the notion of expected payoff.
Various risk measures exist in the literature, and have been used extensively in the field of economics and finance. 
Some of these risk measures include expected shortfall (ES), value at risk (VaR)~\cite{Aue18}, variance~\cite{Bra99}, entropic risk measure (ER)~\cite{FS02}. 

A lot of work has been done in considering these risk measures over MDPs which use variance (along with mean) as a risk-measure~\cite{FK89, PSB22,MT11}, ES~\cite{RRS15,KM18,Meg22} (also referred to as conditional value at risk (CVaR), average value at risk (AVaR), expected tail loss (ETL), and superquantile in literature) and ER~\cite{HM72,BR14,BCMP24}. 
Studying the entropic risk measure in MDPs appears more practical compared to expected shortfall  or using variance-penalised risk-measures. This impracticability of ES and variance-penalised measure in particular is due to the intractable exponential memory~\cite{HK15} and time required to compute optimal strategies~\cite{PSB22}, even for the one agent system of Markov decision processes (MDPs). On the other hand, when the risk measure used is ER, players have optimal positional strategies in MDPs~\cite{How72}, which makes it a prime candidate for consideration in multi-agent settings.

\subparagraph*{Entropic Risk Measure.}
The entropic risk measure is computed by assigning to each agent a risk parameter, i.e., a value $\rho \in \Rb$.
The entropic risk measure of a random variable $X$ is then defined as
$\re_\rho[X] = -\frac{1}{\rho} \log_e \left( \Eb \left[ e^{-\rho X}\right] \right)$. 
If the risk parameter $\rho$ is positive, then more weight will be given to the bad payoffs: the corresponding player can then be considered as risk-averse.
Conversely, players with a negative $\rho$ are more risk-loving.
When $\rho$ tends to $0$, the entropic risk measure converges to the classical expectation $\Eb[X]$.

The game depicted by Figure~\ref{fig:example_gamma} extends the lottery example we discussed earlier. 
Black vertices are stochastic, and the circle vertex is controlled by player $\Circle$.
A play can be seen as an infinite sequence of moves of a token along the edges of the graph, starting from $a$: from a stochastic vertex, it takes one of the outgoing edges with the probabilities indicated on those, and from a vertex controlled by the player, she chooses which edge it takes.
The payoff $40$, $0$, or $1$ is obtained when the terminal vertex $t_1$, $t_2$, or $t_3$ is reached, respectively.
If no terminal vertex is reached, then the payoff is $0$.
Taking the red edge corresponds to option (a): then, her risk entropy is always $1$, for every risk parameter $\rho$.
But if she chooses option (b), that is, if she takes the blue edge, her risk entropy is $\re_\rho[\mu_{\circ}] = -\frac{1}{\rho} \log \left( \Eb \left[ e^{-\rho \mu_{\circ}}\right] \right) = -\frac{1}{\rho} \log \left(  e^{-40\rho } + \frac{39}{40} \right)$.
Both cases are illustrated with red and blue curves in \cref{fig:example_plot}.
The curves cross at abscissa $\rho = 0$, where the entropic risk measure corresponds to the expectation. Note that other strategies are possible if \emph{randomisation} is allowed---the player could, for example, toss a coin and participate in the lottery if the outcome is heads. The perceived reward of randomising between outermost red and blue edges are illustrated in the intermediate cases with mixtures of red and blue in \cref{fig:example_plot}.




 
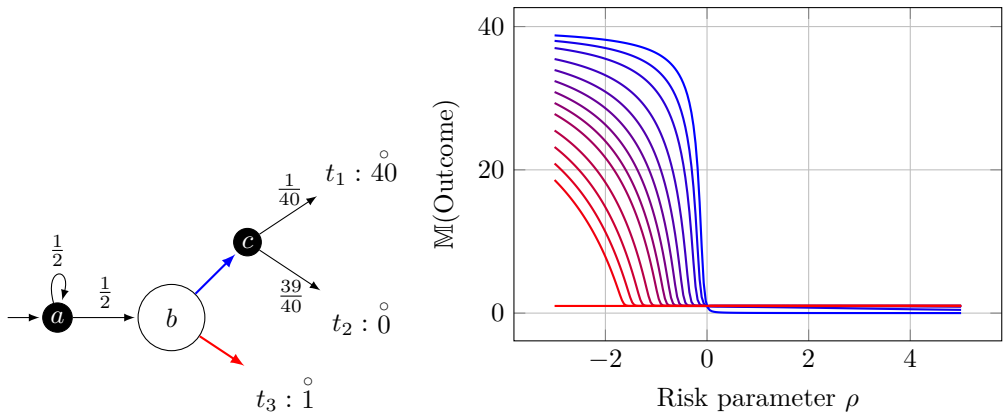
\begin{figure}[h] 
			\centering
            \begin{subfigure}[t]{0.4\textwidth}
			\begin{tikzpicture}[->,>=latex,shorten >=1pt, initial text={}, scale=1, every node/.style={scale=1}]
				\node[initial left, stoch] (a) at (0, 0) {$a$};
				\node[state] (b) at (1.5, 0) {$b$};
                \node[stoch] (c) at (2.5, 1) {$c$};
                \node (t1) at (4, 2) {$t_1:~\stack{\circ}{40}$};
                \node (t2) at (4, 0) {$t_2:~\stack{\circ}{0}$};
                \node (t3) at (3, -1) {$t_3:~\stack{\circ}{1}$};
                \path (a) edge[loop above] node[above] {$\frac{1}{2}$} (a);
				\path (a) edge node[above] {$\frac{1}{2}$} (b);
                \path (b) edge[blue, thick] (c);
                \path (b) edge[red, thick] (t3);
				\path (c) edge node[above] {$\frac{1}{40}$} (t1);
				\path (c) edge node[below] {$\frac{39}{40}$} (t2);
			\end{tikzpicture}
			\caption{A stochastic MDP}
			\label{fig:example_gamma}
            \end{subfigure}
            \begin{subfigure}[t]{0.55\textwidth}
			\begin{tikzpicture}
              \begin{axis}[
                xlabel={Risk parameter $\rho$},
                ylabel={$\re(\text{Outcome})$},
                domain=-3:5,
                samples=200,
                    width=8cm,
                   height=6cm,
                grid=major,
                ]
                \addplot [
                  red!8!blue,
                  thick
                ]
                {-1/x * log2((9/10)*e^(-1*x) + (1/400) * e^(-40*x) + (39/400))/log2(e)};
                \addplot [
                  red!16!blue,
                  thick
                ]
                {-1/x * log2((199/200)*e^(-1*x) + (1/8000) * e^(-40*x) + (39/8000))/log2(e)};
                \addplot [
                  red!24!blue,
                  thick
                ]
                {-1/x * log2((19999/20000)*e^(-1*x) + (1/800000) * e^(-40*x) + (39/800000))/log2(e)};
                \addplot [
                  red!32!blue,
                  thick
                ]
                {-1/x * log2((1999999/2000000)*e^(-1*x) + (1/80000000) * e^(-40*x) + (39/80000000))/log2(e)};
                \addplot [
                  red!40!blue,
                  thick
                ]
                {-1/x * log2((199999999/200000000)*e^(-1*x) + (1/8000000000) * e^(-40*x) + (39/8000000000))/log2(e)};
                \addplot [
                  red!48!blue,
                  thick
                ]
                {-1/x * log2((19999999999/20000000000)*e^(-1*x) + (1/800000000000) * e^(-40*x) + (39/800000000000))/log2(e)};
                \addplot [
                  red!56!blue,
                  thick
                ]
                {-1/x * log2((1999999999999/2000000000000)*e^(-1*x) + (1/80000000000000) * e^(-40*x) + (39/80000000000000))/log2(e)};
                \addplot [
                  red!64!blue,
                  thick
                ]
                {-1/x * log2((199999999999999/200000000000000)*e^(-1*x) + (1/8000000000000000) * e^(-40*x) + (39/8000000000000000))/log2(e)};
                \addplot [
                  red!72!blue,
                  thick
                ]
                {-1/x * log2((199999999999999999/200000000000000000)*e^(-1*x) + (1/8000000000000000000) * e^(-40*x) + (39/8000000000000000000))/log2(e)};
                \addplot [
                  red!80!blue,
                  thick
                ]
                {-1/x * log2((199999999999999999999/200000000000000000000)*e^(-1*x) + (1/8000000000000000000000) * e^(-40*x) + (39/8000000000000000000000))/log2(e)};
                \addplot [
                  red!88!blue,
                  thick
                ]
                {-1/x * log2((199999999999999999999999/200000000000000000000000)*e^(-1*x) + (1/8000000000000000000000000) * e^(-40*x) + (39/8000000000000000000000000))/log2(e)};
                \addplot [
                  red!94!blue,
                  thick
                ]
                {-1/x * log2((199999999999999999999999999/200000000000000000000000000)*e^(-1*x) + (1/8000000000000000000000000000) * e^(-40*x) + (39/8000000000000000000000000000))/log2(e)};
                \addplot [
                  blue,
                  thick
                ]
                {-1/x * log2((1/40) * e^(-40*x) + (39/40))/log2(e)};
                \addplot [
                  red,
                  thick
                ]
                {+1};
              \end{axis}
            \end{tikzpicture}
			\caption{Each curve represents the perceived reward of a player choosing only blue strategy, only red, or  randomising between both strategies. The percieved payoff for a player with risk parameter $\rho \in (-3,5)$ for these strategies are represented.}
			\label{fig:example_plot}
            \end{subfigure}
        \caption{Entropic risk measure}\label{fig:example_re}
\end{figure}
Unfortunately, even for two player zero-sum stochastic games with total-reward objectives (payoff is the sum of the rewards seen along the way), computing optimal strategies can only be done in $\PSPACE$, when the base $e$ is replaced by an algebraic number; and if $e$ is the base of the exponent, then it is decidable only subject to Shanuel's conjecture~\cite{BCMP24}. 
Solving the two-player zero-sum case is a specific case of finding equilibria in two-agent systems where the payoffs of the two agents are exactly the negation of each others and so are the risk parameters of each of the agents.
Therefore, reasoning about multi-agent systems with ER also has potential to be computationally intractable.

\subparagraph*{Extreme Risk Measure.} We introduce a new risk measure called extreme risk measure (XR) to identify tractable risk parameters. 
%
Consider an agent who wishes to maximise the lowest payoff received with positive probability.
In our example, 
by choosing option (a) her only payoff is $\$1$, whereas by choosing option (b), the payoffs that she receives with positive probability are $\$40$ and $\$0$. 
This agent would choose the option (a) since, then, the lowest reward she gets is $\$1$, instead of $\$0$. This would be her choice regardless of the probabilities or if the lottery amount in option (b) is increased.
Such agents can be considered ``extreme pessimists'' because
their perceived payoff can be thought of as the minimum among all the possible payoffs.
Similarly, one can define ``extreme optimists''  whose perceived reward is the best payoff that can be obtained with positive probability.
In the above scenario, an extreme optimist posed with the same options would choose option (b), no matter how small the probability is of receiving that payoff.

Extreme pessimists can be used to model safety-critical agents, where any positive probability of low reward or failure is unacceptable.
On the other hand, extreme optimists model naturally the opponents of such agents.
In a multiplayer setting, they can be an accurate modelling of agents like hackers in a system, who are happy with a small probability of success, or agents that have the possibility to restart their interactions with the same system, so that as long as there is a non-zero probability of achieving a high reward, they are guaranteed to receive that high reward. 


\subparagraph*{Our results.}
We consider the problem of finding equilibria in a multiplayer stochastic game, that is, a game in which the payoffs that the players receive depend on the \emph{terminal vertex} that is reached, and in which an infinite play is associated to the zero payoff vector.

Our contributions are four fold. 
Firstly, we consider the problem of finding equilibria where entropic risk measure is used to determine the perceived reward of each player.  Each player has their own risk-sensitivity parameter, and we wish to find an equilibrium where no player has the incentive to deviate and increase their risk measure. We show that, when the rewards are all non-negative, such an equilibrium always exists.
We conjecture that this remains true when rewards can be negative.
Although some equilibria exist, not all equilibria are made the same, with some equilibria being more desirable than the others. One might want to find an equilibrium that maximises the overall social welfare, or want to minimise it for certain agents. A reasonably general setting is providing an interval for the risk measure for each agent and to check if there is an equilibrium satisfying these constraints. We call this problem \emph{constrained existence problem of risk-sensitive equilibria} (RSEs). 
We show (in \cref{sec:ERM}) that this problem is undecidable when the risk parameters of the players are rational values, with undecidability results extending from the constrained existence problem for Nash equilibria in the work of Ummels and Wojtczak~\cite{UW11}. However, we find restrictions on strategies to recover decidability. 
If we restrict the memory requirements of each player, then for (small) finite memory strategies, we can solve the problem by encoding it using the existential theory of reals with exponentiation, giving us decidability subject to Shanuel's conjecture, and $\PSPACE$ algorithms when the base of the exponents are encoded as small algebraic instances, reminiscent of the two-player zero-sum case by Baier et al.~\cite{BCMP24}. 

Secondly, since the general problem is undecidable, and even in restricted cases, we obtain complexities that are $\PSPACE$ or higher, we pivot to searching for a more tractable risk measure that can be used to find equilibria in multi-agent systems. We define extreme risk measure (XR) as a novel risk measure to consider in multi-agent stochastic systems. We show (in \cref{sec:XR}) that our new definition is robust, since it exactly captures the well-studied entropic risk measure when the risk parameters tend to $\pm \infty$.
We further show the existence of 
such equilibria for games with non-negative rewards. Moreover, there exists a stationary strategy profile that can be algorithmically constructed in polynomial time. We conjecture, again, that this remains true when negative rewards are involved.
One further advantage of XR as a risk measure is that it is indifferent to the exact probabilities of the underlying stochastic model, since it only deals with events that occur with a positive probability and, therefore, can also be used in systems where the underlying probabilities are unknown. 

Thirdly, we show that the constrained existence problem of RSEs is decidable and also $\NP$-complete when the perceived payoff is calculated using XR, where each agent is either an extreme optimist or pessimist. The $\NP$ membership is nontrivial and follows several steps. First, we show that if there is a strategy that satisfies the constraints, then there is a finite abstraction of this strategy. Later, we show that this finite abstraction of the strategy has a polynomial representation. 
With this polynomial representation of the strategy, we show that verifying whether a given polynomially represented strategy is a risk-sensitive equilibrium that satisfies the constraints can also be done in polynomial time. 
Finally, we show that if all players are extreme optimists, this problem is $\PTIME$-complete.
\section{Preliminaries} 
We assume that the reader is familiar with the basics of probability and graph theory. However, we define some concepts for establishing notation. 

\subparagraph*{Probabilities.} Given a (finite or infinite) set of outcomes $\Omega$ and a probability measure $\prob$ over $\Omega$, let $X$ be a random variable over $\Omega$, that is, a mapping $X: \Omega \to \Rb$. We then write $\Eb_\prob[\X]$, or simply $\Eb[X]$, for the expectation of $X$, when it is defined.
Given a finite set $S$, a \emph{probability distribution} over $S$ is a mapping $d: S \to [0,1]$ that satisfies the equality $\sum_{x \in S} d(x) = 1$.
We write $\Supp (d)$ for the \emph{support} of the distribution $d$, that is, the set of elements $x \in S$ such that $d(x) > 0$.

\subparagraph{Risk measures.}
Given a set $\Omega$ of outcomes, a \emph{risk measure} over $\Omega$ is a mapping $M$ which maps a probability measure $\prob$ over $\Omega$ and a random variable $X$ to a real value $M^\prob[X]$.

Sometimes, in the literature, risk measures are expected to have the following three properties: (1) they are \emph{normalised}, i.e., we have $M^\prob[0] = 0$; (2) they are  \emph{monotone}, i.e., the pointwise inequality $X \leq Y$ implies $M^\prob[X] \leq M^\prob[Y]$; and (3) they are \emph{translative}, i.e., $M^\prob[X + c] = M^\prob[X] + c$ for every constant $c$.
In particular, the expectation of a random variable $\Eb$ is a risk measure.
We do not refer to the above properties again and only state them here to remark that all the risk measures we will consider satisfy the above properties.
We remark that the definition of translative sometimes instead refers to satisfying the opposite of the property we define as translative, $M^\prob[X + c] = M^\prob[X] - c$ for every constant $c$. This is a matter of whether we use our risk measure or its negation.


\subparagraph*{Graph, paths, games.}A directed graph $(V,E)$ consists of a set of \emph{vertices} $V$ and a set of ordered pair of vertices, called \emph{edges}, $E$. 
In a directed graph $(V, E)$, for each vertex $u$, we write $E(u)$ to represent the set $E \cap (\{u\} \times V)$.
For simplicity, we often write $uv$ for an edge $(u, v)\in E$.
A \emph{path} in the directed graph $(V, E)$ is a (finite or infinite) word $\pi = \pi_0 \pi_1 \dots$ over the alphabet $V$ such that $\pi_n\pi_{n+1} \in E$ for every $n$ such that $\pi_n$ and $\pi_{n+1}$ exist.
We write $\Occ(\pi)$ for the set of vertices occurring along $\pi$, and $\Inf(\pi)$ for those that occur infinitely often, if there are any.
The prefix $\pi_0 \dots \pi_n$ is written as $\pi_{\leq n}$ or $\pi_{< n+1}$, and the suffix $\pi_n \pi_{n+1} \dots$ is written as $\pi_{\geq n}$ or $\pi_{>n-1}$.
A finite path $\pi = \pi_0 \dots \pi_n$ is \emph{simple} if every vertex occurs at most once along $\pi$.
It is a \emph{cycle} if its last vertex $\pi_n$ is such that $\pi_n\pi_0 \in E$.

\begin{definition}[Game]
    A \emph{game} is a tuple $\Game = (V, E, \Pi, (V_i)_{i \in \Pi}, \p, \mu)$, where we have:
    \begin{itemize}
        \item a directed graph $(V, E)$, called the \emph{underlying} graph of $\Game$;

        \item a finite set $\Pi$ of \emph{players};

        \item a partition $(V_i)_{i \in \Pi \cup \{?\}}$ of the set $V$, where $V_i$ denotes the set of vertices \emph{controlled} by player $i$, and the vertices in $V_?$ are called \emph{stochastic vertices};

        \item a \emph{probability function} $\p: E(V_?) \to [0, 1]$, such that for each stochastic vertex $u$, the restriction of $\p$ to $E(u)$ is a probability distribution;

        \item a mapping $\mu: T \to \Rb^\Pi$ called \emph{payoff function}, where $T$ is the set of \emph{terminal vertices}, i.e. vertices of the graph $(V, E)$ that have no outgoing edges.
        We also write $\mu_i$, for each player $i$, for the function that maps a terminal vertex $t$ to the $i^\text{th}$ coordinate of the tuple $\mu(t)$.
    \end{itemize}
\end{definition}

In a more general framework, payoffs can be assigned to all infinite paths.
Here, we only focus on what is usually called \emph{simple quantitative games}, i.e. games in which the underlying graph contains terminal vertices and the payoffs depend only on which terminal vertex is eventually reached.
We thus extend the mapping $\mu$ to the set $(V \setminus T)^\omega \cup (V \setminus T)^* T$ by defining $\mu(v_1 \dots v_k t) = \mu(t)$, and $\mu(v_1 v_2 \dots) = (0)_{i \in \Pi}$ (if no terminal vertex is reached, everyone gets the payoff $0$).
A game is \emph{Boolean} if all payoffs belong to the set $\{0, 1\}$.

An \emph{initialised game} is a tuple $(\Game, v_0)$, usually written $\Game_{\|v_0}$, where $v_0 \in V$ is an \emph{initial vertex}.
In what follows, when the context is clear, we use the word \emph{game} also for an initialised game.
We often assume that we are given a game $\Game_{\|v_0}$ and implicitly use the same notations as in the definition above.

\subparagraph*{Histories and plays.} We call \emph{play} a path in the underlying graph that is infinite, or whose last vertex is terminal.
Other paths are called \emph{histories}.
We will then use the notations $\Hist(\Game)$ to denote finite paths in the graph of the game, and $\Plays(\Game)$ to denote both finite and infinite paths. For a history $h = h_0 \dots h_n$, we write $\last(h) = h_n$.
We will also write $\Hist_i(\Game)$ for the set of histories whose last vertex is controlled by player $i$.
A history or play in an initialised game $\Game_{\|v_0}$ is a history or play in $\Game$ whose first vertex is $v_0$.

\begin{definition}[Markov decision process, Markov chain]
    A (initialised or not) \emph{Markov decision process} is a game with one player.
    A \emph{Markov chain} is a game with zero player.
\end{definition}

\paragraph*{Strategies, and strategy profiles}


    In a game $\Game_{\|v_0}$, a \emph{strategy} for player $i$ is a mapping $\sigma_i$ that maps each history $hu \in \Hist_i(\Game_{\|v_0})$ to a probability distribution over $E(u)$.
The set of possible strategies for player $i$ in $\Game_{\|v_0}$ is written as $\Strat_i(\Game_{\|v_0})$.
A path $\pi_0 \pi_1 \dots$ (be it a history or a play) is \emph{compatible} with the strategy $\sigma_i$ if for each $k$ such that $\pi_k \in V_i$, we have the probability that the strategy $\sigma_i$ proposes $h_{k+1}$ after  history $h_{\leq k}$ is positive, that is, $\sigma_i(h_{\leq k})(h_{k+1}) > 0$.
A \emph{strategy profile} for a subset $P \subseteq \Pi$ is a tuple $(\sigma_i)_{i \in P}$.
A strategy profile for the set $P$ of players is written $\bsigma_P$, or simply $\bsigma$ when $P = \Pi$. We also write $\bsigma_{-i}$ for $\bsigma_P$ where $P = \Pi \setminus \{i\}$.
Similarly, we use $(\sigma_{-i}, \sigma'_i)$ to denote the strategy profile $\btau$ defined by $\tau_i = \sigma'_i$ and $\tau_j = \sigma_j$ for $j \neq i$. 
We sometimes write $\bsigma(hv)$ to mean $\sigma_i(hv)$ where $i$ is the player controlling $v$, or $\p(v)$ when $v \in V_?$.
We sometimes also write $V_{-i}$ instead of $\bigcup_{j \neq i} V_j$. 

For some history $h$, and a strategy $\sigma_i$, we define the strategy truncated to a history $h$, written $\sigma_{i\| hv}$, as the strategy $\sigma'_i: h' = \sigma_i(hh')$ in the game $\Game_{\|v}$.

A strategy profile $\bsigma_{-i}$ in the game $\Game_{\|v_0}$ defines an initialised Markov decision process $\Game_{\|v_0}(\bsigma_{-i})$, where the vertices of the (infinite) underlying graph are the histories of $\Game_{\|v_0}$ and the edges are added from $hu$ to each the history $huv$ iff  $uv \in E$.
Similarly, a strategy profile $\bsigma$ for $\Pi$ defines an initialised Markov chain $\Game_{\|v_0}(\bsigma)$.
Thus, it also defines a probability measure $\prob_\bsigma$ over plays --- which turns the payoff functions $\mu_i$ into random variables.

\subparagraph{Pure, stationary, and positional strategies.}
We say that a strategy $\sigma_i$ is \emph{pure} when for each history $hu$, there is a vertex $v$ such that $\sigma_i(hu)(v) = 1$; then we often just write $\sigma_i(hu) = v$. 
We say that $\sigma_i$ is \emph{stationary} when for every two histories $hu, h'u \in \Hist_i(\Game_{\|v_0})$, we have $\sigma_i(hu) = \sigma_i(h'u)$.
In that case, we sometimes assume that strategy $\sigma_i$ is defined in every game $\Game_{\|u}$ and simply write $\sigma_i(u)$ for $\sigma_i(hu)$.
Finally, the strategy $\sigma_i$ is \emph{positional} when it is pure and stationary.
Those concepts are naturally generalised to strategy profiles.

\subparagraph*{Memory structures.}
A \emph{memory structure} for player $i$ is a tuple $(S_i, s_0, \delta_i, \nu_i)$, where $S_i$ is a finite set of \emph{memory states}, where $s_0 \in S_i$ is an \emph{initial state}, where $\delta_i$ is a \emph{memory-update mapping} that maps each pair $(s, v) \in S_i \times V$ to a memory state $s'$, and where $\nu_i$ is an \emph{output mapping} that maps each pair $(s, v) \in S_i \times V_i$ to a distribution $d$ over $E(v)$.
The memory-update mapping can be extended to a mapping $\delta_i^*: \Hist(\Game_{\|v_0}) \to S_i$ with $\delta_i^*(\epsilon) = m_0$ and $\delta_i^*(hu) = \delta_i(\delta_i^*(h), u)$ for each history $hu$.
The memory structure then induces a strategy $\sigma_i$ defined by $\sigma_i(hu) = \nu_i(\delta^*_i(h), u)$ for each history $hu \in \Hist_i\Game_{\|v_0}$.
A strategy induced by a memory structure is called \emph{finite-memory strategy}.
Note that stationary strategies are exactly the strategies that are induced by a memory structure with $|S_i| = 1$.

A tuple of finite-memory strategy profiles for each player is a \emph{finite-memory strategy profile}  is given by a similar object $(S, s_0, \delta, \nu)$ to memory structure, where by replacing $\nu$ by its restriction to $S \times V_i$, we obtain a memory structure that induces the strategy $\sigma_i$.

\subparagraph{Risk-sensitive equilibria, constrained existence problem.}
In multiplayer stochastic games, we wish to study generalisations of the classical Nash equilibria where the expectation is replaced by other risk measures.
Such generalisations are called \emph{risk-sensitive equilibria}~\cite{Now05}. We define this for games played over graphs

When $M$ is a risk measure and $\bsigma$ is a strategy profile, we write $M(\bsigma)$ for $M^{\prob_\bsigma}$.

\begin{definition}[Risk-sensitive equilibrium]
    Let $\Game_{\|v_0}$ be a game, and let $\bM = (M_i)_{i \in \Pi}$ be a tuple of risk measures.
    Let $\bsigma$ be a strategy profile in $\Game_{\|v_0}$, let $i$ be a player, and let $\sigma'_i$ be a strategy for player $i$, called \emph{deviation} of player $i$ from $\bsigma$.
    The deviation $\sigma'_i$ is \emph{profitable} with regards to the risk measure $M_i$ if we have $M_i(\bsigma_{-i}, \sigma'_i)[\mu_i] > M_i(\bsigma)[\mu_i]$
    The strategy profile $\bsigma$ is a $\bM$-\emph{risk-sensitive equilibrium}, or $\bM$-RSE, if no player $i$ has a profitable deviation from $\bsigma$ with regards to $M_i$.
\end{definition}


The following problem is the main focus throughout our paper.

\begin{question}[Constrained existence of risk-sensitive equilibria]
    Given a game $\Game_{\|v_0}$, a tuple of risk measures $\bM$, and two payoff vectors $\bx, \by \in \Qb^\Pi$, does there exist a $\bM$-RSE $\bsigma$ in $\Game_{\|v_0}$ such that for each $i \in \Pi$, we have $x_i \leq M_i(\bsigma)[\mu_i] \leq y_i$?
\end{question}

To turn this problem into an algorithmic decision problem, we need to restrict it to some specific sets of risk measures that can be finitely encoded. 
That is what we do in the sequel of this paper, with the \emph{entropic risk measure}, and later with the \emph{extreme risk measure}.


\section{Entropic risk measure}\label{sec:ERM}
The entropic risk measure is a measure of the perceived payoff, which depends on the aversion or inclination of the player toward risk through the exponential utility function. 
It is defined using a \emph{risk parameter}, i.e. a real value $\rho\in\Rb \setminus \{0\}$: large positive values indicate risk-averseness, large negative values risk-inclination. To see a visual representation of the entropic risk measure, see \cref{fig:example_re} in the introduction.

\begin{definition}[Entropic risk measure]
Given a risk parameter $\rho$, the \emph{entropic risk measure} is defined for every probability measure $\prob$ and random variable $X$ as
$$\re_{\rho}^\prob[X] = -\frac{1}{\rho} \log_e \left( \Eb^\prob \left[ e^{-\rho X}\right] \right).$$
For computational reasons, this definition is generalised by allowing every base $\beta > 1$ instead of Euler's constant. The \emph{entropic risk measure with base $\beta$} is then defined by: 
$$\re^\prob_{\beta\rho}[X] = -\frac{1}{\rho} \log_\beta \left( \Eb^\prob \left[ \beta^{-\rho X}\right] \right).$$
\end{definition}

The three parameters $\prob$, $\rho$ and $\beta$ can be omitted when they are clear from the context.

\begin{remark}
\begin{itemize}
    \item For every $\beta$ and $\rho$, the entropic risk measure $\RM_{\beta\rho}$ is a risk measure.

    \item By enabling any base $\beta$, we obtain a definition that is more general only on a computational level, since handling Euler's constant may not be equivalent to handling rational values.
    Baring computational concerns, these definitions with different bases are equivalent, since for every $\beta$ we have $\RM_{\beta\rho} = \RM_{e\rho'}$, where $\rho' = \rho \log_e(\beta)$.

    \item The above definition implies that for $\rho = 0$, the function is not defined.
    However, it is known that for all $\prob$, $\beta$ and $X$, the quantity $\RM_{\rho}$ converges to $\Eb[X]$ when $\rho$ tends to $0$ (see e.g.~\cite{PDM20}).
    Therefore, we henceforth assume that $\RM_{0}[X] = \mathbb{E}[X]$ to make risk entropy defined for all finite risk parameters $\rho$.
\end{itemize}
\end{remark}

When we are given a profile $\brho = (\rho_i)_{i \in \Pi}$ of risk parameters, we will sometimes write $\M_{\beta\brho}[\mu]$ for the tuple $\left(\M_{\beta\rho_i}[\mu_i]\right)_{i \in \Pi}$.
Risk entropy defines a family of RSEs, namely the $(\M_{\beta\rho_i})_i$-RSEs, that we also call \emph{$(\beta, \brho)$-entropic risk-sensitive equilibria}, or $(\beta, \brho)$-ERSEs.
The following theorem states the existence of such an RSE that uses no randomness in its strategy profile, in cases where all the payoffs are non-negative. 

\begin{theorem}[Existence of ERSE]\label{thm:existanceRSE}
    Let $\Game_{\|v_0}$ be a simple stochastic game with only non-negative payoffs.
    Then, there exists a (pure) $(\beta,\rho)$-ERSE over $\Game_{\|v_0}$.
\end{theorem}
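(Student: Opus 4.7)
The plan is to reduce the existence of a pure $(\beta,\brho)$-ERSE in $\Game_{\|v_0}$ to the existence of a pure Nash equilibrium (in the ordinary expected-payoff sense) in a transformed simple stochastic game $\Game''_{\|v_0}$, and then to invoke the known existence theorem for the latter, alluded to in the introduction as the ``known result for Nash equilibria''.

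First, I would define the transformed terminal payoff $\mu''_i \colon T \to \Rb$ for each player $i$ by
\begin{equation*}
\mu''_i(t) = \begin{cases} 1 - \beta^{-\rho_i \mu_i(t)} & \text{if } \rho_i > 0, \\ \beta^{-\rho_i \mu_i(t)} - 1 & \text{if } \rho_i < 0, \\ \mu_i(t) & \text{if } \rho_i = 0, \end{cases}
\end{equation*}
and let $\Game''_{\|v_0}$ be the initialised game sharing the graph, player partition, and probability function of $\Game_{\|v_0}$ but equipped with $\mu''$ as its terminal payoff. Because $\mu_i(t) \geq 0$ and $\beta > 1$, one checks immediately that $\mu''_i(t) \geq 0$; and the additive shifts $\pm 1$ are chosen precisely so that extending the formula to an infinite play (for which $\mu_i = 0$ and thus $\beta^{-\rho_i \cdot 0} = 1$) returns the value $0$ expected by the simple-game convention. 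Hence $\Game''_{\|v_0}$ is itself a simple stochastic game with non-negative terminal rewards.

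Next, I would verify that the transformation preserves each player's preference order. Expanding the expectation over terminals and using that infinite plays contribute $0$ under $\mu''$ gives, for every strategy profile $\bsigma$,
\begin{equation*}
\Eb_\bsigma[\mu''_i] = \begin{cases} 1 - \Eb_\bsigma[\beta^{-\rho_i \mu_i}] & \text{if } \rho_i > 0, \\ \Eb_\bsigma[\beta^{-\rho_i \mu_i}] - 1 & \text{if } \rho_i < 0, \\ \Eb_\bsigma[\mu_i] & \text{if } \rho_i = 0. \end{cases}
\end{equation*}
Composing with the strictly increasing map $x \mapsto -\tfrac{1}{\rho_i}\log_\beta(1 - \mathrm{sgn}(\rho_i)\, x)$ (a short sign check confirms monotonicity in all cases, using $\beta>1$ and the fact that the argument of the logarithm stays strictly positive because $\beta^{-\rho_i\mu_i}$ is bounded) shows that, with $\bsigma_{-i}$ held fixed, the ordering on deviations $\sigma'_i$ induced by $\sigma'_i \mapsto \Eb_{(\bsigma_{-i},\sigma'_i)}[\mu''_i]$ coincides exactly with that induced by $\sigma'_i \mapsto \re_{\beta\rho_i}(\bsigma_{-i},\sigma'_i)[\mu_i]$. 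Consequently, a profile $\bsigma$ is a $(\beta,\brho)$-ERSE in $\Game_{\|v_0}$ if and only if it is an ordinary Nash equilibrium in $\Game''_{\|v_0}$; purity is manifestly preserved since the two games share the same underlying graph.

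Finally, I would invoke the known existence of pure Nash equilibria in simple stochastic games with non-negative terminal rewards, applied to $\Game''_{\|v_0}$, to obtain a pure profile $\bsigma$ which, by the correspondence above, is a pure $(\beta,\brho)$-ERSE in $\Game_{\|v_0}$. The main obstacle---indeed the only nontrivial design choice---is the shape of the transformation, which must simultaneously ensure (i) non-negativity of $\mu''$, (ii) value $0$ on infinite plays so that $\Game''$ genuinely lies in the scope of the cited existence theorem, and (iii) monotone equivalence of the preference orders. Once the additive shifts $\pm 1$ and the change of sign between the $\rho_i>0$ and $\rho_i<0$ regimes are chosen as above, the remainder of the argument is a routine appeal to the cited theorem.
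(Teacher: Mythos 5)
Your proposal is correct and follows essentially the same route as the paper: the payoff transformation $t \mapsto 1-\beta^{-\rho_i\mu_i(t)}$ (resp.\ $\beta^{-\rho_i\mu_i(t)}-1$ for $\rho_i<0$) is exactly the one used in the paper's Lemma~\ref{lemma:RSEtoQSSG} to turn ERSEs into ordinary Nash equilibria, and the final appeal to existence of pure Nash equilibria in simple stochastic games with non-negative terminal rewards matches the paper's invocation of Ummels' result. Your explicit check that the transformed rewards remain non-negative (so that the cited existence theorem indeed applies) is a small point the paper leaves implicit.
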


\begin{proof}
 Pure Nash equilibria always exists in a stochastic multi-player games with prefix-closed Boolean objectives~\cite[Theorem 3.10]{Umm10} (a correction of an existing proof~\cite{CMJ04}). It is known that simple stochastic games where rewards are all positive (or all negative) can be converted into a game with reachability objectives such that if there is an NE in one, there is an NE in the converted game with the reachability objective. Indeed, if all the rewards are positive, we can always scale the rewards for each player of a stochastic game to ensure they are in the unit interval $[0,1]$. If the rewards are within the unit interval, then for terminals with reward $p$, we can instead add a probabilistic node that reaches this terminal vertex with probability $p$. 
Therefore, with the same result, Nash equilibria always exist in simple stochastic games with non-negative rewards on the terminals. 

Then, we can conclude our theorem using the following lemma.

\begin{restatable}[App.~\ref{lemma:RSEtoQSSG}]{lemma}{RSEtoQSSG}\label{lemma:RSEtoQSSG}
Given a game $\Game_{\|v_0}$ and a tuple $\brho \in \Rb^\Pi$, there exists a game $\Game'_{\|v_0}$ with the same underlying graph, player set, and probability function (but possibly different payoff function), such that the $(\beta,\rho)$-ERSEs in $\Game_{\|v_0}$ are exactly the Nash equilibria in $\Game'_{\|v_0}$. \qedhere
\end{restatable}
\end{proof}

We conjecture that this result remains true when we remove the guarantee that rewards are non-negative.
We now turn to the constrained existence problem of $\tpl{\beta,\brho}$-ERSEs.
Unfortunately, it is undecidable in the general case.

\begin{proposition}\label{proposition:Undecidable}
    The constrained existence problem of $\tpl{\beta,\brho}$-ERSEs with $\brho \in \Qb^{\Pi}$ is undecidable, even for any fixed value of $\beta$, for $\brho = (0)_i$, and with only nonnegative payoffs. 
\end{proposition}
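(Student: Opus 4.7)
The plan is to reduce from the undecidability of the constrained existence problem for Nash equilibria in simple stochastic multiplayer games, as established by Ummels and Wojtczak~\cite{UW11}. The reduction turns out to be essentially the identity, once the right observation about the degenerate parameter regime $\brho = (0)_i$ is made.

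The key observation is that when all risk parameters are zero the entropic risk measure collapses to the classical expectation. Indeed, by the convention adopted in the remark following the definition of $\M_{\beta\rho}$, for every base $\beta > 1$, every probability measure $\prob$, and every random variable $X$, one has $\M_{\beta \cdot 0}^{\prob}[X] = \Eb^{\prob}[X]$. Thus, for any strategy profile $\bsigma$ and any player $i$, the perceived payoff $\M_{\beta \cdot 0}(\bsigma)[\mu_i]$ equals the expected payoff $\Eb_{\prob_\bsigma}[\mu_i]$. Consequently, the notion of $(\beta,(0)_i)$-ERSE coincides exactly with the notion of Nash equilibrium, and the interval constraint $x_i \leq \M_{\beta \cdot 0}(\bsigma)[\mu_i] \leq y_i$ becomes the standard constraint on expected payoffs used in constrained Nash equilibrium problems.

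The proof then concludes by invoking~\cite{UW11}, which establishes that the constrained existence problem of Nash equilibria is undecidable even on simple stochastic games with reachability objectives, where the terminal payoffs are Boolean, hence nonnegative. Since our input language is strictly richer (rational bounds, arbitrary numbers of players, all mixed strategies allowed), the identity reduction $(\Game_{\|v_0}, \bx, \by) \mapsto (\Game_{\|v_0}, \bx, \by)$ equipped with the fixed parameter tuple $\brho = (0)_i$ transfers undecidability into our setting for any fixed base $\beta$. The only thing to verify is the cosmetic matching between the UW11 input format and ours (terminal-reward simple stochastic games, rational interval bounds on expectations), which is immediate and presents no real obstacle; the entire argument hinges on the one-line collapse $\M_{\beta \cdot 0} = \Eb$.
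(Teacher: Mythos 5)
Your proposal is correct and matches the paper's own argument: both rely on the convention $\M_{\beta\cdot 0}=\Eb$ to identify $(\beta,(0)_i)$-ERSEs with Nash equilibria and then invoke the undecidability of constrained Nash equilibrium existence from Ummels and Wojtczak (Theorem~4.9 of~\cite{UW11}), which already holds for nonnegative (Boolean) payoffs. The identity reduction you describe is exactly what the paper intends.
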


\begin{proof}
         The undecidability of the constrained existence problem follows from the work of Ummels and Wojtczak~\cite[Theorem 4.9]{UW11} where they show the undecidability of the constrained existence problem for Nash equilibria in the setting with 10 or more players. Since Nash equilibria is a specific instance of the setting of ERSEs where the risk parameters $\brho$ is $0$ for each player, the undecidability of our setting follows. 
\end{proof}

We therefore turn our attention to the constrained existence problem when the class of strategies considered is restricted. 

\begin{restatable}[App.~\ref{app:ERRSErestricted}]{theorem}{stationaryRSE}\label{thm:ERRSErestricted}
The constrained existence problem of $(\beta,\brho)$-ERSEs, in quantitative simple stochastic games:
\begin{enumerate}
    \item remains undecidable when players are restricted to pure strategies;\label{itm:ERRSEitmundec} 
    \item is decidable when players are restricted to stationary strategies\label{itm:ERRSEdecidable}
\begin{enumerate}
        \item subject to Shanuel's conjecture if $\beta = e$ and the risk-parameters $\rho_i$ are algebraic;\label{itm:ERRSEitmShanuel}
        \item in $\PSPACE$ if the risk parameters and the base $\beta$ are algebraic, in which case it is also $\NP$-hard and $\SQRTSUM$-hard.\label{itm:ERRSE:PSPACE}
        The $\NP$ lower bound also holds for the case where strategies are restricted to positional strategies.
    \end{enumerate}
\end{enumerate}
\end{restatable}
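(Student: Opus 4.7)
The plan is to treat the four items in turn, with a common backbone for the decidability results: encoding the constrained existence problem as an existential first-order formula over the reals (with exponentiation when necessary), the key technical trick being a logarithmic change of variable that turns the Bellman equations for entropic risk into polynomial equations.

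\textbf{Item~\ref{itm:ERRSEitmundec} (pure strategies, undecidability).} I would revisit the reduction from the two-counter halting problem used by Ummels and Wojtczak~\cite{UW11} that is invoked in Proposition~\ref{proposition:Undecidable}. Since that reduction already produces \emph{pure} Nash equilibria for the NE constrained existence problem, and since NE is exactly the $\brho = (0)_i$ instance of $(\beta,\brho)$-ERSE, the same reduction transfers verbatim to pure ERSEs. The only care is to check that no randomisation is implicitly relied on in the witness profiles; inspecting the gadget construction shows this is the case.

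\textbf{Item~\ref{itm:ERRSEdecidable} (stationary strategies, decidability).} The central idea is the change of variable $W^i_v = \beta^{-\rho_i V^i_v}$, where $V^i_v$ is the entropic risk of player~$i$ starting from $v$ under the profile. Because expectation is linear, this turns the entropic Bellman recurrence into linear equations $W^i_v = \sum_{vu \in E(v)} \omega(v,u) W^i_u$, with $\omega(v,u)$ equal to $\p(vu)$ at a stochastic vertex and to $\sigma_j(v,vu)$ at a vertex controlled by player $j$. I would then introduce as variables (i) the stationary probabilities $\sigma_i(v,e)$ and (ii) the transformed values $W^i_v$, subject to: distribution constraints $\sum_e \sigma_i(v,e) = 1$; terminal boundary conditions $W^i_t = \beta^{-\rho_i \mu_i(t)}$; the linear Bellman equations above; Bellman optimality inequalities $W^i_v \leq W^i_u$ (for $\rho_i > 0$) or $W^i_v \geq W^i_u$ (for $\rho_i < 0$) on every edge $vu \in E(v)$ with $v \in V_i$; and finally the payoff bounds translated into $\beta^{-\rho_i y_i} \leq W^i_{v_0} \leq \beta^{-\rho_i x_i}$ (with inequalities reversed when $\rho_i < 0$). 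Correctness relies on the classical fact~\cite{How72} that optimal strategies for entropic risk in MDPs are positional: together with complementary slackness, the listed inequalities characterise exactly the stationary best responses of each player in $\Game(\bsigma_{-i})$. The formula has polynomial size.

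\textbf{Item~\ref{itm:ERRSEitmShanuel} (Schanuel case).} When $\beta = e$ and the $\rho_i$ are algebraic, the boundary constants $e^{-\rho_i \mu_i(t)}$ and the payoff-interval constants $e^{-\rho_i x_i}, e^{-\rho_i y_i}$ can be encoded using $\exp$ applied to algebraic numbers, placing the formula in the existential theory of the reals with exponentiation. Decidability then follows from the Macintyre--Wilkie theorem~\cite{MW96}, conditional on Schanuel's conjecture.

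\textbf{Item~\ref{itm:ERRSE:PSPACE} (algebraic case, upper bound).} When $\beta$ and the $\rho_i$ are algebraic (and rewards are rational), each constant $\beta^{-\rho_i \mu_i(t)}$ can be represented as an algebraic number via its minimal polynomial, or, more uniformly, introduced as a fresh variable constrained by an auxiliary polynomial equation. The entire system then lies in the existential theory of the reals over $\Rb$, so Canny's theorem~\cite{Canny88} gives a $\PSPACE$ decision procedure. The main obstacle I expect here is this algebraic bookkeeping: ensuring the polynomial encoding of each $\beta^{-\rho_i \mu_i(t)}$ stays of polynomial size, which may require a careful analysis along the lines of Baier et al.~\cite{BCMP24}. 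For the lower bounds, $\NP$-hardness can be obtained by a direct reduction from $3$-SAT, using a gadget in which each variable is a pure choice of one of two edges made by a dedicated player, in the spirit of the standard $\NP$-hardness proof for constrained pure NE existence; because the witnesses in this reduction are positional, the lower bound transfers to the positional-strategy restriction. $\SQRTSUM$-hardness can be obtained by the standard reduction that encodes square roots as reachability probabilities of appropriately chosen stochastic gadgets, then compares their sum against a threshold using a further player's payoff constraint.
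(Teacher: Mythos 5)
Your overall strategy matches the paper's: both routes reduce the entropic-risk computation to polynomial arithmetic via the substitution $x \mapsto \beta^{-\rho_i x}$, encode the stationary-equilibrium conditions as a polynomial-size sentence of $\exists\Rb$ (resp.\ $\exists\Rb$ with exponentiation when $\beta = e$), and invoke Canny resp.\ Macintyre--Wilkie; item~\ref{itm:ERRSEitmundec} and the lower bounds are inherited from Ummels--Wojtczak exactly as you propose. The differences are in the details of the encoding. The paper first passes through \cref{lemma:RSEtoQSSG}, replacing each terminal reward by $1-\beta^{-\rho_i\mu_i(t)}$ so that ERSEs become ordinary Nash equilibria, and then follows Ummels--Wojtczak's template: it keeps \emph{two} families of value variables per player ($r^i_v$ for the induced Markov chain and $m^i_v$ for the best-response MDP, the latter constrained to be superharmonic at player $i$'s vertices) and compares them only at $v_0$. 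You instead impose local Bellman-optimality inequalities directly on the profile's own value variables $W^i_v$; this can be made to work (an achievable fixed point of the optimality operator equals the optimal value by the least-fixed-point characterisation), but that justification needs to be spelled out, whereas the paper's two-system encoding sidesteps it. For the lower bounds, noting that Nash equilibria are the $\brho = (0)_i$ instance makes the $\NP$- and $\SQRTSUM$-hardness immediate from \cite{UW11}, with no need to redo the $3$-SAT or square-root gadgets.

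There is one genuine soundness gap in your encoding as written. The linear system $W^i_v = \sum_{vu} \omega(v,u) W^i_u$ with boundary conditions only at terminals does \emph{not} determine the $W^i_v$ uniquely: on any bottom strongly connected component of the support from which no terminal is reachable, the equations are satisfied by an arbitrary common value, so a spurious assignment could certify payoffs (and equilibrium conditions) that the profile does not actually realise. The paper closes this hole by nondeterministically guessing the support $S$ of the stationary profile, computing from $S$ the set $V_S$ of vertices that reach a terminal with positive probability, and explicitly fixing the value of every vertex outside $V_S$ to the value of the non-terminating play (see the formula $\Omega^i_S$ in \cref{prop:ETRformula}); the guess costs nothing since $\NPSPACE = \PSPACE$. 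Your plan needs an analogous side condition (or the same support-guessing step), both for the profile's values and, if you keep a single value system, to make the complementary-slackness argument for best responses go through. The remaining bookkeeping you flag---representing $\beta^{-\rho_i\mu_i(t)}$ by a polynomial-size formula---is indeed handled in the paper by an iterated-squaring encoding of the exponents, as you anticipate.
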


\begin{proof}[Proof Sketch]
    The undecidability of the case where pure strategies are considered is inherited from Nash equilibria~\cite[Theorem~4.9]{UW11}, since the reduction uses only pure strategies. 
        The decidability of this stationary case is reminiscent of similar results for the two-player zero-sum case, which was recently studied in the work of Baier et al.~\cite{BCMP24}.
    However, the techniques used are quite different and also require inspiration from the work of Ummels and Wojtczak~\cite[Theorem 4.5, Theorem 4.6]{UW11}, with significant modifications. 
    We write formulas in the existential theory of reals ($\exists\Rb$) which puts them in $\PSPACE$. 
    For the case $\beta = e$, this formula can be written in the existential theory of reals with exponentiation, which is decidable subject to Shanuel's conjecture, which is a well-known conjecture in the field of transcendental number theory~\cite{Lan66}. The lower bounds of $\NP$-hardness and $\SQRTSUM$-hardness  also follow from the works of Ummels and Wojtczak~\cite[Theorem~4.4,Theorem~4.6]{UW11}. The exact complexity of $\SQRTSUM$ (deciding, given a set $\{a_1, \dots, a_n\} \subseteq \Nb$ and an integer $t$, whether we have $\sum_i \sqrt{a_i} \leq t$) is open and is known to lie in the polynomial hierarchy and in the fourth level of the counting hierarchy~\cite{AKBM06}. 
\end{proof}

\section{Extreme risk measure}\label{sec:XR}
This section introduces a new risk measure that provides a tractable alternative to existing risk measures available in the literature. We provide a simpler, yet robust, alternative that allows us later to tackle the constrained existence of equilibria in multiplayer settings.


Let us consider a random variable $X$ that ranges over $\Rb$. 
The \emph{pessimistic risk measure} of $X$ is the highest value $x$ such that $X$ almost-surely takes a value above $x$.
When $X$ takes finitely many values, that corresponds to the least value that it takes with positive probability.
In probability theory, that measure is sometimes referred to as \emph{essential infimum}, written $\essinf$.
The definition of \emph{optimistic risk measure} is symmetric.

\begin{definition}[Optimistic, pessimistic risk measure]
The pessimistic risk measure of a random variable $X$ is defined by
$\pexp[X] = \essinf(X) = \sup \{x \in X ~|~ \prob(X \geq x) = 1\}$.
Analogously,  the \emph{optimistic risk measure} of $X$ is $\oexp[X] = \esssup(X) =  \inf \{x \in X ~|~ \prob(X \leq x) = 1\}$.    
\end{definition}


When we are given a game $\Game_{\|v_0}$, we can assign an risk measure for each player by defining a partition $(P, O)$ of $\Pi$, where the set $P$ represents the set of players that are \emph{pessimists}, whose perceived payoffs are defined by the pessimistic risk measure, while $O$ represents the \emph{optimists}, who intend to maximise their optimistic risk measure.
For convenience, we group both measures under the umbrella term \emph{extreme risk measure (XR)}, and often assume that $(P, O)$ is given; 
then, we write $\X_i$ for $\pexp$ when $i \in P$, and for $\oexp$ when $i \in O$.
Since each player $i$ is usually interested only in the risk measure of their own payoff, we will also write $\X_i(\bsigma)$ for the quantity $\X_i(\bsigma)[\mu_i]$.
We define \emph{extreme risk-sensitive equilibria}, or XRSEs for short, as $(\X_i)_i$-RSEs.


\subsubsection*{Equivalence to limit of Entropic Risk Measure}
We show that our definition of extreme risk measure corresponds to the limit cases of entropic risk measure.
Observe that in \cref{fig:example_re}, following the  blue strategy, the only payoffs that are obtained with positive probability were $40$ and $0$, which are also the limits of the risk entropy when $\rho$ tends to infinite values.
On the other hand, in the red strategy, the only payoff obtained with positive probability is payoff $1$. Although the payoff $0$ is possible since the play $a^\omega$ is compatible with every strategy, this play must be ignored since it is realised with probability $0$.

\begin{restatable}[App.~\ref{app:RE=PEorOE}]{theorem}{REisPEOE}\label{thm:RE=PEorOE}
    Let $X$ be a random variable that ranges over $\Rb$, and let $\beta > 1$.

    \begin{itemize}
        \item The limit risk entropy of $X$ when $\rho$ tends to $+\infty$ exists and is equal to the pessimistic risk measure, that is, we have $\lim_{\rho \to +\infty} \re_{\beta\rho} [X] = \pexp[X]$.    
       \item Similarly, the limit risk entropy of $X$ when $\rho$ tends to $-\infty$ exists and is equal to the pessimistic risk measure, that is, we have $\lim_{\rho \to -\infty} \re_{\beta\rho} [X] = \oexp[X]$.
    \end{itemize}
\end{restatable}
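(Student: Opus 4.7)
The plan is a sandwich argument for each limit, leveraging the fact that for large positive $\rho$ the integrand $\beta^{-\rho X}$ in the definition of $\re_{\beta\rho}$ amplifies small values of $X$ exponentially (so the expectation is governed by the bottom of the support of $X$), and dually for large negative $\rho$ the expectation is governed by the top of the support.

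For the $\rho \to +\infty$ half I would set $m := \pexp[X]$ and establish the two sides separately. The easy direction, $\re_{\beta\rho}[X] \geq m$, follows from $X \geq m$ almost surely: since $x \mapsto \beta^{-\rho x}$ is decreasing for $\rho > 0$, one obtains $\Eb[\beta^{-\rho X}] \leq \beta^{-\rho m}$, and applying $-\frac{1}{\rho}\log_\beta(\cdot)$ (itself decreasing in its argument when $\rho > 0$) flips the inequality to give $\re_{\beta\rho}[X] \geq m$. For the matching direction I would fix $\epsilon > 0$, set $p_\epsilon := \prob(X \leq m + \epsilon)$, restrict the expectation to the event $\{X \leq m + \epsilon\}$ to get $\Eb[\beta^{-\rho X}] \geq p_\epsilon \, \beta^{-\rho(m+\epsilon)}$, and take $-\frac{1}{\rho}\log_\beta$ to obtain $\re_{\beta\rho}[X] \leq (m+\epsilon) - \frac{1}{\rho}\log_\beta p_\epsilon$, whose limit as $\rho \to +\infty$ is $m + \epsilon$. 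Letting $\epsilon \to 0$ closes the sandwich.

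The $\rho \to -\infty$ half is completely symmetric, with $M := \oexp[X]$, the almost-sure inequality $X \leq M$, and $q_\epsilon := \prob(X \geq M - \epsilon)$. The only slightly delicate bookkeeping is checking that the sign flip of $\rho$ consistently reverses the monotonicity of $x \mapsto \beta^{-\rho x}$ and of $-\frac{1}{\rho}\log_\beta(\cdot)$, so that the same two inequalities go through with $M$ in place of $m$.

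The one point I expect to require any attention is justifying $p_\epsilon > 0$ (and symmetrically $q_\epsilon > 0$) straight from the paper's definitions as suprema and infima. For $p_\epsilon$: any $m' > m$ fails the defining condition $\prob(X \geq m') = 1$ by definition of the supremum, so $\prob(X < m') > 0$, and taking $m' = m + \epsilon$ gives $p_\epsilon > 0$ for every $\epsilon > 0$. In the finite-support setting relevant to the paper, where $X = \mu_i$ ranges over the finitely many payoff values, this collapses to the observation that $m$ itself is attained with positive probability, which I would mention to keep the argument transparent for the game-theoretic reader.
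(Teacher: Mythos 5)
Your proof is correct and follows essentially the same route as the paper: the lower bound $\re_{\beta\rho}[X]\geq\pexp[X]$ from monotonicity, and the upper bound by retaining only the probability mass on $\{X\leq\pexp[X]+\epsilon\}$, which contributes at least $p_\epsilon\,\beta^{-\rho(\pexp[X]+\epsilon)}$ and yields $\re_{\beta\rho}[X]\leq\pexp[X]+\epsilon-\frac{1}{\rho}\log_\beta p_\epsilon$, exactly the paper's sandwich (the paper splits the integral at $\pexp[X]+\frac{\epsilon}{2}$, a cosmetic difference). The only other divergence is that the paper obtains the $\rho\to-\infty$ case from the identity $\re_{\beta\rho}[X]=-\re_{\beta(-\rho)}[-X]$ rather than rerunning the symmetric argument, and your explicit justification that $p_\epsilon>0$ is a welcome addition the paper leaves implicit.
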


\subsubsection*{Extreme risk-sensitive equilibria exist}
We now answer a fundamental question about equilibria, which is if one always exists. 
We show that (stationary) XRSEs are guaranteed to exist in games with only non-negative rewards, similarly as ERSEs.
But our proof does not rely on the same arguments, and we instead give a constructive proof.

\begin{restatable}[App.~\ref{app:XRSEexists}]{theorem}{XRSEexists}\label{thm:XRSEexists}
    Let $\Game_{\|v_0}$ be a game with only non-negative rewards, and let $(P,O)$ be a partition of $\Pi$.
    Then, there exists a stationary XRSE in $\Game_{\|v_0}$.
    Moreover, there exists an algorithm that, given such a game, outputs the representation of such an XRSE in time $\Oh(m^2 p)$, where $m$ is the number of edges, and $p$ the number of \emph{pessimistic} players.
\end{restatable}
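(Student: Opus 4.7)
The plan is to give a constructive argument that yields both existence and the algorithm. The central observation is that for a stationary profile $\bsigma$, each value $\X_i(\bsigma)$ depends only on the set $T(\bsigma) \subseteq T$ of terminals reached with positive probability in the induced Markov chain: a pessimist $i$ sees $\min_{t \in T(\bsigma)} \mu_i(t)$ and an optimist $i$ sees $\max_{t \in T(\bsigma)} \mu_i(t)$. Hence the exact transition probabilities at stochastic vertices are irrelevant, and the question reduces to a qualitative reachability-style problem on the underlying graph, in which a stationary profile commits to one outgoing edge at every player-controlled vertex, while every outgoing edge of a stochastic vertex contributes to reachability.

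Building on this, I would first compute, for each pessimist $i \in P$, the \emph{punishment value} $v_i = \max_{\sigma_i} \min_{\bsigma_{-i}} \pexp[\mu_i]$; equivalently, $v_i$ is the largest $v$ such that $i$ has a positional strategy $\sigma_i^*$ guaranteeing that no terminal with $\mu_i < v$ is ever reached with positive probability, regardless of the other players. This is a standard safety/attractor computation: sorting the thresholds in $\mu_i(T)$ and iteratively computing a controllable-predecessor set takes $\Oh(m \cdot |\mu_i(T)|) = \Oh(m^2)$ time per pessimist, which sums to the advertised $\Oh(m^2 p)$ bound. Positional determinacy of safety games yields a positional witness $\sigma_i^*$.

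I would then assemble the profile by committing each pessimist $i \in P$ to $\sigma_i^*$ and picking each optimist $j \in O$'s strategy greedily so as to reach, from every vertex it controls, the terminal of maximal $\mu_j$ still accessible under the commitments already made. By construction, in the resulting $\bsigma$ every terminal reached with positive probability satisfies $\mu_i \geq v_i$ for every pessimist $i$, so $\pexp(\bsigma)[\mu_i] \geq v_i$, while $\oexp(\bsigma)[\mu_j]$ equals the maximum $\mu_j$-value across reachable terminals. The greedy step is a straightforward reachability computation on the restricted graph and fits comfortably inside the claimed running time.

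Verifying the equilibrium condition is then the main obstacle. For optimists it is direct: by construction each $\sigma_j$ already reaches, against the fixed $\bsigma_{-j}$, a terminal of maximal $\mu_j$ among those accessible under any deviation. The delicate part is the pessimists: a priori, a deviation $\sigma_i'$ against the \emph{non-adversarial} $\bsigma_{-i}$ could push the minimum-reachable $\mu_i$ strictly above $v_i$. To rule this out I expect one must show that the commitments made by the other players are ``sufficient witnesses'' of the adversarial behaviour defining $v_i$ in the sense that some path consistent with $\bsigma_{-i}$ still leads to a terminal of $\mu_i$-value $\pexp(\bsigma)[\mu_i]$. A plausible fix, if the straightforward greedy scheme fails, is to process the pessimists iteratively and recompute each $v_{i_k}$ in the successively shrinking game, so that every pessimist's value in the final profile coincides exactly with their best-response value and no profitable deviation remains; the complexity analysis above survives this change.
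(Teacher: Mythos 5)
Your opening reduction---that for stationary profiles the extreme risk measures depend only on which terminals are reached with positive probability---is correct and matches the paper's starting point, but both halves of your equilibrium verification have genuine gaps. The optimist half is not ``direct'': the sequential greedy choices of different optimists interact. Concretely, let optimist $j_1$ control a vertex $u$ with one edge to a terminal worth $5$ to her and one edge to a vertex $w$ of optimist $j_2$, from which $j_2$ can reach a terminal with payoffs $(10,0)$ or one with payoffs $(0,5)$ for $(j_1,j_2)$. Greedy $j_1$ heads to $w$ chasing the $10$; greedy $j_2$ then commits to the $(0,5)$ terminal; in the assembled profile $j_1$ receives $0$ and has a profitable deviation worth $5$. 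The paper avoids this entirely by taking as its candidate the stationary profile that randomises over \emph{all} surviving edges, so that every terminal still accessible in the pruned graph is reached with positive probability and every optimist automatically attains the best value any deviation could offer.

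The pessimist half, which you rightly flag as the crux, is also not repaired by the proposed fix. Committing each pessimist once, in some order, to a positional strategy achieving their (re-computed) adversarial value can still leave a profitable deviation: take $a$ controlled by pessimist $1$ with edges to $t_1$ (payoff $1$ to player $1$) and to $b$, controlled by pessimist $2$, who can go to $t_2$ with payoffs $(5,3)$ or $t_3$ with payoffs $(0,0)$. Processing player $1$ first gives $v_1=1$ (an adversary at $b$ would choose $t_3$), so she commits to $a\to t_1$; player $2$ then commits to $b\to t_2$; but now player $1$ can deviate to $b$ and get $5$ almost surely. The other order happens to work here, but nothing in your argument identifies a correct order or shows one exists, and the achieved value of an early-committed pessimist against the \emph{final}, non-adversarial $\bsigma_{-i}$ is only bounded below by $v_i$, not matched to her best response. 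The paper's proof resolves exactly this with machinery absent from your proposal: it iterates to a fixed point, at each round detecting a pessimist $i$ with a profitable deviation, removing the edges leading into the region $W_i^k$ where $i$ is still forced down to her current value (these turn out to be edges controlled by $i$ herself, which keeps the profile well defined), and then re-evaluating everyone; correctness rests on a monotonicity lemma (each pessimist's value is non-decreasing along the iteration) plus an argument that restoring a deleted edge only leads the deviator back into a region where she is trapped at an older, no-larger value. Until your scheme supplies analogues of these two ingredients, the existence claim is not established, even though the $\Oh(m^2p)$ accounting would survive.
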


\begin{proof}[Proof sketch.]
    Our algorithm generates an XRSE by constructing a decreasing sequence $E = E_0, E_1, \dots$ of sets of edges, and considering, for each $k$, the stationary strategy profile that randomises between all the outgoing edges in $E_k$ from all vertices.
    
    Let us illustrate it with the game depicted by Figure~\ref{fig:ex_extreme1}, which involves two pessimists, player $\Circle$ and player $\Square$.
    In that game, both players want to leave the cycle, but each of them would prefer the player to leave.
    If we first consider the strategy profile that always randomises between all the available edges, then both terminal vertices are reached with positive probability, and it is almost sure that one of them is reached: both players get therefore risk measure $1$.
    Then, player $\Square$ (and symmetrically player $\Circle$) has a profitable deviation by refusing to leave the cycle, and always going back to the vertex $a$.
    Note that player $\Circle$ cannot detect such a deviation of strategy, since she does not have access to the internal coins tossed by player $\Square$. 
    Then, we remove the edge $bt_2$ (or $at_1$). This results in a set of edges where player $\Square$ gets the payoff $2$, and player $\Circle$ cannot get more than $1$, ensuring that the new strategy profile that we obtain is a (stationary) XRSE.
\end{proof}

Like in the case of ERSEs, we conjecture that existence, and even existence of a stationary strategy profile, remain true in the general case.

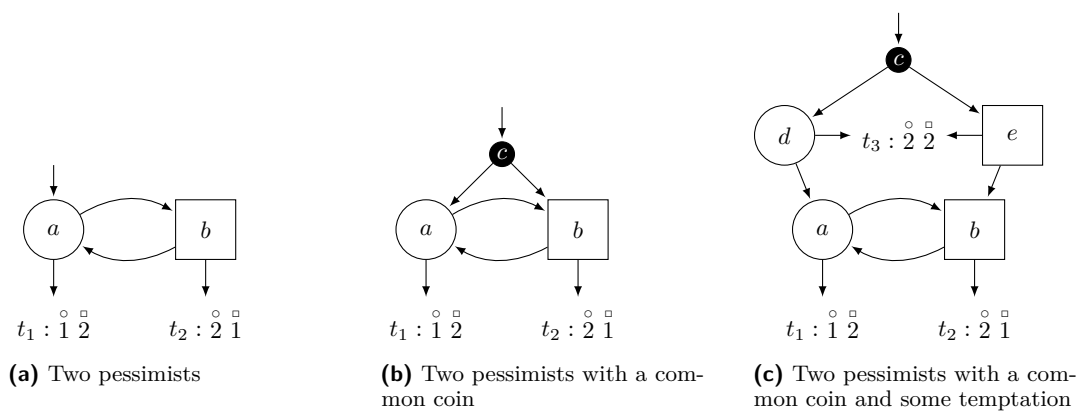
\begin{figure}[h] 
			\centering
            \begin{subfigure}[t]{0.3\textwidth}
			\begin{tikzpicture}[->,>=latex,shorten >=1pt, initial text={}, scale=1, every node/.style={scale=0.9}]
				\node[initial above, state] (a) at (0, 2) {$a$};
				\node[state, rectangle] (b) at (2, 2) {$b$};
                \node (t1) at (0, 0.75) {$t_1:~\stack{\circ}{1}~\stack{\square}{2}$};
                \node (t2) at (2, 0.75) {$t_2:~\stack{\circ}{2}~\stack{\square}{1}$};
                \path (a) edge[bend left] (b);
				\path (b) edge[bend left] (a);
                \path (a) edge (t1);
                \path (b) edge (t2);
			\end{tikzpicture}
			\caption{Two pessimists}
			\label{fig:ex_extreme1}
            \end{subfigure}
            \hfill
            \begin{subfigure}[t]{0.3\textwidth}
			\begin{tikzpicture}[->,>=latex,shorten >=1pt, initial text={}, scale=1, every node/.style={scale=0.9}]
                \node[initial above, state, stoch] (c) at (1, 3) {$c$};
				\node[state] (a) at (0, 2) {$a$};
				\node[state, rectangle] (b) at (2, 2) {$b$};
                \node (t1) at (0, 0.75) {$t_1:~\stack{\circ}{1}~\stack{\square}{2}$};
                \node (t2) at (2, 0.75) {$t_2:~\stack{\circ}{2}~\stack{\square}{1}$};
                \path (c) edge (a);
                \path (c) edge (b);
                \path (a) edge[bend left] (b);
				\path (b) edge[bend left] (a);
                \path (a) edge (t1);
                \path (b) edge (t2);
			\end{tikzpicture}
			\caption{Two pessimists with a common coin}
			\label{fig:ex_extreme2}
            \end{subfigure}
            \hfill
            \begin{subfigure}[t]{0.3\textwidth}
			\begin{tikzpicture}[->,>=latex,shorten >=1pt, initial text={}, scale=1, every node/.style={scale=0.9}]
                \node[initial above, state, stoch] (c) at (1, 4.25) {$c$};
                \node[state] (d) at (-0.5, 3.25) {$d$};
                \node[state, rectangle] (e) at (2.5, 3.25) {$e$};
				\node[state] (a) at (0, 2) {$a$};
				\node[state, rectangle] (b) at (2, 2) {$b$};
                \node (t1) at (0, 0.75) {$t_1:~\stack{\circ}{1}~\stack{\square}{2}$};
                \node (t2) at (2, 0.75) {$t_2:~\stack{\circ}{2}~\stack{\square}{1}$};
                \node (t3) at (1, 3.25) {$t_3:~\stack{\circ}{2}~\stack{\square}{2}$};
                \path (c) edge (d);
                \path (c) edge (e);
                \path (d) edge (t3);
                \path (e) edge (t3);
                \path (d) edge (a);
                \path (e) edge (b);
                \path (a) edge[bend left] (b);
				\path (b) edge[bend left] (a);
                \path (a) edge (t1);
                \path (b) edge (t2);
			\end{tikzpicture}
			\caption{Two pessimists with a common coin and some temptation}
			\label{fig:ex_extreme3}
            \end{subfigure}
        \caption{Some games involving two pessimistic players}\label{fig:ex_extreme}
\end{figure}

\section{Constrained existence of extreme risk-sensitive equilibria}\label{sec:NPComplete}
We now study the computational complexity of the constrained existence problem of XRSEs.
The main result of this section is the following theorem, which proves that, contrary to the same problem with ERSEs, it is a decidable fragment of the constrained existence of RSEs.

\begin{theorem}\label{thm:NPcomplete}
    The constrained existence problem for XRSEs is $\NP$-complete and is $\NP$-hard even when all players are pessimistic and all rewards are non-negative.
\end{theorem}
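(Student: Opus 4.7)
The plan is to prove membership in $\NP$ and $\NP$-hardness as separate tasks. For membership, the intention is to exhibit a polynomial-size object witnessing a constrained XRSE and a polynomial-time verification procedure; for hardness, to reduce from $\THREESAT$ using only pessimistic players and non-negative rewards.

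For $\NP$-membership, my starting point is the following structural observation: for the extreme risk measures $\pexp$ and $\oexp$, the quantity $\X_i(\bsigma)[\mu_i]$ depends on $\bsigma$ only through the supports $\Supp(\sigma_j(hu))$ of the distributions it prescribes, not through the exact probabilities assigned therein. Consequently, it suffices to describe a candidate profile up to its \emph{support pattern}. The first step is to show that whenever a strategy profile satisfies the constraints and the XRSE property, there is an equivalent profile realised by a memory structure of polynomial size: we would argue that the relevant behaviour at each history is determined by information computable from the current vertex together with a small bookkeeping state (for example, an indicator of whether any player has deviated so far, and if so which one), because punishment in the XR setting only needs to alter supports once a deviation has been detected. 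The second step packages such a profile as the finite-memory object $(S, s_0, \delta, \nu)$ where $\nu$ specifies a support at each pair of memory state and vertex; once $|S|$ is polynomial, the whole representation is polynomial.

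Given such a polynomial witness, the third step is polynomial-time verification. The induced object is a finite Markov chain (after choosing any strictly positive probability distribution over each prescribed support; exact values do not matter), on which one computes the bottom strongly connected components, the set of terminal vertices reached with positive probability from $v_0$, and hence $\X_i(\bsigma)[\mu_i]$ for every $i$ by a straightforward min/max over those reachable rewards. The constraint $x_i \le \X_i(\bsigma)[\mu_i] \le y_i$ can then be checked directly. To verify the equilibrium condition, for each player $i$ I would compute the best-response XR value in the one-player MDP $\Game_{\|v_0}(\bsigma_{-i})$; because the optimisation is over essential infima/suprema of reachable rewards rather than expected rewards, this reduces to reachability questions in a polynomial-size product graph and is solvable in $\PTIME$. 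The XRSE property holds iff no such best response strictly exceeds $\X_i(\bsigma)$.

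For $\NP$-hardness, I would reduce from $\THREESAT$ with $n$ variables and $m$ clauses. For each variable $x_k$ introduce a pessimistic player $\Circle_k$ who at a dedicated vertex chooses between a "true"-edge and a "false"-edge, and for each clause $C_j$ introduce a pessimistic player $\Square_j$ whose constrained interval $[x_{\Square_j}, y_{\Square_j}]$ is set to force $\pexp$ of her payoff to exclude a designated "unsatisfied" terminal from the support of the resulting Markov chain. Payoffs and gadgetry are arranged so that $C_j$'s unsatisfied terminal is reachable with positive probability exactly when all three literals of $C_j$ are set opposite to its satisfying assignment; thus the constraints are simultaneously satisfiable iff the formula is. Since the rewards can be encoded as non-negative constants and everything stays pessimistic, this gives the stronger hardness statement claimed.

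The principal obstacle is the polynomial-abstraction step within membership: ruling out that a genuine XRSE must react to arbitrarily long, fine-grained histories. The argument hinges on showing that, under XR, punishment strategies need only detect coarse, support-level deviations, so a constant amount of per-player memory (who, if anyone, has first deviated) combined with the vertex suffices. Once that lemma is in place, guessing the support pattern on this polynomial memory structure and invoking the verification procedure above places the problem in $\NP$.
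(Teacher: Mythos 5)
There is a genuine gap in the membership half of your argument, located exactly where you identify the ``principal obstacle.'' You claim that a polynomial memory structure consisting of the current vertex together with a bookkeeping state recording ``who, if anyone, has first deviated'' suffices to realise any XRSE. This is false, and the paper's own motivating example (\cref{fig:ex_extreme3}) is a direct counterexample: in the XRSE there, player $\Square$'s choice at vertex $b$ must differ according to whether the play arrived via $d$ or via $e$, yet in both histories no player has deviated and the current vertex is the same. The memory that is actually needed is not (only) deviation detection but a record of which \emph{subset of players is currently being anchored}, i.e.\ which players' extreme risk measures are to be realised along the current branch; at every genuine randomisation this set splits into a partition over the successors, and the strategy must remember which cell it is in. The paper's proof (\cref{thm:memorysmall}) constructs a labelling $\Lambda$ of histories by anchor sets (\cref{lm:Lambda}) and then proves a non-trivial combinatorial bound --- at most $3p-2$ distinct anchor sets can ever occur --- to get the polynomial bound $3np-2n+p+1$ on memory states. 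Without this (or some substitute for it), guessing a polynomial-size witness is not justified, and the $\NP$ upper bound does not follow. A secondary inaccuracy: for arbitrary (infinite-memory, randomised) profiles, $\X_i(\bsigma)$ is \emph{not} determined by the supports $\Supp(\sigma_j(hu))$ alone, because two profiles with identical supports can disagree on whether the event ``no terminal is reached'' has positive probability; the support-only reasoning is sound for finite-memory profiles, but you are invoking it before having established finite memory.

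The verification step of your membership argument (computing $\X_i$ in the induced finite Markov chain and best responses in the induced polynomial-size MDP via reachability) matches the paper's \cref{lemma:np_easy} and is fine once a polynomial witness exists. Your hardness sketch is broadly in the spirit of the paper's reduction from $\THREESAT$ (\cref{lemma:np_hardness}) but is too coarse to assess: the paper's gadget needs auxiliary players and punishment terminals ($t_\dag$, $t_\Diamond$, the $f_\ell$ terminals and the intermediate stochastic vertices) precisely to ensure that the assignment encoded by the variable players is consistent, that randomising at a variable vertex is never part of an equilibrium, and that the clause players' choices are certified by an equilibrium condition rather than by reachability constraints alone. You would need to supply these consistency and no-profitable-deviation arguments explicitly for the reduction to go through.
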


First, we prove \cref{thm:memorysmall}, which shows that if there is an XRSE then there is one that uses finite memory. We show that this finite-memory strategy profile can be described only using polynomial size, which in turn proves $\NP$ membership (\cref{lemma:np_easy}).

Later, we consider the problem of XRSEs when the players are restricted to pure, stationary or positional strategies. We show that in all the above cases, the problem remains $\NP$-complete. The upperbound is similar to the general case, but the lower bound is shown in \cref{lemma:np_hardness} by showing a reduction from $\THREESAT$ to the constrained existence problem. 
\begin{restatable}{theorem}{restrictedStrategies}\label{thm:infinite_rho_restricted_strategy_np_easy}
    The constrained existence problem of XRSEs is also $\NP$-complete when the players are restricted to positional, stationary,  or pure strategies. 
\end{restatable}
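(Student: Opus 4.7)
The plan is to prove $\NP$-completeness for each of the three restricted classes---positional, stationary, pure---by reusing as much of the machinery of \cref{thm:NPcomplete} as possible, and adapting the memory-reduction argument of \cref{thm:memorysmall} where necessary.

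For $\NP$-hardness I would observe that the reduction from $\THREESAT$ used in \cref{lemma:np_hardness} produces, on satisfiable instances, a \emph{positional} XRSE, while on unsatisfiable instances no XRSE of any type exists. Since a positional profile is in particular stationary and pure, the forward direction transfers unchanged to each of the three restricted classes. The backward direction is immediate, because a positional, stationary, or pure XRSE is in particular an XRSE, and the reduction forces any XRSE to encode a satisfying assignment.

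For the $\NP$ upper bound I would handle the positional and stationary cases together. A positional profile admits a witness of size $\Oh(|V|)$; a stationary profile is captured, as far as XR is concerned, by the support of each distribution, which is of size $\Oh(|E|)$. In both cases, the induced (fixed-support) Markov chain has polynomial size, and the set of terminals reached with positive probability is computable in polynomial time via the standard attractor algorithm; the XR value of each player is then read off as the maximum (for optimists) or minimum (for pessimists) of the payoffs of the reachable terminals. To verify the equilibrium condition, for each player $i$ I would compute the best positional response against the fixed strategies of the other players: for an optimist, this is a positive-probability reachability problem to the best-paying terminal; for a pessimist, a safety problem avoiding low-paying terminals. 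Both admit polynomial-time algorithms with optimal \emph{positional} responses, which suffices because under XR any stationary deviation is equivalent to a positional one with the same support structure. Finally the interval constraints $x_i \leq \X_i(\bsigma) \leq y_i$ are checked directly.

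The pure case is the main obstacle, because pure strategies may a priori require infinite memory. I would adapt the construction underlying \cref{thm:memorysmall} to the pure setting, showing that whenever a pure XRSE exists in $\Game_{\|v_0}$, one also exists with memory of polynomial size. The key observation is that, under XR, only the set of plays with positive probability matters, and the classical punishment-strategy constructions for Nash equilibria of reachability games yield polynomial-size pure threats. The delicate point is to verify that the memory-reduction construction, which in the general case may introduce randomisation, can be carried out while preserving purity---replacing any mixture between pure threats by a polynomial-size pure switch triggered by the memory structure. Once the polynomial memory bound is established, I would guess a memory structure of polynomial size together with the pure output function, form the product Markov chain (still of polynomial size), and verify the equilibrium condition by computing, for each player $i$, the best pure deviation against the fixed strategies of the other players in the resulting single-player MDP on the product state space; the constraints are checked as before.
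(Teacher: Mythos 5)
Your proposal is correct and follows essentially the same route as the paper: $\NP$-hardness is inherited from the $\THREESAT$ reduction of \cref{lemma:np_hardness} because satisfiable instances admit a positional XRSE, membership for positional and stationary profiles follows by guessing the polynomial-size support and verifying values and best responses via positional-optimal reachability/safety computations (the paper's \cref{lm:secretlemma}), and the pure case is handled by the polynomial memory bound with purity preservation, which the paper already builds into the statement of \cref{thm:memorysmall}. The only difference is presentational: you propose to re-derive the purity-preserving memory reduction, whereas the paper simply invokes the "if $\bsigma$ is pure, then $\bsigma^\star$ is pure" clause of \cref{thm:memorysmall}, justified there by the observation that $\bsigma^\star$ only randomises at vertices where $\bsigma$ already does.
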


Finally, we show in the following theorem that when all players are optimistic, the problem becomes $\PTIME$-complete. 
\begin{restatable}{theorem}{PTIMEcompleteThm}\label{thm:PTIMEcomplete}
    The constrained existence problem of XRSE is $\PTIME$-complete when all players are optimists, that is, when $P=\emptyset$.
\end{restatable}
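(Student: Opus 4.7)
The plan is to show both $\PTIME$ membership and $\PTIME$-hardness.

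For the upper bound, the plan is to design a polynomial-time algorithm exploiting the qualitative nature of the optimistic risk measure: since $\oexp$ depends only on the \emph{supports} of the players' stationary strategies, searching for an XRSE reduces to choosing, at each player vertex, a non-empty subset of outgoing edges. The algorithm proceeds in three phases. First, for each player $i$, compute the largest set $C_i^{\max}$ of vertices avoiding the ``too good'' terminals $B_i = \{t \in T : \mu_i(t) > y_i\}$ such that from every vertex in $C_i^{\max}$ the players other than $i$ can cooperatively prevent reaching $B_i$ in the deviation subgame where player $i$ is free to choose any outgoing edge at her own vertices; each $C_i^{\max}$ is a safety attractor, computable in polynomial time. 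Second, compute the \emph{joint safe set} $U$ as the largest subset of $\bigcap_i C_i^{\max}$ closed under: every stochastic vertex in $U$ has all successors in $U$, and every player vertex in $U$ has at least one successor in $U$. This is another polynomial fixed-point computation. Finally, if $v_0 \in U$, construct a stationary profile $\bsigma$ where each player $j$ randomises over $E(v) \cap U$ at $v \in V_j \cap U$ with any non-empty support outside $U$. The reachable set under $\bsigma$ then lies within $U$, automatically enforcing $\oexp(\bsigma)[\mu_i] \leq y_i$ for every $i$; one verifies in polynomial time the lower bounds $\oexp(\bsigma)[\mu_i] \geq x_i$ and the equilibrium condition by checking, for each $i$, that no terminal with $\mu_i$-value strictly above the current perceived payoff is reachable in the deviation subgame from $v_0$, which is a polynomial reachability query.

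For the $\PTIME$-hardness, the plan is to reduce from alternating graph reachability (equivalently, the value problem for two-player reachability games), a canonical $\PTIME$-complete problem. Given an instance with existential vertices $V_\exists$, universal vertices $V_\forall$, a source $v_0$ and a target set $T^\star$, construct a two-optimist XRSE game in which one optimist controls $V_\exists$ and the other controls $V_\forall$, with terminal payoffs encoding target membership and constraint intervals $[x_i, y_i]$ tuned so that a constrained XRSE exists if and only if the existential player wins the original game. The delicate point of the reduction is preventing the universal player from exploiting the qualitative slack of extreme risk by ``hedging'' with a support containing all outgoing edges; this can be handled via auxiliary payoff gadgets that punish over-supporting strategies, thereby forcing optimal play to commit effectively to a single outgoing edge at each universal vertex.

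The main obstacle will be proving correctness of the upper-bound algorithm, particularly the subtle interaction between the joint safe set $U$ and the equilibrium condition $\oexp(\bsigma)[\mu_i] = \sup\{\mu_i(t) : t \text{ reachable in the deviation subgame of } i\}$. Situations can arise in which $v_0 \in U$ yet no XRSE meets the constraints---for instance, when for some player $i$ the best deviation value is attained only by terminals outside of $U$, so that realising it in the equilibrium play would force extending the reachable set beyond $U$ and thereby violating another player's upper bound. The final verification step should detect exactly these pathological cases, and proving its completeness will rely on the monotonicity of the safety attractors together with the qualitative nature of the extreme risk measure.
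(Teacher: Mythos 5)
Your overall strategy (a support-based polynomial algorithm for membership, a reduction from two-player reachability for hardness) matches the paper's, but the upper-bound argument has a genuine gap that you yourself flag without resolving, and it is not a technicality. The equilibrium condition for an optimist $i$ is relative to the \emph{achieved} value $\X_i(\bsigma)$, not to the bound $y_i$: player $i$ must be unable to deviate and reach, with positive probability, any terminal worth more than $\X_i(\bsigma)$, which may be strictly below $y_i$. Your Phase~1 only excludes deviations into $\{t : \mu_i(t) > y_i\}$, so after you prune to the joint safe set $U$ the achieved values drop, new vertices appear from which some player can profitably deviate, and avoiding \emph{those} requires pruning further edges, which drops the values again, and so on. A one-shot computation followed by a verification cannot be complete: if the check fails you have no argument that a differently-supported XRSE does not exist. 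The paper's algorithm is an explicit fixed-point iteration for exactly this reason: at each round it computes the current values $z_i^k$ of the ``randomise over everything remaining'' profile, marks the vertices $v$ with adversarial value $\val(v) > z_i^k$, removes the edges entering their positive-probability attractor, and recomputes; soundness of answering $\No$ rests on the invariant that every edge used by \emph{any} constrained XRSE survives every round. You would need to replace your Phases 1--3 by such an iteration (or prove an analogous invariant for your $U$), and you would also need the two ingredients your profile omits: a punishment component (upon a detected deviation the coalition switches to a positional profile minimising the deviator's value --- without it the deviation subgame is too permissive and your algorithm rejects positive instances) and, when all $y_i \ge 0$, the ``commit on first visit'' semantics rather than fresh stationary randomisation, so that the infinite-play payoff vector is realised with positive probability when some positional selection cycles. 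Finally, the case where some $y_i < 0$ forces almost-sure termination and needs separate machinery (the paper's ``final refinements'' loop); your construction does not address it at all.

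On the lower bound, you reduce from the right problem, but the gadgetry you anticipate is unnecessary and your ``hedging'' worry dissolves with the right payoff choice: give the reachability target the payoffs $(\Circle\!:1,\ \Square\!:-1)$, recall that infinite plays pay $0$, and constrain $\X_\Box(\bsigma) = -1$. The constraint forces the target to be reached almost surely, so if the universal player can avoid it he has a profitable deviation to payoff $0$ and no constrained XRSE exists; conversely a winning strategy for the existential player, paired with anything, is already a constrained XRSE. No auxiliary terminals are needed, and the universal player's support is irrelevant because the existential player's winning strategy beats every counter-strategy, randomised or not.
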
 
We dedicate the rest of the section to proving these three results. 
\subsection{Membership in $\NP$}
$\NP$-membership is a consequence of the fact that when an XRSE exists, there also exists one with the same extreme risk measures that uses finite memory, with a number of states that is polynomial in the size of the game.
Let us therefore illustrate, with examples, how and why memory is required in such XRSEs.
We consider the following constrained existence question and analyse the same question on three example graphs. 
\begin{quote}$(*)$
   Is there an XRSE in the game in which both players have a risk measure $1$?
\end{quote}

\subparagraph*{Game in \cref{fig:ex_extreme1}.}Let us consider the game in \cref{fig:ex_extreme1} again.
The answer to Question~$(*)$ here is \emph{no}.
Intuitively, such an XRSE would require at least two plays of positive probability: one that ends in $t_1$, and one that ends in $t_2$.
For those two plays to occur with positive probability, the strategy profile must proceed to a randomised action at vertex $a$ or $b$: i.e., one of the players, at some point of time, must toss a coin to give the payoff $1$ to player $\Circle$ in one case and to $\Square$ in the other.
But then, since that player is the only one that can see that coin, they have a profitable deviation by lying about the outcome, and always choose the option that gives them the best payoff.
More randomisation will not help: as long as one of the players randomises, be it once, several times, or infinitely often, they have an incentive to deviate and stay in the cycle and wait until the other player leaves.

\subparagraph*{Game in \cref{fig:ex_extreme2}.} Consider now a slight modification, as shown in~\cref{fig:ex_extreme2}.
There, the first player that plays is determined at random by the edge that is taken from an initial stochastic vertex.
The answer to Question $(*)$ for this game is \emph{yes}. 
The random choice on which player gets payoff $1$ is decided by the stochastic vertex. Since both players can see which edge is taken from there, this serves as a source of unbiased randomness based on which they act. 
For example, it can be decided that if the play visits the vertex $a$ immediately after $c$, then player $\Circle$ must visit the terminal $t_1$, and similarly, if it visits the vertex $b$, then player $\Square$ must visit the terminal $t_2$. If the edge $ab$ is taken, player $\Square$ punishes player $\Circle$ by always going back to $a$, and vice versa.
In other words, the stochastic vertex provides the players with a common coin.

\subparagraph*{Game in \cref{fig:ex_extreme3}.} Finally, consider the game depicted in~\cref{fig:ex_extreme3}.
Here, both players $\Circle$ and $\Square$ have the possibility of deviating to a terminal with payoff $2$ in one play.
The stationary strategy profile in which from vertex $d$, player $\Circle$ goes from $d$ to $a$ and then to $t_1$, and in which player $\Square$ goes from $e$ to $b$ and then to $t_2$, is therefore not an XRSE: both players have a profitable deviation that goes to terminal $t_3$.
But the answer to Question~$(*)$ still remains \emph{yes}!
If, from vertex $d$, player $\Circle$ goes from vertex $d$ to $a$ and then to $b$, from which player $\Square$ leaves to $t_2$, and symmetrically, from vertex $e$, player $\Square$ goes to $a$ through $b$ from which player $\Circle$ goes to $t_1$, then that strategy profile is an XRSE, in which everyone gets the risk measure $1$. 
This is because a player has a profitable deviation only if they can play in a way that guarantees them a risk measure better than $1$, i.e., that guarantees them \emph{almost surely} a payoff greater than~$1$ by deviating.
If there remains a play that occurs with nonzero probability and offers a lower reward, then the player does not increase their risk measure.  
Therefore, an XRSE where player $\Circle$ gets the extreme risk measure $1$ only needs to have one play with positive probability in which she gets the payoff $1$, \emph{and} in which she cannot increase her payoff by deviating. We say that such a play \emph{anchors} that player. In our example, the play $cebat_1$ anchors player $\Circle$. 

We see in this last example that memory is required to remember either the subset of players that are being anchored, or if a player has deviated from the strategy and must be punished. 
Given one or more players that are being anchored, the memory state of any of the players does not change unless either a player deviates or, more importantly, randomisation occurs. When randomisation occurs, the set of players that are anchored in each of the plays is a subset of the set of players anchored before this play \emph{split}.
In our examples, the set of players that are anchored at $c$ is both $\Circle$ and $\Square$, and it immediately splits.
After the splits, when we have only one player to anchor, the players can follow a positional strategy profile; and similarly when one player deviates and must be punished the players can follow a positional strategy profile.


We prove a theorem that bounds the amount of memory required by a strategy to a polynomial in the number of players and vertices in the game. 
\begin{restatable}[App.~\ref{app:memorysmall}]{theorem}{memorysmall}\label{thm:memorysmall}
    Let $\bsigma$ be an XRSE in the game $\Game_{\|v_0}$ with $n$ vertices and $p$ players,  and a partition $(P, O)$ of player $\Pi$.
    Then, there exists a finite-memory XRSE $\bsigma^\star$ with at most $3np-2n+p+1$ many memory states, and such that $\X(\bsigma^\star) = \X(\bsigma)$. Furthermore, if $\bsigma$ is pure, then there is such a strategy profile $\bsigma^\star$ that is pure.
\end{restatable}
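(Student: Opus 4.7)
The approach revolves around two ingredients hinted at in the preceding examples: \emph{anchor plays} that witness the risk measures of optimists, and \emph{positional punishment strategies} that deter deviations. For each optimist $i$, since $\oexp_\bsigma[\mu_i]$ is the supremum of payoffs attained with positive probability, there exists a finite play $\pi^i$ with $\prob_\bsigma(\pi^i) > 0$ reaching a terminal whose $i$-payoff is at least $\oexp_\bsigma[\mu_i]$; by pruning cycles one may take $\pi^i$ simple, and in particular of length at most $n$. For each pessimist $j$ no specific witness play is needed, only the invariant that every positive-probability play reaches a terminal of $j$-payoff at least $\pexp_\bsigma[\mu_j]$. Moreover, since $\bsigma$ is an XRSE, for each player $i$ the coalition $\Pi \setminus \{i\}$ admits a strategy preventing $i$ from improving; because the relevant objective is qualitative (almost-sure safety when punishing an optimist, positive-probability reachability of a bad terminal when punishing a pessimist), positional determinacy of qualitative stochastic games yields a positional punishment profile $\btau^i$.

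I would then assemble $\bsigma^\star$ via a memory structure with three kinds of states. A \emph{multi-anchor} phase follows common prefixes of the anchor plays and refines the subset of anchored optimists at each stochastic vertex where the plays split; because anchor plays can be chosen simple and the subsets form strictly decreasing chains along each branch, this phase stays linear in $n$ and $p$. A \emph{single-anchor} phase is entered once a single optimist remains on a branch and plays a positional profile steering toward that player's target terminal, costing at most $np$ states indexed by the anchored player and the current vertex. Finally, one \emph{punishment} state per potential deviator applies $\btau^i$ as soon as a deviation is detected, and one initial state coordinates the start. State-merging between multi-anchor and single-anchor states, together with the observation that anchor- and punishment-states can be identified at terminal-adjacent vertices, is what I expect to yield the precise bound $3np-2n+p+1$.

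Verification splits into three checks: for each optimist $i$, $\pi^i$ is still a positive-probability play of $\bsigma^\star$ ending at the same terminal, and since $\bsigma^\star$ uses only edges that $\bsigma$ uses, one obtains $\oexp_{\bsigma^\star}[\mu_i] = \oexp_\bsigma[\mu_i]$; for each pessimist $j$, the support of $\bsigma^\star$ is contained in that of $\bsigma$, so no new positive-probability play can lower the pessimistic value; and any unilateral deviation is detected upon leaving the prescribed behaviour and countered by the positional punishment, so no player can strictly improve their extreme risk. Purity is preserved by choosing pure anchor plays when $\bsigma$ is pure, together with pure positional punishments (both available in qualitative stochastic games on finite graphs). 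The main obstacle I expect is the precise bookkeeping behind the constant $3np - 2n + p + 1$: one must argue that the refinement tree of anchored subsets collapses to linear size once simple anchors are picked, that anchor plays meeting at shared deterministic or stochastic vertices are shared in memory rather than duplicated, and that cascading stochastic splits produce only linearly many compound memory states rather than the naive exponential blow-up.
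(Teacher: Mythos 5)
There is a genuine gap: you anchor only the optimists and explicitly assert that ``for each pessimist $j$ no specific witness play is needed,'' but the pessimists are precisely where the difficulty of this theorem lies (the paper's motivating examples in \cref{fig:ex_extreme} involve only pessimists). Two things break. First, exact preservation of the risk measures: for a pessimist $j$, $\X_j(\bsigma) = \pexp(\bsigma)[\mu_j]$ is the \emph{minimum} payoff realised with positive probability, so to get $\X_j(\bsigma^\star) = \X_j(\bsigma)$ you need a positive-probability play of $\bsigma^\star$ that actually attains that minimum. Support containment only gives you $\X_j(\bsigma^\star) \geq \X_j(\bsigma)$, and the theorem (and its use in the $\NP$-membership argument, where upper constraints $y_j$ must be met) requires equality. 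Second, and more seriously, the no-profitable-deviation property for a pessimist is not a local detect-and-punish condition: $j$ deviates profitably iff from some reachable history they can guarantee \emph{almost surely} strictly more than $z_j$. In $\bsigma$ this is prevented because after every reachable history some branch retains the ``threat'' that $j$ receives at most $z_j$ with positive probability whatever $j$ does; when the randomisation tree is compressed, you must track, branch by branch, which pessimists still carry this threat, otherwise the surviving branches may all be ones from which $j$ can escape, and no punishment state repairs that (the escape need not leave the support of $\sigma^\star_j$). This is exactly Property~\ref{itm:pessimistanchor} of the paper's labelling $\Lambda$ (\cref{lm:Lambda}) together with the second case of its splitting claim, which guarantees that at every randomisation at least one successor keeps each anchored pessimist anchored.

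A secondary issue is the counting. The bound $3np-2n+p+1$ in the paper comes from showing that at most $3p-2$ distinct subsets $A \subseteq \Pi$ ever occur as labels $\Lambda(h)$, via an induction on $|\Lambda(h)|$ that exploits the fact that the children's labels partition the parent's label (up to the controlling player). This count is over subsets of \emph{all} players, not just optimists, and splits also occur at player-controlled vertices where the strategy randomises (in which case the controlling player remains in every child's label), not only at stochastic vertices as you assume. Your hope that ``simple anchor plays'' and ad hoc state-merging yield the constant does not substitute for this combinatorial lemma; the simplicity of the followed paths controls the factor $n$ per anchored set, not the number of anchored sets.
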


\begin{proof}[Proof sketch]
We prove this theorem by formalising the idea of \emph{anchoring plays}.
To do so, we define a labelling function $\Lambda$, that maps each history $h$ compatible with $\bsigma$ to the set of players that is, after the history $h$, currently being \emph{anchored}. In \cref{lm:Lambda}, we show the existence of such a labelling, with some properties. 
In the sequel, we write $\bz$ to represented the risk measure of each player in the strategy profile $\bsigma$, that is, $\bz = (z_i)_i = \X(\bsigma)$.

     \begin{restatable}[The labelling $\Lambda$, App.~\ref{app:memorysmall}]{lemma}{finiteMemAbstraction}\label{lm:Lambda}
        There exists a labelling $\Lambda$ that maps each history $h \in \Hist\Game_{\|v_0}$ compatible with $\bsigma$ to a set $\Lambda(h) \subseteq \Pi$, such that for each such $h$, if we write $\{v_1, \dots, v_k\} = \Supp(\bsigma(h))$, the labelling $\Lambda$ satisfies the following properties.
        \begin{enumerate}
            \item\label{itm:splitsetsanchorwithouti} If the vertex $\last(h)$ is stochastic, or belongs to some player $i \not\in \Lambda(h)$, then the sets $\Lambda(hv_1), \dots, \Lambda(hv_k)$ form a partition of $\Lambda(h)$.

            \item\label{itm:splitsetsanchorwithi} If the vertex $\last(h)$ belongs to some player $i \in \Lambda(h)$, then the sets $\Lambda(hv_1) \setminus \{i\}, \dots, \Lambda(hv_k) \setminus \{i\}$ along with $\{i\}$ form a partition of $\Lambda(h) \setminus \{i\}$, and $i$ belongs to all sets $\Lambda(hv_1), \dots, \Lambda(hv_k)$.

            \item\label{itm:optimistanchor} For each optimistic player $i \in \Lambda(h)$, we have $\X_i(\bsigma_{\|h}) = z_i$.
            
            \item\label{itm:pessimistanchor} For each pessimistic $i \in \Lambda(h)$, for all strategies $\tau_i$ of player $i$, we have $\X_i(\bsigma_{-i\|h}, \tau_i) \leq z_i$.

            \item\label{itm:nosplit} If there is a successor $v_\l$ such that $\Lambda(hv_\l) = \Lambda(h)$, then all other successors $v_{\l'}$ are such that $\X_i(\bsigma_{\|hv_{\l'}}) < z_i$ for each optimist $i \in \Lambda(h)$, and there exists $\tau_i$ with $\X_i(\bsigma_{-i\|hv_{\l'}}, \tau_i) > z_i$ for each pessimist $i \in \Lambda(h)$.
        \end{enumerate}
    \end{restatable}

With such a labelling $\Lambda$, we later show that there are at most $3p-2$ subsets $A$ such that $\lambda(h) = A$ for some history $h$ by an inductive argument (\cref{prop:combinatorial} in \cref{app:memorysmall}). 
We use subsets in the range of the function $\Lambda$  to create $3p-2$ memory states for each of the $n$ vertices to remember the anchoring plays at that vertex of the play, including one extra memory-state for the empty subset. In addition, the memory states also include $p$ punishing  strategies, one for each player, adding up to the number $3np-2n+p+1$. 
We construct a strategy $\bsigma^\star$ from $\Lambda$ that uses only these memory states defined above.
\end{proof}

Finally, using \cref{thm:memorysmall}, we can show the following lemma.

\begin{lemma}\label{lemma:np_easy}
    The constrained existence problem of XRSEs is in $\NP$. The same problem when players are restricted to pure strategies is still in $\NP$.
\end{lemma}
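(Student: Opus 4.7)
The plan is to combine \cref{thm:memorysmall} with a polynomial-time, purely combinatorial verification procedure. Given an input $(\Game_{\|v_0}, (P,O), \bx, \by)$, the nondeterministic algorithm guesses a finite-memory strategy profile $\bsigma^\star$ with at most $M := 3np - 2n + p + 1$ memory states and then verifies that $\bsigma^\star$ is an XRSE whose risk measures lie in $[\bx,\by]$. Correctness of guess-and-check follows directly from \cref{thm:memorysmall}: if some XRSE with risk measures $\bz \in [\bx,\by]$ exists, then a finite-memory XRSE $\bsigma^\star$ with the same $\bz$ and at most $M$ memory states exists, and the algorithm can guess it; for the pure-strategy restriction, \cref{thm:memorysmall} furthermore guarantees a pure witness.

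First I would argue that the witness has polynomial size. The memory structure $(S, s_0, \delta, \nu)$ has $|S| \leq M$, so the transition table $\delta : S \times V \to S$ is of size $O(Mn)$. Since $\X$ depends only on the \emph{supports} of the distributions produced by $\nu$, not their exact probabilities, it suffices to record, for each pair $(s,v)$ with $v$ controlled by some player, the subset of $E(v)$ used by $\nu(s,v)$ — an $O(Mn^2)$-size object. In the pure case this subset is a singleton. No rational numbers need be guessed.

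The verification then proceeds in two stages, both combinatorial. First, compute $\X_i(\bsigma^\star)$ for every player $i$: in the product of the memory structure with $(V, E)$ (a graph of size $O(Mn)$), the terminals reachable with positive probability under $\bsigma^\star$ are exactly the terminals reachable by a directed path in the product where stochastic branches and edges in the guessed supports are followed. This is standard graph reachability, and $\X_i(\bsigma^\star)$ is the min (if $i \in P$) or max (if $i \in O$) payoff among the reachable terminals; the constraints $x_i \leq \X_i(\bsigma^\star) \leq y_i$ are then checked by direct comparison. Second, verify the equilibrium property: for each player $i$, form the MDP $\Game_{\|v_0}(\bsigma^\star_{-i})$ on the product with the opponents' memory (of polynomial size) and test for a profitable deviation. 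For an optimist, this reduces to asking whether some terminal with payoff strictly greater than $\X_i(\bsigma^\star)$ is reachable with positive probability under some strategy of player $i$ — a positive-probability reachability problem in an MDP, solvable by graph reachability. For a pessimist, it reduces to asking whether player $i$ has a strategy that almost-surely reaches the set of terminals with payoff strictly greater than $\X_i(\bsigma^\star)$ — a classical almost-sure reachability problem, also solvable in polynomial time.

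The main conceptual obstacle is ensuring that every step avoids manipulation of numerical probabilities, so that a polynomial-size combinatorial witness suffices. This is precisely what makes the extreme risk measure tractable: both the computation of $\X_i(\bsigma^\star)$ and the best-response check for each deviating player are qualitative events (positive-probability reachability, almost-sure reachability) that reduce to graph problems on the polynomial-size product MDP. The pure-strategy case is then identical, as the guessed witness is restricted to singleton supports, and positional pure deviations suffice to realise the optimal value for both reachability and almost-sure reachability objectives in MDPs.
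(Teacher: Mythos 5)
Your proposal is correct and follows essentially the same route as the paper: guess the polynomial-memory witness guaranteed by \cref{thm:memorysmall} (pure if required), then verify the constraints and the absence of profitable deviations via qualitative reachability computations on the polynomial-size induced Markov chain and MDPs, which the paper delegates to \cref{lm:secretlemma}. Your added observation that only the supports of the output distributions need to be guessed is a useful explicit justification of the polynomial witness size that the paper leaves implicit; just take care that the payoff $0$ for non-terminating plays must also be included among the possible outcomes when it occurs with positive probability.
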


\begin{proof}
    Let $\Game_{\|v_0}$ be a simple quantitative stochastic game.
    Let $(P,O)$ be a partition of $\Pi$, and let $\bx$ and $\by$ be threshold vectors.
    By \cref{thm:memorysmall}, if there exists a (pure) XRSE with $\bx \leq \X(\bsigma) \leq \by$, then there exists one with at most $3np-2n+p+1$ memory states, where $p$ is the number of players and $n$ is the number of vertices.
    Such a strategy profile can be guessed in polynomial time.
    
    We now show that, once such a finite-memory strategy profile $\bsigma$ is guessed, one can check in polynomial time whether it is an XRSE, and satisfies the constraint $\bx \leq \X(\bsigma) \leq \by$.
    
    \begin{itemize}
        \item First, given $\bsigma$, for each player $i$, the quantity $\xr_i(\bsigma)$ can be computed in polynomial time, since it reduces to computing player $i$'s risk measure in the Markov chain induced by $\bsigma$ (which has polynomial size) (\cref{lm:secretlemma} in App.~\ref{appendix:secretlemma}).

        \item Second, checking that $\bx \leq \xr(\bsigma) \leq \by$ can be done in polynomial time.

        \item Third, for each player $i$, one must check that player $i$ has no profitable deviation.
        This can also be done in polynomial time (\cref{lm:secretlemma} in App.~\ref{appendix:secretlemma}) by computing the best risk measure player $i$ can get in the MDP induced by $\bsigma_{-i}$ (which has polynomial size).\qedhere \qedhere
    \end{itemize}
\end{proof}

\subsection{Restrictions on strategies}
We now consider subcases where the space of a strategies is restricted. 
We show in \cref{thm:infinite_rho_restricted_strategy_np_easy} that restricting the memory or amount of randomness of the strategy still renders the problem only in $\NP$.
Later in this section, we prove that all these problems, including the general problem, are $\NP$-hard. This subsection therefore completes the proof of \cref{thm:NPcomplete,thm:infinite_rho_restricted_strategy_np_easy}.

We restrict the set of strategies of each player to stationary, positional or pure. 
We show that the problem is in $\NP$ for each of these cases.
\begin{lemma}\label{lm:restrictionsNPeasy}
    The constrained existence problem, when all the players are restricted to positional, stationary, or pure strategies, is in $\NP$. 
\end{lemma}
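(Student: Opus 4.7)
The plan is to adapt the argument from Lemma \ref{lemma:np_easy}: in each of the three cases, guess a polynomial-size representation of a strategy profile $\bsigma$ from the restricted class, then verify in polynomial time that it is an XRSE (whose deviations are also restricted to the same class) satisfying the payoff constraints.

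First, I would observe that a witness admits a polynomial-size representation in each case. For \emph{positional} strategies, $\bsigma$ is directly a map sending each controlled vertex to a successor, so it is already polynomial in size. For \emph{stationary} strategies, $\X(\bsigma)$ depends only on which paths have positive probability, and hence only on the supports $\Supp(\sigma_i(v)) \subseteq E(v)$ at each controlled vertex $v$; so I would guess only those supports and assign arbitrary positive probabilities (e.g.\ uniform) for the computations below. For \emph{pure} strategies, Theorem \ref{thm:memorysmall} provides a polynomial memory bound, and the witness is a finite-memory pure strategy profile, exactly as in the pure case of Lemma \ref{lemma:np_easy}.

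Next, I would verify the guess in polynomial time, mirroring the three-step check of Lemma \ref{lemma:np_easy}: (i) compute $\X(\bsigma)$ in the finite Markov chain induced by $\bsigma$, using Lemma \ref{lm:secretlemma}; (ii) check $\bx \leq \X(\bsigma) \leq \by$; and (iii) for each player $i$, compute the best XR value that $i$ can obtain by a deviation in the restricted class, and verify that it does not strictly exceed $\X_i(\bsigma)$.

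The main obstacle is step (iii). The key observation is that in the polynomial-size MDP $\Game(\bsigma_{-i})$ obtained by fixing the other players, \emph{positional} deviations already achieve the optimal extreme risk measure for both optimists and pessimists: for an optimist, the optimum equals the maximum payoff of a terminal reachable with positive probability, attained by a positional strategy heading to that terminal; for a pessimist, the optimum is the largest $v$ such that all terminals with payoff strictly below $v$ can be avoided almost surely, which is a safety objective known to admit positional optimal strategies in MDPs. Since positional strategies are contained in all three restricted classes, the best positional deviation coincides with the best stationary and the best pure deviation, and it can be computed in polynomial time by standard reachability and almost-sure safety analyses on $\Game(\bsigma_{-i})$. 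In the stationary case, the MDP is obtained by equipping the guessed supports with arbitrary positive probabilities, which does not affect the XR analysis.
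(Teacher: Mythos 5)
Your proposal is correct and follows essentially the same route as the paper: guess a polynomially-represented witness (using \cref{thm:memorysmall} for the pure case), then verify the constraints and the absence of profitable deviations via the induced Markov chain and MDPs as in \cref{lemma:np_easy} and \cref{lm:secretlemma}. Your additional observations---that for stationary profiles only the supports need to be guessed, and that positional deviations are optimal in every restricted class so the deviation check is class-independent---are useful elaborations of details the paper leaves implicit, but they do not change the argument.
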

\begin{proof}
    We show that we can still guess a strategy profile, and verify in polynomial time if it is indeed an XRSE.
    For the cases of positional and stationary strategies, guessing a strategy profile is straightforward, since such a strategy profile $\bsigma$ can be represented using polynomially many bits.
    We can  then verify that a given strategy profile $\bsigma$ gives risk measures within the constraints, and also is an XRSE in polynomial time (\cref{lm:secretlemma} in \cref{appendix:secretlemma}). 

    However, for pure strategies, memory might be required. But we showed with \cref{thm:memorysmall} that if there is a pure strategy profile, then there is one that requires polynomial memory, and therefore our results follow.  
\end{proof}
We now prove $\NP$-hardness of the constrained existence problem for the general setting as well as the cases where the players are restricted. 

\begin{restatable}[App.~\ref{app:np_hardness}]{lemma}{NPHard}\label{lemma:np_hardness}
    The constrained existence problem of XRSEs
    is $\NP$-hard, even when all players are pessimists and all rewards are non-negative.
    It remains $\NP$-hard when the strategies are reduced to stationary, pure, or positional ones.
\end{restatable}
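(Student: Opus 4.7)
The plan is to reduce from $\THREESAT$. Given a formula $\varphi = C_1 \wedge \dots \wedge C_m$ over variables $x_1, \dots, x_n$, I would build a simple stochastic game $\Game_{\|v_0}$ with non-negative rewards in $\{0,1\}$, populated by a variable player $p_i$ (pessimist) for each $x_i$ and a clause player $q_j$ (pessimist) for each $C_j$, together with thresholds $\bx, \by \in \{0,1\}^{\Pi}$ such that $\varphi$ is satisfiable if and only if there exists an XRSE $\bsigma$ with $\bx \leq \xr(\bsigma) \leq \by$. The target threshold to encode the satisfiability condition is $\X_{q_j}(\bsigma) \geq 1$ for each clause player $q_j$.

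The core idea of the construction is: the initial vertex $v_0$ is stochastic and routes (with positive probability) to a variable-choice phase, where each $p_i$ is forced by a small gadget of their own to commit to a single truth value $b_i \in \{0,1\}$ encoded by a positional choice, and to a clause-verification phase reached uniformly among $c_1, \dots, c_m$. At $c_j$, player $q_j$ positionally selects one of the three literals of $C_j$, and the play then deterministically traverses to a terminal whose payoff to $q_j$ equals $1$ precisely when the selected literal is satisfied by $(b_i)_i$, and equals $1$ by default on terminals of plays that branched through other clauses (so those do not affect $q_j$'s pessimistic min). Under this design, $q_j$'s only profitable deviation would be another literal of $C_j$, so the XRSE condition combined with $\X_{q_j}(\bsigma)\geq 1$ forces $C_j$ to contain some literal satisfied by $(b_i)_i$.

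With these ingredients the two directions are routine. If $\varphi$ is satisfied by an assignment $\alpha$, setting each $p_i$ to play $\alpha(x_i)$ and each $q_j$ to select any literal of $C_j$ satisfied by $\alpha$ yields a positional strategy profile; since $\alpha$ globally satisfies $\varphi$, each $q_j$'s current literal is satisfied, hence $\X_{q_j}(\bsigma) = 1$, and no player can strictly improve, so $\bsigma$ is a positional XRSE meeting the constraints. Conversely, if $\varphi$ is unsatisfiable, then for any profile $\bsigma$ the induced assignment $(b_i)_i$ must leave some clause $C_j$ with no satisfied literal; by construction, every choice of $q_j$ then yields $q_j$ a payoff of $0$ on their $c_j$-branch, forcing $\X_{q_j}(\bsigma) < 1$ and violating the constraint. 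Because the witnessing XRSE in the yes-direction is positional, the same reduction gives hardness when strategies are restricted to stationary, pure, or positional ones.

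The main obstacle is the variable gadget: with positional strategies the two occurrences of a variable with opposite polarity could naively be "satisfied" independently, trivialising the reduction. To rule this out, I would force $p_i$ to commit to a single bit $b_i$ by giving them a single vertex $u_i$ with two outgoing edges to terminals $T_i^0, T_i^1$, and routing every literal-check through $u_i$ via polarity-encoding intermediate vertices whose payoffs to $q_j$ are set so that only the terminal matching $C_j$'s polarity for $x_i$ gives $q_j$ the value $1$. Designing these per-clause payoff functions so that they are simultaneously (i) neutral to $q_j$ on plays through other clauses and (ii) truly enforce satisfiability rather than an over-strong per-variable constraint is the delicate bookkeeping of the reduction; the full treatment is deferred to the appendix.
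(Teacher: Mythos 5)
There is a genuine gap, in fact two, and they are structural rather than ``delicate bookkeeping''.

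First, your commitment gadget does not force commitment in the general case. The lemma must cover arbitrary (history-dependent) strategies, and giving $p_i$ a single vertex $u_i$ only pins down a single bit when strategies are positional or stationary. Since every literal-check is routed through $u_i$, the vertex $u_i$ is reached along many different histories (one per clause/polarity gadget), and nothing in your construction gives $p_i$ an incentive not to answer differently depending on the history; a pure but non-positional strategy can then satisfy $x_i$ and $\neg x_i$ ``independently'', so an unsatisfiable formula may still admit the required XRSE. The paper avoids this by fixing the assignment along a single linear pass through the variable vertices \emph{before} any clause is selected (so each decision vertex lies on exactly one history compatible with the profile), and then separately proves a claim that a witness XRSE can be assumed positional.

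Second, even in the positional case, your requirements (i) and (ii) are incompatible as designed. Because every clause's check of $x_i$ funnels through the single vertex $u_i$ into one of only two terminals $T_i^0,T_i^1$, the terminal reached records only $(i,b_i)$ and carries no information about \emph{which} clause performed the check or which polarity it wanted. So $\mu_{q_j}(T_i^0)$ must be a single number: if $C_j\ni x_i$ you need it to be $0$ (to detect that $q_j$'s selected literal is falsified), but then any other clause $C_{j'}\ni\neg x_i$ that selects $\neg x_i$ while $b_i=0$ also reaches $T_i^0$ with positive probability, dragging $q_j$'s essential infimum to $0$ even when $C_j$ is satisfied via another literal. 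Making the terminals clause-specific reintroduces the commitment problem, since $u_i$ then needs more than two outgoing edges. The paper's reduction sidesteps this tension entirely by not encoding literal satisfaction in the clause player's payoff at all: the clause player's chosen terminal $t_\ell$ gives an auxiliary player $\Square\ell$ payoff $1$, and $\Square\ell$ has a profitable deviation (to a terminal violating the constraint on a designated spectator player $\Diamond$) exactly when the vertex $\Square\ell$ is on-path, which by construction happens exactly when $\ell$ is false. That is, satisfaction is certified by the \emph{equilibrium condition} of off-path punisher players, not by the threshold on the clause player's risk measure, and this is the idea your plan is missing.
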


\begin{proof}[Proof sketch]
     We reduce instances of $\THREESAT$ to a an instance of the problem. From a given formula $\Phi$, we construct  a game $\Game_\Phi$
    with no optimist and $4n+m+1$ pessimists, where $n$ is the number of literals and $m$ the number of clauses in $\Phi$.
    That game will contain an XRSE where a witness player gets risk measure $2$ if and only if $\Phi$ is satisfiable.
\end{proof}
 

This lemma, along with \cref{lemma:np_easy}, proves \cref{thm:NPcomplete}; and along with \cref{lm:restrictionsNPeasy}, it proves \cref{thm:infinite_rho_restricted_strategy_np_easy}.

\subsection{Things get easier when everyone is optimistic}
Since our $\NP$-hardness results involved only pessimistic players, we now show that
the constrained existence problem of XRSEs becomes $\PTIME$-complete when the perceived reward of each player is computed based on the risk measure $\oexp$, thus proving \cref{thm:PTIMEcomplete}.
We first show an upperbound by giving a polynomial-time algorithm.  

\begin{restatable}[App.~\ref{app:ptimeupperbound}]{lemma}{ptimeupperbound}\label{lm:ptimeupperbound}
    If all players are optimists, then the constrained existence problem for XRSE is in $\PTIME$, and there is an algorithm for the decision problem, which runs in time $\Oh(pm^2)$, where $m$ is the number of edges in $\Game$ and $p$ the number of players.
    Moreover, the algorithm can be modified to output an XRSE that satisfies the constraints, if one exists in time $\Oh(pm^2 + m^3)$.  Moreover, there is an algorithm that runs in time $\Oh(pm^2)$ if the upper bounds $y_i \geq 0$ for all players $i$.
\end{restatable}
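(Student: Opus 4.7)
The plan is to reduce the all-optimist constrained XRSE existence problem to a combination of safety-game solving and reachability analysis, both polynomial-time. The XRSE condition for an optimist $i$ splits into two independent parts: the upper bound $\X_i(\bsigma) \leq y_i$ is a safety condition requiring the coalition $\Pi \setminus \{i\}$ to prevent player $i$ from reaching any terminal with reward strictly greater than $y_i$, while the lower bound $\X_i(\bsigma) \geq x_i$ is a reachability condition for some terminal with reward at least $x_i$ under $\bsigma$.

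First I would, for each player $i$, compute the winning region $W_i \subseteq V$ of the two-player qualitative safety game in which the coalition $\Pi \setminus \{i\}$ tries to avoid $H_i = \{t : \mu_i(t) > y_i\}$ against player $i$. Each $W_i$ is computable in $\Oh(m)$ via the standard attractor iteration, so $\Oh(pm)$ in total. I would reject if $v_0 \notin W_i$ for some $i$. Then, starting from $A = \bigcap_i W_i$, I would iteratively prune any $v \in V_j \cap A$ with no outgoing edge into $A$, and any $v \in V_? \cap A$ with a successor outside $A$. Each round runs in $\Oh(m)$, with at most $n$ rounds, giving $\Oh(m^2)$. Let $A^*$ denote the resulting fixed point, and reject if $v_0 \notin A^*$.

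Next, I would define $\bsigma$: at each $v \in V_j \cap A^*$, enable all edges in $E(v) \cap A^*$, which is non-empty by the fixed-point property and lies in $\bigcap_{i \neq j} W_i$, hence safe for every player other than $j$. Outside $A^*$, a ``punishment'' strategy must be defined so that each player's deviation stays within the relevant safety region; this uses $p$ additional attractor-style computations, accounting for the extra $\Oh(m^3)$ needed when outputting the witness in the general case. I would then compute $z_i$ as the max reward over terminals reachable from $v_0$ under $\bsigma$, in $\Oh(m)$ per player, and accept iff $z_i \geq x_i$ for every $i$.

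The hard part will be proving soundness of rejection. Rejections due to $v_0 \notin W_i$ follow from determinacy of qualitative safety games. Rejections due to $v_0 \notin A^*$ need an inductive argument over pruning rounds: any candidate XRSE must use edges staying within $A^*$, since each pruning step removes vertices that cannot jointly satisfy every player's safety constraint. The most delicate step will be the lower-bound rejection: by monotonicity, the constructed $\bsigma$ enables the maximal set of safety-consistent edges, so any alternative XRSE $\bsigma'$ satisfying the upper bounds enables only a subset of those edges, yielding $\X_i(\bsigma') \leq z_i$; hence if $z_i < x_i$, no XRSE satisfies the constraints. Finally, the special case $y_i \geq 0$ for all $i$ avoids the punishment construction, since the payoff $0$ from infinite plays is automatically within every upper bound, collapsing the overall complexity to $\Oh(pm^2)$.
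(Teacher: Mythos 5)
Your decomposition of the XRSE condition into ``upper bound $=$ safety game against player $i$ with target $\{t:\mu_i(t)>y_i\}$'' plus ``lower bound $=$ reachability'' misses the actual equilibrium condition, and this makes the algorithm unsound. For an optimist $i$, a deviation is profitable whenever player $i$ can reach, with positive probability, a terminal with reward strictly greater than the \emph{achieved} value $\X_i(\bsigma)$ --- not greater than $y_i$. Since $\X_i(\bsigma)$ can be strictly below $y_i$, your safety regions $W_i$ do not rule out profitable deviations to terminals with rewards in the gap $(\X_i(\bsigma), y_i]$. Concretely: let $v_0$ belong to player $1$ with two edges, to $t_1$ with $\mu(t_1)=(3,0)$ and to $t_2$ with $\mu(t_2)=(7,100)$, and constraints $x=(3,0)$, $y=(10,50)$. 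Then $H_1=\emptyset$, $W_1=V$, $W_2$ contains $v_0$ (player $1$ can avoid $t_2$), so $A^\star=\{v_0,t_1\}$, your profile goes to $t_1$, $z=(3,0)$ meets the lower bounds, and you accept. But player $1$ deviates to $t_2$ for payoff $7>3$, and in fact \emph{no} XRSE meets the constraints (any profile putting positive weight on $t_2$ violates $y_2$). The paper's algorithm catches this because its bad set at iteration $k\ge 1$ is $\{v \mid \val(v) > z_i^k\}$, where $\val(v)$ is the adversarial value of the controlling player and $z_i^k=\X_i(\bsigma^{E_k})$ is recomputed after every pruning round; since removing edges lowers the $z_i^k$, new profitable deviations can appear, which is why a genuine fixed-point iteration over $\Oh(m)$ rounds (hence $\Oh(pm^2)$) is needed rather than one safety computation per player followed by a purely structural closure.

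A second, smaller gap: you essentially defer the cycle-averse case ($y_i<0$ for some $i$) to ``the punishment construction.'' In that case the profile must reach a terminal almost surely, and --- more delicately --- a deviating player must not be able to force non-termination to collect the payoff $0$ while staying inside the allowed edge set. The paper needs an alternating pruning (removing vertices from which termination cannot be guaranteed) plus a separate ``final refinements'' loop on the surviving edges to kill such undetectable stalling deviations; this is where the extra $\Oh(m^3)$ actually comes from, not from computing punishment attractors. Your maximality argument for soundness of the lower-bound rejection is the right idea (it matches the paper's invariant that no valid XRSE uses a pruned edge), but it only goes through once the pruning also removes the deviation-driven bad vertices described above.
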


\begin{proof}[Proof Sketch.]
    We want to decide whether there exists an XRSE $\bsigma$ satisfying the constraints $\Bar{x}\leq \X(\bsigma) \leq \Bar{y}$.
    The algorithm considers and deals with two cases, that we call \emph{cycle-friendly} and \emph{cycle-averse} cases, separately.
    In the cycle-friendly case, we have $y_i \geq 0$ for all players~$i$.
    Then, an XRSE could have positive probability of reaching no terminal vertex.
    However, in the cycle-averse case, that is impossible, since there is a player $i$ such that $y_i < 0$.
    In this proof sketch, we describe only the algorithm in the cycle-friendly case.

    The algorithm constructs a decreasing sequence of sets of edges $E_0, E_1, \dots$ until it reaches a fixed point.
    For each set $E_k$, it considers the strategy profile $\bsigma^{E_k}$, defined as follows: from each non-stochastic vertex $v$, when $v$ is seen for the first time, it randomises uniformly between all edges $vw \in E_k$.
    Later, if $v$ is visited again, it always repeats the same choice.
    If some player $i$ deviates and takes an edge that they are not supposed to take, then all the players switch to a positional strategy profile designed to minimise their risk measure.
    Such a strategy profile is finite-memory, but requires $2^{|V|}|V| + p$ memory states to be represented as a memory structure: we therefore use the set $E_k$ as a succinct representation.

    At each iteration $k$, the algorithm identifies new sets of vertices $V_\bad^k$ that must be avoided. This includes the terminals that give some player $i$ a payoff that is larger than $y_i$, or vertices from which player $i$ can deviate and obtain a higher value than the value offered by the strategy profile $\X_i(\bsigma^{E_k})$. 
    If it is not possible to avoid reaching the set $V_\bad^k$, the answer $\No$ is returned.
    Otherwise, the set $E_{k+1}$ is defined from $E_k$ by removing 
    edges that ensure that $V_\bad^k$ is not reached with positive probability.
    The algorithm stops when there are no more edges to remove and answers $\Yes$ and if we have $\X_i(\bsigma^{E_k}) \geq x_i$ for each $i$, and $\No$ otherwise.

    Each iteration requires time $\Oh(mp)$ to identify and remove edges.
    Since there are $\Oh(m)$ many edges, the algorithm terminates in time $\Oh(pm^2)$.
\end{proof}

Finally, we show that the problem is $\PTIME$-hard, even when there are only two players.

\begin{restatable}[App.~\ref{app:ptimelowerbound}]{lemma}{ptimelowerbound}\label{lm:ptimelowerbound}
    The constrained existence problem of XRSEs with optimistic players is $\PTIME$-hard even with only two players.
\end{restatable}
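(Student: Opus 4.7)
The plan is to reduce from the \emph{Monotone Circuit Value Problem} (MCVP), the canonical $\PTIME$-complete problem of evaluating a Boolean circuit whose only gates are AND and OR and whose inputs are literal constants TRUE and FALSE. An instance of MCVP is a DAG with a distinguished output gate $v_0$, and the question is whether $v_0$ evaluates to TRUE. I will build, in logarithmic space from such an instance, a two-player simple stochastic game $\Game_{\| v_0}$ in which both players are optimists, and show that $v_0$ evaluates to TRUE iff the constrained existence problem accepts the instance $(\Game_{\|v_0},\bx,\by)$ with $\bx = \by = (1,0)$.

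The game $\Game_{\| v_0}$ is built as follows. Its vertices are the gates and inputs of the circuit, with an edge from each gate to each of its direct inputs (so plays move from the output vertex towards the leaves). OR-gates are controlled by player $\circ$ and AND-gates by player $\square$; there are no stochastic vertices. Every TRUE-input becomes a terminal with payoff $\mu = (1,0)$ and every FALSE-input a terminal with payoff $\mu = (0,1)$. Since the circuit is acyclic, every play is finite and ends at some input.

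I then claim that the circuit evaluates to TRUE iff there is an XRSE $\bsigma$ with $\X(\bsigma) = (1,0)$. For the forward direction, the usual bottom-up evaluation provides, at each TRUE-labelled OR-gate, a successor that also evaluates to TRUE; this yields a positional strategy $\sigma_\circ$ such that every path from $v_0$ in the sub-DAG induced by $\sigma_\circ$ reaches a TRUE-input. Combined with any $\sigma_\square$ this gives $\X_\circ(\bsigma) = 1$, which is the overall maximum possible payoff for $\circ$ and thus rules out any profitable deviation of player $\circ$, and also $\X_\square(\bsigma) = 0$; moreover no $\sigma_\square'$ can reach a FALSE-input from $v_0$ given $\sigma_\circ$, so no deviation is profitable for $\square$ either. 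For the converse, from any XRSE $\bsigma$ with $\X(\bsigma) = (1,0)$, picking any successor in $\Supp(\sigma_\circ(v))$ at each OR-gate $v$ yields a positional $\sigma_\circ^\star$ such that, from $v_0$, no $\square$-strategy reaches a FALSE-input with positive probability; this is exactly a certificate that $v_0$ evaluates to TRUE. The reduction is clearly log-space computable.

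The main subtlety I anticipate is in the converse direction, where I need to argue that if $\sigma_\circ$ is randomised at some OR-gate $v$, then every successor in $\Supp(\sigma_\circ(v))$ must itself evaluate to TRUE---otherwise a deviation of $\square$ to the bad successor would expose a positive-probability path to a FALSE-input and contradict the equilibrium condition. Once this observation is made, extracting the pure positional witness $\sigma_\circ^\star$ is immediate, and the equivalence with MCVP yields $\PTIME$-hardness of the constrained existence problem of XRSEs already with only two optimistic players.
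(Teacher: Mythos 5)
Your proposal is correct, but it takes a different route from the paper. The paper's proof is a one-step reduction from two-player zero-sum reachability games: it relabels the target set as terminal vertices with payoffs $(1,-1)$, asks for an XRSE with risk measures $(1,-1)$, and outsources all the hardness to the known $\PTIME$-completeness of reachability games (citing Immerman). You instead reduce directly from the Monotone Circuit Value Problem, which amounts to inlining the standard hardness proof for reachability games into the reduction: OR-gates become $\circ$-vertices, AND-gates become $\square$-vertices, and the TRUE/FALSE inputs become terminals. The equivalence you then have to establish---``the circuit is TRUE iff $\circ$ can confine all compatible plays to TRUE-inputs, iff there is an XRSE with $\X(\bsigma)=(1,0)$''---is exactly the composition of the game-theoretic characterisation of monotone circuit evaluation with the paper's observation that an optimistic antagonist with target risk measure below his best terminal payoff must be unable to reach that terminal with positive probability under any deviation. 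The subtlety you flag in the converse direction is real but harmless: rather than extracting a single positional $\sigma_\circ^\star$ (which is slightly awkward because $\sigma_\circ$ may choose different supports after different histories leading to the same gate), it is cleaner to argue either by determinacy (if $v_0$ were FALSE, $\square$'s positional winning strategy in the underlying reachability game reaches a FALSE-input on every play compatible with $\sigma_\circ$, hence is a profitable deviation), or by a bottom-up induction on the set of vertices reachable via histories compatible with $\sigma_\circ$, which contains no FALSE-input, is closed under all successors at AND-gates, and contains at least one successor at each OR-gate. Either repair is routine, so the proof goes through; the paper's version is shorter only because it cites the hardness of reachability games as a black box, whereas yours is self-contained.
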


\begin{proof}[Proof sketch]
    We give a log-space reduction from the problem of deciding two-player zero-sum reachability games, which is known to be $\PTIME$-complete~\cite[Proposition~6]{Imm81}.
\end{proof}

\section{Discussion}
Our definition opens up several promising directions for future research.
One immediate extension of our work would be to study games with more sophisticated objectives, such as mean payoff or discounted sum. 
Another extension of our work is to study the concurrent version of such games, where players choose actions concurrently rather than in a turn-based setting. Concurrent stochastic multi-player games on stochastic graphs have traditionally been viewed as intractable, often requiring infinite memory to achieve optimal strategies, which has limited exploration of multi-player versions of the same problem. 

Finally, our definition of risk-sensitive equilibria is modelled after Nash equilibria and suffers from several of their limitations.
Like Nash equilibria, RSEs allow irrational behaviours when one player deviates and must be punished, 
as exemplified in our game \cref{fig:ex_extreme3}.
Exploring alternative definitions of risk-sensitive equilibria that are modelled after other equilibria concepts more suited for games on graphs~\cite[Section~7.1]{Osb04}, such as subgame-perfect equilibria, could provide a more rational framework for player decision-making.

\bibliography{biblio}

\appendix
\section{Appendix for \cref{sec:ERM}}\label{appendix:ERM}

\subsection{Proof of \cref{lemma:RSEtoQSSG}}\label{app:RSEtoQSSG}

\RSEtoQSSG*
\begin{proof}[Proof of \cref{lemma:RSEtoQSSG}]
    Consider the simple stochastic game $\Game_{\|v_0} = \tpl{V,E,\Pi,(V_i)_{i\in \Pi},\p ,\mu}$. We will define a payoff  function $\mu'$ over the same set of terminals for game $\Game'_{v_0}$ such that $\Game'_{\|v_0} = \tpl{V,E,\Pi,(V_i)_{i\in \Pi},\p ,\mu'}$ has a Nash equilibrium if and only if $\Game$ has a $(\beta,\brho)$-ERSE.

    For a terminal vertex $t$, we simply define $\mu_i'(t) = 1-(\beta^{-\rho_i\mu_i(t)})$ if $\rho>0$, and $\mu_i'(t) = (\beta^{-\rho_i\mu_i(t)})-1$ if $\rho<0$.
    
    Consider the function $\modifiedreward{\beta}{\rho}\colon x\mapsto 1-(\beta^{-\rho x})$ if 
    $\rho>0$ and $x\mapsto (\beta^{-\rho x})-1$ if $\rho<0$
    as the modified reward function.  This function is similar to the negative utility function defined in the work of Baier et al.,~\cite{BCMP24}, where they replace terminal rewards with the negative value of $(\beta^{-\rho\mu_i(t)})$ (as they assume $\rho>0$), in order to compute the winner in a two-player zero-sum game with risk-averse players. We additionally add or subtract $1$ from their value  to ensure that besides monotonicity, this function also maps the play that does not reach a terminal in the original game to the payoff $0$, and therefore in the modified game to preserve that such plays are still mapped to $0$. 
    
    Observe that for any random variable $X$ and constant $r$, for a value $\rho>0$, we have that 
    \begin{equation}\label{inequality:REvsExp}    
    \re_{\beta,\rho}\left[X\right] \geq r\text{ if and only if }\Eb\left[\modifiedreward{\beta}{\rho}(X)\right] \geq 1-\beta^{-\rho r}\end{equation}
    since $\re_{\beta,\rho}\left[X\right] = \frac{-1}{\rho}\log_\beta\tpl{\Eb[\beta^{-\rho X}]}$.
    
    Therefore, any Nash equilibrium in the game $\Game'_{\|v_0}$ implies that there is a strategy profile $\bsigma$ such that, for all players $i\in\Pi$, in the MDP induced by $\bsigma_{-i}$, the strategy $\sigma_i$ of player $i$ is an optimal strategy. 

    We first consider the case of player $i$ where $\rho_i>0$. The case of $\rho_i<0$ is analogous, so we omit it. 
    For every strategy $\tau_i$ of player $i$, where $\rho_i>0$, if we write $\btau = (\bsigma_{-i}, \tau_i)$, we have 
    $\Eb(\bsigma)[\mu_i']\geq \Eb(\btau)[\mu_i']$, since $\bsigma$ is a Nash equilibrium. 
    Since the payoffs of $\Gc'$ at a terminal $t$ is just $\modifiedreward{\beta}{\rho}(\mu_i(v))$, we therefore have $\Eb(\bsigma)[\modifiedreward{\beta}{\rho_i}(\mu_i)]\geq \Eb(\btau)[\modifiedreward{\beta}{\rho_i}(\mu_i)]$. From  \cref{inequality:REvsExp}, we have 
    \begin{align*}
         \Eb(\bsigma)\left[\modifiedreward{\beta}{\rho_i}(\mu_i)\right]\geq \Eb(\btau)\left[\modifiedreward{\beta}{\rho_i}(\mu_i)\right]\text{ if and only if }\\
          \Eb(\bsigma)\left[1-\beta^{-\rho_i \mu_i}\right]\geq \Eb(\btau)\left[1-\beta^{\rho_i \mu_i}\right] \\
           \iff\Eb(\bsigma)\left[-\beta^{-\rho_i \mu_i}\right]\geq \Eb(\btau)\left[-\beta^{-\rho_i \mu_i}\right]
    \end{align*}
Taking $\frac{1}{\rho}\log_\beta$ on both sides,   we get the above is true iff
    \begin{align*}
         \re_{\beta,\rho_i}(\bsigma)[\mu_i] \geq -\frac{1}{\rho_i}\log_\beta\tpl{-\Eb\left[\modifiedreward{\beta}{\rho_i}(\btau)[\mu_i]\right] }\\ \text{if and only if }\re_{\beta,\rho_i}[\bsigma](\mu_i) \geq \re_{\beta,\rho_i}[\btau](\mu_i)     \end{align*}
      Therefore the strategy  $\bsigma$ is at least as good as (any strategy where one player deviates) $\btau$ for the player $i$ that deviates, when their rewards are the risk-entropy measure. Thus, the strategy profile $\bsigma$ is an ERSE.

\end{proof}

\subsection{Proof of \cref{thm:ERRSErestricted}}\label{app:ERRSErestricted}

\stationaryRSE*
\begin{proof}[Proof of \cref{thm:ERRSErestricted}]
For the proof of \cref{thm:ERRSErestricted}, we first focus on \cref{itm:ERRSEitmundec}.
\begin{proposition}
    The constrained existence problem of $(\beta,\brho)$-ERSEs in quantitative stochastic games where players are restricted to pure strategies is undecidable.
\end{proposition}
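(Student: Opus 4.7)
The plan is to specialise to the zero risk vector $\brho = (0)_{i\in\Pi}$, so that the $(\beta,\brho)$-ERSE condition degenerates to the Nash equilibrium condition: by the convention $\re_0^\prob[X]=\mathbb{E}^\prob[X]$ fixed in \cref{sec:ERM}, a strategy profile is a $(\beta,(0)_i)$-ERSE whose risk-measure payoff lies in $[\bx,\by]$ if and only if it is a Nash equilibrium whose expected payoff lies in $[\bx,\by]$. Hence undecidability will follow as soon as I can invoke an undecidability result for \emph{pure} Nash equilibria and transport it through this identification.

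For this I will appeal to the same Ummels--Wojtczak reduction~\cite[Theorem~4.9]{UW11} already used in the proof of \cref{proposition:Undecidable}. That reduction, from the halting problem for two-counter machines, encodes counter configurations as probabilities of reaching certain terminal vertices; each decision gadget in the construction is designed so that the best reply at any vertex is attained by a deterministic choice. Consequently, pure deviations suffice to detect non-equilibria, and the Nash equilibrium witnessing a halting computation is itself pure. Therefore the very same reduction also certifies undecidability of the constrained existence problem for pure Nash equilibria.

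Composing these two observations, an Ummels--Wojtczak instance $(\Game_{\|v_0},\bx,\by)$ admits a pure Nash equilibrium with expected payoff in $[\bx,\by]$ if and only if the identical game, equipped with risk vector $(0)_i$ and the same thresholds, admits a pure $(\beta,(0)_i)$-ERSE satisfying the constraints. The required many-one reduction, and hence undecidability, is immediate.

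The main obstacle is essentially one of careful bookkeeping: one must re-read the UW gadgets and confirm that forbidding mixed strategies, for both the equilibrium candidate and every deviator, does not flip any instance's answer. Since this amounts to observing that each gadget admits a pure best response at every vertex controlled by a player, I do not expect it to require any new technical ideas beyond those already invoked to establish \cref{proposition:Undecidable}; the entire new content of the statement lies in making that pure-strategy witness explicit.
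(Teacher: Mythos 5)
Your proposal is correct and follows essentially the same route as the paper: the paper likewise inherits undecidability from Ummels--Wojtczak's Theorem~4.9 via the specialisation to Nash equilibria (cf.\ \cref{proposition:Undecidable}), observing that their reduction uses only pure strategies. The only cosmetic difference is that the paper also cites \cref{lemma:RSEtoQSSG}, whereas you go directly through the $\brho=(0)_i$ identification, which is equally valid since $\brho$ is part of the input.
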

\begin{claimproof}
The undecidability of the case where pure strategies are considered is inherited from Nash equilibria~\cite[Theorem~4.9]{UW11}, since the reduction for undecidability uses only pure strategies. Therefore, our current undecidability follows from \cref{proposition:Undecidable,lemma:RSEtoQSSG}.  
\end{claimproof}

\paragraph*{Decidable subcases.} Now, we turn our attention to \cref{itm:ERRSEdecidable} to show decidability and conditional decidability results. 
\subparagraph*{Decidability lowerbounds.} The problem is $\NP$-hard in general for even two players, which follows from  the work of Ummels and Wojtczak.
Since Nash equilibria are a specific instance of the setting of ERSEs, where the risk parameters of each player is $0$, $\NP$-hardness follows~\cite[Theorem 4.4]{UW11}. 
Similarly, for stationary strategy profiles, they show $\SQRTSUM$-hardness~\cite[Theorem 4.6]{UW11} for Nash equilibria which also shows a lower bound for our case. 

\subparagraph*{Decidability upperbounds.} For decidability, we show in \cref{prop:ETRformula} that we can write a formula in the existential theory of reals (with exponentiation if $\beta=e$)
which is satisfied if and only if the constrained existence problem is satisfied. This gives the $\PSPACE$ upper bound when $\beta$ is algebraic since the existential theory of reals is in $\PSPACE$~\cite{Can88}, and decidability subject to Shanuel's conjecture when the base is $\beta = e$, since the existential theory of reals with exponentiation is decidable assuming Shanuel's conjecture~\cite{MW95}.
Our proof is similar to the one by Ummels and Wojtczak~\cite[Theorem 4.5]{UW11}, however, we need to do slightly more work to encode the payoff expressed by the entropic risk measure. 

We only have to show that we can encode the constrained existence problem using the existential theory of reals. 
First, observe that it is enough to verify if there is a memoryless Nash equilibrium in the modified game obtained where all the terminal rewards $\mu_i(v)$ are replaced instead with $1-\tpl{\beta^{\rho\mu_i(v)}}$. This follows from \cref{lemma:RSEtoQSSG}. 

Since the players are restricted to strategies that are memoryless, we give a non-deterministic algorithm that uses the solution to sentences in $\exists\Rb$ if the values of $\beta$ and $\rho$s can be expressed in $\Qb$.  Since $\NPSPACE = \PSPACE$, and a non-deterministic procedure is still a deterministic procedure, this does not change the complexity. 
     
     For a game $\Game_{\|v_0} = \tpl{V,E,\Pi,(v_i)_{i\in \Pi}, \p,\mu}$, where $\mu$ is the payoff function from a terminal set of nodes $T$ to values in $\Qb$, and constraints $\Bar{x}$ and $\Bar{y}$ for each of the $n$ players in $\Pi$, our algorithm guesses, first, the support $S \subseteq E$ of the strategies that will be considered; that is, the set of edges that will be used with positive probability. 

\begin{claim}
            For any $z$, which requires $\ell$ bits to encode, there is a formula in $\exists\Rb$ that uses only polynomially many variables in $\ell$ to encode $\modifiedreward{\beta}{\rho}(z) = 1-\beta^{-\rho z}$, where $\beta$ and $\rho$ can also be represented in $\exists \Rb$ using a polynomial formula.  If $\beta = e$, then $\modifiedreward{e}{\rho}(z) = 1-\beta^{-\rho z}$ can be expressed using the existential theory of reals with exponentiation using a formula of most polynomial length.
\end{claim}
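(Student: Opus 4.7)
The plan is a \emph{repeated-squaring} encoding of $\beta^{-\rho z}$. Since $z$ is a rational with $\ell$ bits and $\rho$ admits a polynomial-size $\exists\Rb$ representation, the product $-\rho z$ can be written as an irreducible fraction $p/q$ with $p \in \mathbb{Z}$ and $q \in \Nb$ whose binary lengths are $k = O(\ell + |\rho|)$. We introduce a fresh variable $y$, intended to equal $\beta^{-\rho z}$, and assert $y > 0$ together with $y^q = \beta^{p}$ if $p \geq 0$, or $y^q \cdot \beta^{-p} = 1$ if $p < 0$. Since $\beta > 1$, this characterises $y$ uniquely as $\beta^{p/q}$, so that the final equation $\modifiedreward{\beta}{\rho}(z) = 1 - y$ captures the intended value.

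To realise the powers $y^q$ and $\beta^{|p|}$ with polynomially many variables, we apply the standard repeated-squaring trick. Writing $q = \sum_{j=0}^{k} c_j 2^j$ in binary, introduce auxiliary variables $y^{(0)} := y$ and enforce $y^{(j+1)} = (y^{(j)})^2$ for $0 \leq j < k$ via $O(k)$ quadratic equations; this guarantees $y^{(j)} = y^{2^j}$ by induction. Then $y^q$ is obtained as a chain of at most $k$ pairwise products of those $y^{(j)}$ with $c_j = 1$, again via auxiliary variables. The identical construction yields $\beta^{|p|}$ starting from the polynomial-size representation of $\beta$. Together, this produces an $\exists\Rb$-formula of size polynomial in $\ell$ and in the sizes of $\beta$ and $\rho$.

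For the case $\beta = e$, we instead work in the existential theory of reals with exponentiation, where $\exp$ is a primitive function symbol. After representing $\rho$ and introducing a variable $u = -\rho z$ via a single multiplication, the defining equation $y = \exp(u)$ combined with $\modifiedreward{e}{\rho}(z) = 1 - y$ yields the desired formula, whose length is again polynomial in $\ell$.

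The most delicate step is the normalisation $-\rho z = p/q$: if $\rho$ is an irrational algebraic number, then $-\rho z$ is irrational algebraic and $\beta^{-\rho z}$ is transcendental by Gelfond--Schneider, so the equation $y^q = \beta^{|p|}$ no longer characterises $y$. In that regime we reduce to the exponential case by writing $y = \exp(-\rho z \cdot L)$, where $L$ is a fresh variable constrained by $\exp(L) = \beta$; the formula size remains polynomial but the argument now requires exponentiation, and hence Shanuel's conjecture. The pure $\PSPACE$ regime of \cref{itm:ERRSE:PSPACE} corresponds to the concrete setting where the risk parameters are rational, in which case the repeated-squaring encoding of the first two paragraphs applies directly.
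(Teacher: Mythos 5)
Your proposal is correct and takes essentially the same route as the paper's proof: repeated squaring to express integer powers with polynomially many quadratic equations, root extraction ($y^q=\beta^{p}$) to handle rational exponents, and the native exponential for the $\beta=e$ case. Your two refinements---adding $y>0$ to pin down the root uniquely, and the Gelfond--Schneider caveat showing the plain $\exists\Rb$ encoding only works when $-\rho z$ is rational---are welcome precisions that the paper's proof leaves implicit (it simply assumes $\rho z$ can be taken rational).
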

\begin{claimproof}
            We assume without loss of generality that $\beta$ is a natural number. If $\beta$ is rational instead, and is represented by a value $a/b$, then individually find $a_1  = a^{\rho z}$ and $b_1 = b^{\rho z}$, and just find $b_1/a_1$, which is just written in $\exists\Rb$ by stating that $\exists  t_r \;\exists a_1'\colon a_1'\times a_1  = 1\land t_r = a_1'\times b_1$.
        Now that we assume that $\beta$ is a natural number, we deal with fixed finite exponentiation with rational values. Similarly, we can assume without loss of generality that $\rho z$ is a natural number. Indeed, any value $r^{a/b} = r^a\times z^{1/b}$ and $z^{1/b}$ can be written as $\exists y \colon y^b = z$.

        It suffices to show therefore that for two values $b,a$, both natural numbers, $b^a$ can be expressed in $\exists\Rb$ succinctly, using only a formula that has length that is not more a poly-log of $b$ or $a$. 
        Let $a = \sum_{i=0}^{\log_2{a}}a_i 2^i$, where $a_i\in\{0,1\}$. 
        This follows from the following observations. 
        \begin{itemize}
            \item $b^a = \prod_{i=1}^{\log_2{a}}\tpl{b^{a_i}b^{2^i}}$
            \item  $2^i$ can be expressed in a formula with at most $i+1$ many variables
            \item $b^{2^i}$ further requires at most $i$ many variables to express, because if $b_i$ represents $b^{2^{i}}$, then we have $b^{2^{i+1}} = b^{2^i}\times b^{2^i}$.
            \item Finally, using a similar trick, $b$ itself can be represented using at most $\log_2{b}$-many variables. 
        \end{itemize}
        If $\beta = e$, it naturally follows that $e^{-\rho z}$ is expressed using exponentiation with $e$. 
\end{claimproof}

    The above claim ensures we can efficiently represent the variables used for the payoffs of the modified game, we now can write an equation assuming that all terminal rewards are available to us as constants. 
    This will write the equation in three parts. Since we have guessed the support, we first ensure that, in fact, there are variables corresponding to the probabilities of the strategy that only take positive values on the edges corresponding to the support set that we guessed.  
    Then, we write equations using variables that compute the values of the induced Markov chain from this strategies. Finally, we also have a formula whose solution corresponds to the values of the MDP obtained for each player when playing against the strategies of all other players. Then we compare if the value of the MDP is at least as large as the underlying Markov chain for each player, to ensure that it is indeed an equilibrium. To write all of this in $\exists\Rb$, we introduce the following variables. 
     \begin{itemize}
         \item one variable  $p_{vw}$ for each pair of vertices $vw$, which corresponds to the probabilities corresponding to the strategy;
         \item a variable $r^i_v$ which corresponds to the entropic risk measure of player $i$ from vertex $v$ if they followed the strategy defined by the probabilities above; 
         \item a variable $m^i_v$ which corresponds to the value obtained by player $i$ if the game is treated as an MDP against other players.
     \end{itemize}
 
     \begin{proposition}\label{prop:ETRformula}
        For the constrained existance of ERSE in a game $\Game_{\|v_0}$ with constraints $\Bar{x},\Bar{y}$, and a subset $S$ of edges of the game, 
         \begin{itemize}
             \item there is a formula  in $\exists\Rb$ that is satisfied if and only if there is a stationary strategy that uses exactly edges in $S$, when $\beta$ and $\rho_i$ are rational values.
             \item there is a formula $\Gamma_S'$ in $\exists\Rb\text{-}\exp$ that is satisfied if and only if there is a stationary strategy that uses exactly the edges in $S$ when $\beta = e$ and $\rho_i$ are rational values.
         \end{itemize}
    \end{proposition}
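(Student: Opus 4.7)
The plan is to construct the formula by translating the problem through \cref{lemma:RSEtoQSSG} into the search for a stationary Nash equilibrium in the modified game $\Game'_{\|v_0}$ whose terminal payoffs are $\mu'_i(t) = \modifiedreward{\beta}{\rho_i}(\mu_i(t))$, and then to encode stationary Nash equilibria in the spirit of Ummels and Wojtczak~\cite[Theorem~4.5]{UW11}. The constraints $\Bar{x} \leq \re_{\beta,\brho}(\bsigma)[\mu] \leq \Bar{y}$ translate to $\modifiedreward{\beta}{\rho_i}(x_i) \leq \Eb(\bsigma)[\mu'_i] \leq \modifiedreward{\beta}{\rho_i}(y_i)$ for each $i$, and by the preceding Claim each of these transformed quantities admits a polynomial-size encoding either in $\exists\Rb$ (when $\beta$ and $\brho$ are rational) or in $\exists\Rb\text{-}\exp$ (when $\beta = e$). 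I would introduce placeholder variables standing for each $\modifiedreward{\beta}{\rho_i}(\mu_i(t))$, $\modifiedreward{\beta}{\rho_i}(x_i)$, and $\modifiedreward{\beta}{\rho_i}(y_i)$, linked to the true values through the polynomial sub-formulas from the Claim.

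Next I would assemble the main conjunction $\Gamma_S$ over the variables $p_{vw}$, $r^i_v$, and $m^i_v$. The first block would enforce that $p_{\cdot\cdot}$ represents a valid stationary strategy with support exactly $S$: nonnegativity, $p_{vw} > 0$ iff $vw \in S$, the normalisation $\sum_{w : vw \in E} p_{vw} = 1$ at every non-terminal controlled vertex, and $p_{vw} = \p(vw)$ at stochastic vertices. The second block would enforce that $r^i_v$ equals the expected modified payoff of player $i$ from $v$ in the induced Markov chain, via linear equations $r^i_v = \sum_w p_{vw}\, r^i_w$ at non-terminals and $r^i_v = \mu'_i(v)$ at terminals. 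The third block would enforce that $m^i_v$ equals player $i$'s optimal value in the MDP induced by $\bsigma_{-i}$: at terminals $m^i_v = \mu'_i(v)$, at vertices not in $V_i$ one has the same linear equation as for $r^i$, and at vertices in $V_i$ one has $m^i_v \geq m^i_w$ for every $vw \in E$ together with the existence of a realising successor. Finally, the equilibrium condition $r^i_{v_0} = m^i_{v_0}$ for every player $i$ and the threshold inequalities $\modifiedreward{\beta}{\rho_i}(x_i) \leq r^i_{v_0} \leq \modifiedreward{\beta}{\rho_i}(y_i)$ would close the formula.

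A straightforward correctness argument would then show that any satisfying assignment yields a stationary ERSE of $\Game_{\|v_0}$ whose support is exactly $S$ and whose risk-entropic payoffs lie within the prescribed thresholds, by reading off $\bsigma$ from the $p_{vw}$'s; conversely, any such $\bsigma$ produces a satisfying assignment by plugging in the expected payoffs of the induced Markov chain for the $r^i_v$'s and the optimal MDP values for the $m^i_v$'s. The size of $\Gamma_S$ is polynomial: there are $\Oh(|E| + |\Pi| \cdot |V|)$ variables and comparably many constraints, plus the polynomial-size sub-formulas coming from the Claim for each transformed terminal payoff and each transformed threshold.

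The main obstacle is ensuring that the MDP block genuinely characterises the optimal value rather than an arbitrary Bellman fixed point, since in the presence of cycles the system $m^i_v \geq m^i_w$ together with a realising successor at controlled vertices admits spurious solutions (notably those that overvalue states from which no terminal is reached). To handle this I would adapt the auxiliary ordering-style side constraints used by Ummels and Wojtczak~\cite[Theorem~4.5]{UW11} to pin down the least (respectively, greatest) such fixed point, depending on the sign of $\rho_i$ and hence of the modified rewards. Once this is done, packaging everything into $\exists\Rb$ gives the first item, and using the explicit $\exp$ predicate for $e^{-\rho_i \mu_i(t)}$ instead of the polynomial gadget gives the second.
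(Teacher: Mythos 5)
Your proposal is correct and follows essentially the same route as the paper: reduce to stationary Nash equilibria in the modified game via Lemma~\ref{lemma:RSEtoQSSG}, encode the transformed payoffs $\modifiedreward{\beta}{\rho_i}(\cdot)$ with the polynomial-size exponentiation gadget from the preceding Claim, and assemble the three blocks of constraints over $p_{vw}$, $r^i_v$, $m^i_v$ in the style of Ummels and Wojtczak. The only divergence is in how spurious Bellman fixed points are excluded: where you propose adapting ordering-style side constraints, the paper exploits the guessed support $S$ to precompute combinatorially the sets $T_S$ and $V_S$ and pins the value of vertices that cannot reach a terminal, which achieves the same uniqueness.
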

    \begin{claimproof}
    The following part of the proof is similar to the one found in Ummels and Wojtczak~\cite[Theorem 4.5]{UW11}, but we provide it to suit our setting, for the sake of completeness. 
    First, we have a formula that states that the values $p_{vw}$ indeed describe a strategy. We further ensure that for stochastic vertices, the value $p_{vw}$ encodes exactly the value dictated by the probability function $\p$ by the stochastic vertex:
    \begin{align*}
    \Phi_S(\Bar{p})\:= \bigwedge_{v,w\in V} \left( p_{vw}\geq 0\right)\land \bigwedge_{v,w\in V} \left( p_{vw} \leq 1\right)
    \land \bigwedge_{i\in \Pi}\bigwedge_{v\in V_i} \tpl{\sum_{w\in E(v)}\p_{vw}=1}\land \\
    \bigwedge_{v\in V_?} \left( p_{vw} =  \p(vw)\right) \land \bigwedge_{vw\in S} \left(p_{vw}> 0 \right)
    \end{align*}
    For a fixed support $S$ of a strategy $\bsigma$, it is possible to compute the terminals $T_S$ that have non-zero probability of being reached in the underlying Markov chain that is formed, and the vertices $V_S$ from which such terminals can be reached with non-zero probability. We assign value $\mu_i'(v)$ as the reward for the terminal vertices for player $i$, and the reward $0$ for all vertices that cannot reach any terminal with positive probability. 
    \[\Omega_S^i(\Bar{p},\Bar{r}^i)\:= \bigwedge_{t\in T_S} \left(r^i_t =\mu_i'(t)\right) \land
                 \bigwedge_{v\notin V_S} \left( r^i_v = 1\right) \land
                 \bigwedge_{v\in V_S\setminus T_S} \tpl{r^i_v = \sum_{w\in E(v)}p_{vw} r^i_v}
                 \]     
    Finally, for computing the values of the MDP, we construct a similar FO statement
    \[\Psi_S^i(\Bar{p},\Bar{m}^i) \:= \bigwedge_{t\in T} \left( m^i =\mu_i'(t) \right) \land
                 \bigwedge_{v\in V_i, w\in E(v)} \left(m_v^i\geq m_w^i\right) \land
                 \bigwedge_{v\notin V\setminus V_i} \tpl{m^i_w = \sum_{w\in E(v)} p_{vw} m^i_v}\]
Finally our statement would be $$\exists \Bar{p}\:\exists\Bar{r}\:\exists\Bar{m}\colon \Phi(\Bar{p})\land \bigwedge_{i\in\Pi}\left(\tpl{x^i_{v_0}\leq r^i_{v_0}}\land \tpl{r^i_{v_0}\leq y^i_{v_0}}\land\Omega_S^i(\Bar{p},\Bar{r^i})\land \Psi_S^i(\Bar{p},\Bar{m^i})\land \left( m_{v_0}^i\leq r_{v_0}^i \right) \right)$$
For rewards that are represented by algebraic numbers that also can be expressed succinctly via $\exists\Rb$, we observe that our above reduction extends naturally. 
For $\beta = e$, we remark that the same formula is expressible using $\exists\Rb\text{-}\exp$.
    \end{claimproof}

\end{proof}

\section{A technical lemma}\label{appendix:secretlemma}
\begin{lemma}\label{lm:secretlemma}
    Let $\Game_{\|v_0}$ be a game with two players, called $i$ and $j$.
    We assume given a partition $(P, O)$ of $\{i, j\}$.
    Then, the quantity:
    $$\inf_{\sigma_j \in \Strat_j\Game_{\|v_0}} \sup_{\sigma_i \in \Strat_i\Game_{\|v_0}} \X_i(\sigma_i, \sigma_j)$$
    can be computed in time $\Oh(m)$, where $m$ is the number of edges in $\Game$.
    Moreover, the infimum is reached with a positional strategy of player $j$; and there is a positional strategy of player $i$ that realises the supremum for every strategy of player $j$.

    Consequently, the optimality of positional strategies and the $\Oh(m)$ upper bound also hold in Markov decision processes, and in Markov chains; and, on the other hand, it holds when $j$ is a fictional player that represents a coalition of players who all have as unique objective to minimise player $i$'s risk measure.
\end{lemma}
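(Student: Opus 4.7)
The plan is to cast the min-max problem as a family of \emph{qualitative} stochastic reachability/safety games indexed by a threshold $v$, and then invoke the classical positional determinacy and linear-time solvability of those problems.

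\textbf{Step 1 (thresholds).} Since $\mu_i$ takes only finitely many values on $T$ and infinite plays are worth $0$, the random variable $\mu_i$ ranges over the set $V_\mu = \mu_i(T) \cup \{0\}$, of size $\Oh(|T|)$. For each $v \in V_\mu$, set $T_v = \{t \in T : \mu_i(t) \geq v\}$. Unpacking the definitions of the extreme risk measures yields:
\begin{itemize}
  \item if $i \in O$, then $\oexp[\mu_i] \geq v$ under $(\sigma_i, \sigma_j)$ iff some terminal of $T_v$ is reached with \emph{positive probability};
  \item if $i \in P$ and $v > 0$, then $\pexp[\mu_i] \geq v$ iff $T_v$ is reached \emph{almost surely};
  \item if $i \in P$ and $v \leq 0$, then $\pexp[\mu_i] \geq v$ iff the set $T \setminus T_v$ is reached with probability $0$ (non-terminating plays have payoff $0 \geq v$ and are harmless).
\end{itemize}

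\textbf{Step 2 (qualitative stochastic games).} For each $v$, let $W_v$ be the set of vertices from which player $i$ has a strategy enforcing the corresponding qualitative objective against every $\sigma_j$. Positive-probability reachability reduces to a backward attractor where stochastic vertices behave as $i$-vertices (one successor suffices for a positive-probability path, while $j$-vertices need every successor winning); almost-sure reachability/safety reduces to the classical nested fixpoint of de Alfaro--Henzinger--Kupferman. Both problems are positionally determined for both players, and both attractor procedures run in time $\Oh(m)$.

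\textbf{Step 3 (value and strategies).} The min-max value is $v^\star = \max \{v \in V_\mu : v_0 \in W_v\}$. A pure positional strategy $\sigma_j^\star$ realising the infimum is read off the attractor: outside $W_{v'}$ for each $v' > v^\star$, $\sigma_j^\star$ selects a successor that stays outside $W_{v'}$, thereby preventing player $i$ from achieving any better value. Conversely, for any fixed $\sigma_j$, the induced Markov decision process faced by player $i$ still has the same qualitative reachability/safety flavour, since the perceived payoff depends only on the set of terminals reached with positive probability (optimist case) or almost surely (pessimist case); standard results on qualitative MDPs give a pure positional optimal strategy for $i$ in this induced MDP, yielding a positional best response that realises $\sup_{\sigma_i} \X_i(\sigma_i, \sigma_j)$.

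\textbf{Step 4 (complexity and corollaries).} Exploiting the monotonicity $W_{v_1} \supseteq W_{v_2}$ for $v_1 \leq v_2$, a single backward sweep over the game graph computes all the sets $W_v$ simultaneously: each edge is inspected a constant number of times, yielding total time $\Oh(m)$. The stated consequences follow immediately: an MDP is the case $V_j = \emptyset$; a Markov chain is the further specialisation with $V_i = V_j = \emptyset$; and a coalition of minimisers sharing the objective of lowering $\X_i(\mu_i)$ is modelled by merging all coalition members' vertices into a single $V_j$, since a joint objective renders the merger sound.

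The main obstacle I expect is the clean treatment of the boundary threshold $v = 0$ in the pessimistic case: almost-sure reachability of $T_v$ is the correct objective when $v > 0$, but almost-sure safety of $T \setminus T_v$ is the correct one when $v \leq 0$, because non-terminating plays are ``good'' in the latter regime. Including $0$ explicitly among the candidate thresholds and switching between the two objectives as described above handles this uniformly.
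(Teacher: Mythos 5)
Your proposal follows essentially the same route as the paper: reduce the min--max computation to threshold-indexed Boolean (positive-probability or almost-sure) reachability/safety games over the finitely many payoff values, solve them by attractor computations, and read off positional strategies from the positional determinacy of these qualitative games. One small correction: the special treatment of thresholds $v \leq 0$ (where the non-terminating play, worth $0$, counts as a ``good'' outcome) is needed for the \emph{optimist} case as well, not only the pessimist case --- for $i \in O$ and $v \leq 0$, a positive probability of reaching no terminal already witnesses $\oexp[\mu_i] \geq v$, so the objective there is ``positive probability of avoiding $T \setminus T_v$'' rather than ``positive probability of reaching $T_v$''; this is exactly the $x > 0$ versus $x \leq 0$ case split the paper applies uniformly to both player types.
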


\begin{proof}
    The $\Oh(m)$ upper bound holds by a slight adaptation of the classical attractor algorithm~\cite[Chapter~5.3]{AG11}.
    Note that those algorithms run in time $\Oh(m+n)$, where $n$ is the number of vertices; but here, we assumed that each vertex (except possibly $v_0$) has at least one ingoing edge, hence $n \leq m+1$ and $m+n = \Oh(m)$.
    That algorithm immediately induces positional optimal strategies.
    Another way to obtain that second result, however, is the following: once the quantity $x = \inf_{\sigma_j} \sup_{\sigma_i} \X_i(\sigma_i, \sigma_j)$ is known, strategies that realise the infimum and the supremum can be seen as optimal strategies in the Boolean zero-sum game in which player $i$ wants with positive probability (if they are optimist) or with probability $1$ (if they are pessimist) to reach the set of terminals yielding them at least payoff $x$ (if $x > 0$) or to avoid the set of terminals yielding them less than payoff $x$ (if $x \leq 0$).
    This is then a reachability game (seen either from player $i$'s of from player $j$'s perspective), and it is well-known~\cite[Chapter~5.3]{AG011} that in such a game, for both players, positional strategies suffice to maximise the probability of winning.
    In particular, if one has a strategy to win that game with positive probability, or with probability $1$, there is also such a strategy that is positional.
\end{proof}
\section{Appendix for \cref{sec:XR}}\label{appendix:XR}
\subsection{Proof of \cref{thm:RE=PEorOE}}\label{app:RE=PEorOE}

\REisPEOE*
\begin{proof}[Proof of \cref{thm:RE=PEorOE}]
    \begin{itemize}

        \item First, let us note that for every $\rho$, we always have $\re_{\beta,\rho}[X] \geq \pexp[X]$.
        Let now $\epsilon > 0$.
        We want to prove that there exists $\rho_0 \in \Rb$ such that for every $\rho \geq \rho_0$, we have $\re_{\beta,\rho}[X] \leq \pexp[X] + \epsilon$.

        Let us first notice that we have:
       \begin{align*}
       \re_{\beta\rho}[X] &= -\frac{1}{\rho} \log_\beta \left( \int_{x \in \Rb} \beta^{-\rho x} \d \prob(X = x) \right)\\
        &= -\frac{1}{\rho} \log_\beta \left( \int_{x \in \Rb} \beta^{-\rho \pexp[X]} \beta^{-\rho (x-\pexp[X])} \d \prob(X = x) \right)\\
        &= \pexp[X] -\frac{1}{\rho} \log_\beta \left( \int_{x \in \Rb} \beta^{-\rho (x-\pexp[X])} \d \prob(X = x) \right)\\
        &= \pexp[X] -\frac{1}{\rho} \log_\beta \Bigg( \int_{x \leq \pexp[X] + \frac{\epsilon}{2}} \beta^{-\rho (x-\pexp[X])} \d \prob(X = x) \\
        &\qquad\qquad\qquad+ \int_{x \geq \pexp[X] + \frac{\epsilon}{2}} \beta^{-\rho (x-\pexp[X])} \d \prob(X = x) \Bigg)\\
         &\leq \pexp[X] - \frac{1}{\rho} \log_\beta \left( \int_{x \leq \pexp[X] + \frac{\epsilon}{2}} \beta^{-\rho \frac{\epsilon}{2}} \d \prob(X = x) + 0 \right)\\
        &= \pexp[X] - \frac{1}{\rho} \log_\beta \left( \prob\left(X \leq \pexp[X] + \frac{\epsilon}{2}\right) \beta^{-\rho \frac{\epsilon}{2}} \right)\\
        &= \pexp[X] - \frac{1}{\rho} \log_\beta \left( \prob\left(X \leq \pexp[X] + \frac{\epsilon}{2}\right)\right) + \frac{\epsilon}{2}.
        \end{align*}

        For $\rho$ large enough, this quantity is indeed smaller than $\pexp[X] + \epsilon$.

        \item Let us first notice that for every $\beta, \rho, X$, we have the following equality $\re_{\beta,\rho}[X] = -\re_{\beta(-\rho)}[-X]$.
        Thus, we can apply the previous result, and find:
        \begin{align*}
        \lim_{\rho \to -\infty} \re_{\beta,\rho} [X] 
        &= \lim_{\rho \to -\infty} -\re_{\beta(-\rho)} [-X]
        \\&= -\lim_{\rho \to +\infty} \re_{\beta,\rho} [-X]
        \\& =  - \pexp[-X]
        \\ &= - \inf \{x \in \Rb ~|~ \prob(-X \leq x) > 0\}
        \\ &= - \inf \{x \in \Rb ~|~ \prob(X \geq -x) > 0\}
        \\ &= \sup \{x \in \Rb ~|~ \prob(X \geq x) > 0\}
        \\ &= \oexp[X].
        \end{align*}
        \end{itemize}
\end{proof}





\subsection{Proof of \cref{thm:XRSEexists}}\label{app:XRSEexists}

\XRSEexists*

\begin{proof}[Proof of \cref{thm:XRSEexists}]
    Throughout this proof, for a given set of edges $F \subseteq E$, we write $\Game^F$ for the game obtained from $\Game$ by removing all the edges that do not belong to $F$.
    In that game, we define $\bsigma^F$ as the stationary strategy profile that maps each vertex $v$ to some probability distribution whose support is $F(v)$.
    Note that the probabilities do not matter here: we are only interested in the support of the distribution of the strategy profile.

    \paragraph*{Algorithm.} We proceed by presenting the algorithm, \cref{algo:existence}, that takes as an input the game $\Game_{\|v_0}$ and the partition $(P, O)$, and returns a subset $F \subseteq E$ such that, as we will show, the strategy profile $\bsigma^F$ is always an XRSE.
    That algorithm defines a decreasing sequence $E_0, E_1, \dots$ of subsets of $E$, where $E_0 = E$.
    At each step $k$, for each pessimist $i$, it computes the risk measure $z_i^k$ of player $i$ in $\bsigma^{E_k}$, and then the set $W_i^k$ of vertices $v$ such that, from $v$, whatever player $i$ does, that player almost surely gets a payoff smaller than or equal to $z_i^k$.
    If we have $v_0 \in W_i^k$ for each $i$, then the algorithm stops there and returns the set $E_k$ (and we will show below that it means that $\bsigma^{E_k}$ is an XRSE).
    Otherwise, we pick player $i$ such that $v_0 \not\in W_i^k$ (a player who provably has a profitable deviation), and define $E_{k+1}$ by removing all the edges accessible from $v_0$ leading from $V \setminus W_i^k$ to $W_i^k$.

            \begin{algorithm}
            \begin{algorithmic}\caption{Exhibition of one stationary XRSE}\label{algo:existence}
                \Procedure{Existence}{$\Game_{\|v_0}, P, O$}
                    \State $k \gets 0$
                    \State $E_k \gets E$
                    \While{$\top$}
                        \State Compute $A^k = \{v \in V \mid v \text{ is accessible from } v_0 \text{ in } (V, E_k)\}$
                        \ForAll{$i \in P$}
                            \State Compute $z^k_i = \X_i(\bsigma^{E_k})$
                            \State Compute $W^k_i = \{v \in V \mid \forall \tau_i \in \Strat_i \Game^{E_k}_{\|v_0}, \text{we have } \prob_{\bsigma^{E_k}_{-i}, \tau_i}(\mu_i \leq z^k_i) > 0\}$
                        \EndFor
                        \If{$\exists i$ such that $v_0 \not\in W_i^k$}
                            \State Pick one such $i$
                            \State $E_{k+1} \gets E_k \setminus ((A^k \setminus W_i^k) \times W_i^k)$
                            \State $k \gets k+1$
                        \Else
                            \State \Return $E_k$
                        \EndIf
                    \EndWhile
                \EndProcedure
            \end{algorithmic}
        \end{algorithm}

    \paragraph*{Correctness} A first quick invariant that we need to prove is the following one, which will guarantee that the games $\Game^{E_k}$ and the strategies $\bsigma^{E_k}$ are well-defined.

    \begin{invariant}\label{inv:outgoingedges}
        For each $k$, each stochastic vertex $v$, we have $E(v) \subseteq E_k$, and for each non-stochastic vertex $v$, we have $E(v) \cap E_k \neq \emptyset$.
    \end{invariant}
    
\begin{claimproof}[Proof that \cref{algo:existence} satisfies \cref{inv:outgoingedges}]
    The set $E_0 = E$ trivially satisfies the invariant.

    Now, let us assume that $E_k$ satisfies the invariant.
    At step $k$, an edge is removed if and only if goes from a vertex $u \in A^k \setminus W^k_i$ to a vertex $v \in W^k_i$.
    Consider a stochastic vertex $u \in A^k$: if it has an edge that leads to vertex $v \in W^k_i$, then whatever player $i$ plays from $u$, with positive probability, the vertex $v$ is reached; and then, if the other players play the strategy profile $\bsigma^{E_k}$, then with positive probability, player $i$ gets the payoff $z^k_i$ or less.
    Hence $u \in W_i^k$, and the edge $uv$ is not removed, and remains in the set $E_{k+1}$.
    Similarly, if $u$ is not a stochastic vertex, but all its outgoing edges lead to a vertex that belong to $W_i^k$, then $u$ itself belongs to $W_i^k$, hence the outgoing edges of $u$ will not all be removed.
    The invariant is therefore still true at step $k+1$, and by induction, is true for all $k$.
\end{claimproof}
    
Each step of the algorithm is then also properly defined.
Moreover, we have termination.

\begin{proposition}
    \cref{algo:existence} terminates.
\end{proposition}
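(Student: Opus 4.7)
The plan is to show that the sequence $E_0 \supseteq E_1 \supseteq \dots$ strictly decreases at every non-terminating step. Since $E$ is finite, this immediately yields termination in at most $|E|$ steps. By \cref{inv:outgoingedges} and the choice of removal rule, each $E_k$ is well-defined, and $E_{k+1} \subseteq E_k$ by construction, so the only substantive point is strict decrease.

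Suppose the algorithm does not halt at step $k$: then some pessimist $i \in P$ satisfies $v_0 \notin W_i^k$. I will exhibit an edge in $E_k \cap ((A^k \setminus W_i^k) \times W_i^k)$, which guarantees $E_{k+1} \subsetneq E_k$. The key is to look at the Markov chain $\Game^{E_k}(\bsigma^{E_k})$ initialised at $v_0$. By definition of the pessimistic risk measure, $z_i^k = \pexp[\mu_i]$ coincides with the essential infimum of $\mu_i$ in this chain. Since $\mu_i$ takes only finitely many values, the essential infimum is attained, i.e.\ there is a terminal $t^*$ reachable from $v_0$ with positive probability under $\bsigma^{E_k}$ and with $\mu_i(t^*) = z_i^k$.

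Now $t^* \in W_i^k$: from the terminal $t^*$ every strategy of player $i$ yields payoff exactly $z_i^k$, so in particular $\prob(\mu_i \leq z_i^k) = 1 > 0$. On the other hand $v_0 \in A^k \setminus W_i^k$ by assumption. Consider any positive-probability path $v_0 = u_0, u_1, \dots, u_\ell = t^*$ in $\Game^{E_k}(\bsigma^{E_k})$. All vertices on this path are accessible from $v_0$ in $(V, E_k)$, hence belong to $A^k$. The sequence of memberships in $W_i^k$ goes from $\text{``no''}$ at $u_0$ to $\text{``yes''}$ at $u_\ell$, so there exists an index $j$ with $u_j \in A^k \setminus W_i^k$ and $u_{j+1} \in W_i^k$. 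The edge $u_j u_{j+1} \in E_k$ is therefore in $(A^k \setminus W_i^k) \times W_i^k$ and is removed in passing to $E_{k+1}$.

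Thus $|E_{k+1}| < |E_k|$ whenever the algorithm does not terminate at step $k$, and since $E$ is finite the loop exits after at most $|E|$ iterations. The only mildly delicate point is the attainment of the essential infimum in step~4, which is immediate here because $\mu_i$ has finite range; no probabilistic subtleties arise.
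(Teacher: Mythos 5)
Your overall strategy---show that whenever the algorithm does not halt at step $k$, some edge crossing from $A^k \setminus W_i^k$ into $W_i^k$ lies on a positive-probability path and is therefore removed, so that $E_{k+1} \subsetneq E_k$---is exactly the paper's, and your crossing-edge argument at the end is correct. However, there is a genuine gap in how you produce the witness vertex inside $W_i^k$. You assert that since $\mu_i$ has finite range, the essential infimum $z_i^k$ is ``attained, i.e.\ there is a terminal $t^*$ reachable from $v_0$ with positive probability with $\mu_i(t^*) = z_i^k$.'' That inference is false in general: the payoff convention assigns $0$ to every play that reaches no terminal, so the value $z_i^k = 0$ may be realised \emph{only} by the event that the play never terminates. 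In that case there need be no reachable terminal giving player $i$ payoff $0$ (e.g.\ every reachable terminal gives payoff $2$, but the chain has positive probability of cycling forever), and your $t^*$ does not exist. This is precisely the ``probabilistic subtlety'' you dismiss at the end.

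The paper closes this case separately: if $z_i^k = 0$ is obtained by reaching no terminal, there is a vertex $u$ that is visited infinitely often with positive probability under $\bsigma^{E_k}$; from such a $u$ the probability of ever reaching a terminal must be $0$ (otherwise the probability of returning to $u$ infinitely often would vanish), hence no terminal is accessible from $u$ in $(V, E_k)$, every strategy of player $i$ from $u$ yields payoff $0 \leq z_i^k$ almost surely, and therefore $u \in W_i^k$. Since $u$ is accessible from $v_0$, your crossing-edge argument then applies verbatim with $u$ in place of $t^*$. With that second case added, your proof is complete and coincides with the paper's.
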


\begin{claimproof}
    With \cref{inv:outgoingedges}, we now know that \cref{algo:existence} successfully constructs a sequence $E_0, E_1, \dots$ of sets of edges until it stops and returns the last of those sets.
    Termination is an immediate consequence of the fact that this sequence is decreasing.

    Indeed, for each step $k$ at which nothing is returned, there exists a player $i$ with $v_0 \not\in W_i^k$.
    On the other hand, the set $W_i^k$ is necessarily accessible from $v_0$:

    \begin{claim}
        The set $W_i^k$ is nonempty, and accessible from $v_0$ in the graph $(V, E_k)$.
    \end{claim}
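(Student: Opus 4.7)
The plan is to exhibit, in $W_i^k$, a vertex reachable from $v_0$ in $(V, E_k)$ by analysing the distribution of $\mu_i$ under the Markov chain induced by $\bsigma^{E_k}$. By definition $z_i^k = \pexp_i(\bsigma^{E_k})$, and since $\mu_i$ has finite support under this Markov chain (ranging over the finitely many terminal payoffs together with the value $0$ from non-terminating plays), the value $z_i^k$ is attained with positive probability by some outcome. I would split according to whether this witnessing outcome is a terminal or a non-terminating play.

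In the first case, there exists a terminal vertex $t$ with $\mu_i(t) = z_i^k$ that is reached with positive probability from $v_0$ under $\bsigma^{E_k}$. Since $t$ is terminal, the game $\Game^{E_k}_{\|t}$ has only the trivial strategy for player $i$, and that strategy gives $\mu_i = z_i^k \leq z_i^k$ with probability $1 > 0$, so $t \in W_i^k$. Any positive-probability trajectory of $\bsigma^{E_k}$ from $v_0$ to $t$ uses only edges of $E_k$, so $t$ is reachable from $v_0$ in $(V, E_k)$, and we are done.

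In the remaining case, no terminal witnesses $z_i^k$, so the value must be attained by a non-terminating play of payoff $0$. Since rewards are non-negative, this forces $z_i^k = 0$, and the reachable part of the Markov chain induced by $\bsigma^{E_k}$ contains a bottom SCC $C$ with no terminal. The key observation is that $\bsigma^{E_k}$ randomises with full support over the edges of $E_k$ at every non-stochastic vertex, so the edges of this Markov chain at reachable vertices are exactly the edges of $E_k$; hence $C$ has no outgoing edges in $E_k$ whatsoever. Consequently, from any $v \in C$, every strategy $\tau_i \in \Strat_i \Game^{E_k}_{\|v}$ keeps the play inside $C$ forever, producing $\mu_i = 0 = z_i^k$ with probability $1$. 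Thus $C \subseteq W_i^k$, and every $v \in C$ is reachable from $v_0$ in $(V, E_k)$ because $C$ is visited with positive probability under $\bsigma^{E_k}$.

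The main obstacle I anticipate is the second case, where one must ensure that the non-terminating behaviour of $\bsigma^{E_k}$ translates into an obstruction inherited by \emph{every} player-$i$ deviation in $\Game^{E_k}$, not just by the specific randomised strategy of $i$ under $\bsigma^{E_k}$. This hinges crucially on restricting $\tau_i$ to strategies in $\Game^{E_k}$, i.e.\ to edges of $E_k$: combined with the full-support property of $\bsigma^{E_k}$, which guarantees that a bottom SCC of the induced Markov chain is genuinely trapping in the underlying graph $(V, E_k)$, this is what witnesses membership in $W_i^k$ regardless of player $i$'s play.
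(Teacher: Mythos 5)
Your proof is correct and follows essentially the same route as the paper: the same case split on whether $z_i^k$ is witnessed by a terminal vertex (which then trivially lies in $W_i^k$ and is reachable) or only by a non-terminating play, and in the latter case the same key observation that the full support of $\bsigma^{E_k}$ makes the induced Markov chain's edges coincide with $E_k$ on the reachable part, so that a trapping region with no accessible terminal lies entirely in $W_i^k$. The only cosmetic difference is that you exhibit this region as a terminal-free bottom SCC, whereas the paper picks a vertex visited infinitely often with positive probability and argues via $\lim_\ell (1-q)^\ell = 0$ that no terminal is accessible from it; both yield the same witness.
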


    \begin{claimproof}
        If $z^k_i$ is obtained by reaching a terminal vertex $t$, then we have $t \in W_i^k$, and $t$ is accessible from $v_0$.
        If now $z^k_i = 0$ is obtained by reaching no terminal vertex, then when following $\bsigma^k$, with positive probability, no terminal is reached.
        Then, there is in particular a vertex $u$ that has positive probability to be visited infinitely often.
        And when playing $\bsigma^k$ from $u$, the probability that some terminal is ever reached is actually $0$, since if it was some constant $q > 0$, then the probability of visiting $u$ infinitely often would be $\lim_\l (1-q)^\l = 0$.
        In other words, no terminal vertex is accessible from $u$ in $(V, E_k)$, and then, we have $u \in W_i^k$.
    \end{claimproof}

    Now, along a play that starts from $v_0 \not\in W_i^k$ and visits $W_i^k$, there exists at least one edge that goes from a vertex that does not belong to $W_i^k$, to a vertex that does.
    Such an edge is then removed in the set $E_{k+1}$, which is therefore strictly included in the set $E_k$.
    This holds for every $k$, ensuring termination.
\end{claimproof}

We now know that the algorithm terminates, i.e., constructs a finite decreasing sequence $E = E_0, E_1, \dots, E_n$, and then returns the set $E_n$, as a succinct representation of the stationary strategy profile $\bsigma^{E_n}$.
What remains to be proven is that that strategy profile is an XRSE.
Before proving that it is an XRSE in the game $\Game_{\|v_0}$, we first prove that it is one in the game $\Game^{E_n}_{\|v_0}$, i.e., when the edges that have been removed cannot be used to deviate.

\begin{proposition}\label{prop:xrseGEn}
    The strategy profile $\bsigma^{E_n}$ is an XRSE in the game $\Game^{E_n}_{\|v_0}$.
\end{proposition}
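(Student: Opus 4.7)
The plan is to split the analysis of potential deviations $\tau_i \in \Strat_i \Game^{E_n}_{\|v_0}$ according to whether player $i$ is an optimist or a pessimist, and to show in each case that $\X_i(\bsigma^{E_n}_{-i}, \tau_i) \leq \X_i(\bsigma^{E_n})$.

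For optimists $i \in O$, the key observation is that under $\bsigma^{E_n}$ every non-stochastic vertex randomises with positive probability over all its outgoing edges in $E_n$, so by \cref{inv:outgoingedges} every terminal reachable from $v_0$ in the graph $(V, E_n)$ is visited with positive probability. Consequently $\X_i(\bsigma^{E_n}) = \oexp[\mu_i]$ is the maximum payoff of any terminal reachable in $(V, E_n)$ from $v_0$ (or $0$, in the event that the set of infinite plays has positive probability, whichever is larger). Now the strategy $\tau_i$ is still restricted to $E_n$-edges but may prune some of player~$i$'s choices, so the set of plays compatible with $(\bsigma^{E_n}_{-i}, \tau_i)$ is a subset of those compatible with $\bsigma^{E_n}$, and therefore the set of terminals reached with positive probability can only shrink. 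The maximum over this smaller set is at most $\X_i(\bsigma^{E_n})$, which rules out any profitable deviation.

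For pessimists $i \in P$, I will leverage the halting condition of \cref{algo:existence}: at termination the algorithm has verified that $v_0 \in W_i^n$ for every $i \in P$. Unpacking the definition of $W_i^n$, this gives that for every strategy $\tau_i \in \Strat_i \Game^{E_n}_{\|v_0}$ we have $\prob_{\bsigma^{E_n}_{-i}, \tau_i}(\mu_i \leq z_i^n) > 0$. Because $\mu_i$ takes only finitely many values---the payoffs of the terminal vertices, together with $0$ for infinite plays---this positive probability must be carried by some specific value $y \leq z_i^n$ with $\prob(\mu_i = y) > 0$. Hence $\pexp[\mu_i] \leq y \leq z_i^n = \X_i(\bsigma^{E_n})$, so no pessimist can deviate profitably either.

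The main subtlety, and the point where I expect to have to work carefully, is the argument for optimists: the algorithm never tracks them explicitly, and yet the conclusion must cover them. The reason this apparent gap is harmless is that $\bsigma^{E_n}$ uses \emph{every} remaining edge with positive probability, so among all strategy profiles available in $\Game^{E_n}_{\|v_0}$ it already realises the maximal reachable support for player~$i$; any deviation can only trim that support. The pessimist case, by contrast, is essentially a direct unpacking of the halting condition, but it requires one genuine argument---invoking the finiteness of the image of $\mu_i$---to convert ``$\mu_i \leq z_i^n$ with positive probability'' into the desired bound on the essential infimum.
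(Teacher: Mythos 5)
Your proof is correct and follows essentially the same route as the paper's: a case split between optimists (who cannot gain because every terminal reachable in $(V,E_n)$ is already reached with positive probability under the full-support profile $\bsigma^{E_n}$, so a deviation can only shrink the support) and pessimists (for whom the halting condition $v_0 \in W_i^n$ directly bounds every deviation by $z_i^n$). The only point the paper makes slightly more explicit is that an optimist's deviation may introduce a positive probability of reaching no terminal, contributing the value $0$ to the essential supremum; this is harmless precisely because all rewards are non-negative, which you gesture at but should state when bounding the deviation's risk measure rather than only when describing $\X_i(\bsigma^{E_n})$.
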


\begin{claimproof}
    Consider a player $i$, and a deviation $\sigma'_i$ of player $i$ from the strategy profile $\bsigma^{E_n}$ in the game $\Game_{\|v_0}^{E_n}$.
    Let $x = \X_i(\bsigma^{E_n}_{-i}, \sigma'_i)$.

    \subparagraph*{If player $i$ is an optimist.}
    If $x = 0$, then since all rewards are non-negative, we have $x \leq \X_i(\bsigma^{E_n})$.
    If $x > 0$, then the payoff $x$ is obtained by reaching a terminal vertex $t$.
    But then, that terminal vertex is accessible from $v_0$ in the graph $(V, E_n)$, and is therefore also reached with positive probability when all players follow the strategy profile $\bsigma^{E_n}$.
    Hence, again, the inequality $x \leq \X_i(\bsigma^{E_n})$.
    
    \subparagraph*{If player $i$ is a pessimist.}
    Then, since the algorithm terminated at step $n$, player $i$ is such that $v_0 \in W_i^k$.
    The strategy $\sigma'_i$, like every strategy $\tau_i$ for player $i$, satisfies therefore the inequality $\prob_{\bsigma^{E_n}_{-i}, \sigma'_i}(\mu_i \leq z_i^n) > 0$.
    Consequently, we have $x \leq z_i^k = \X_i(\bsigma^{E_n})$.

    In both cases, the deviation $\sigma'_i$ is not profitable, hence the conclusion.
\end{claimproof}

Let us now prove that putting back the removed edges does not change that result, and therefore conclude the correctness proof.

\begin{proposition}
    The strategy profile $\bsigma^{E_n}$ is an XRSE in the game $\Game_{\|v_0}$.
\end{proposition}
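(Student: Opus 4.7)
The proof would reduce deviations in $\Game_{\|v_0}$ to deviations in $\Game^{E_n}_{\|v_0}$ via a structural invariant of the algorithm. The key observation I would establish first is that at every iteration $k$, every removed edge $uv \in E_k \setminus E_{k+1}$ has its source $u$ controlled by the chosen pessimist, which I will denote $i^{(k)}$. Indeed, if $u$ were stochastic or controlled by a player $l \neq i^{(k)}$, then $\bsigma^{E_k}$ would assign positive probability to the outgoing edge $uv$; since $v \in W_{i^{(k)}}^k$ forces $\prob(\mu_{i^{(k)}} \leq z_{i^{(k)}}^k) > 0$ from $v$ against $\bsigma^{E_k}_{-i^{(k)}}$ regardless of $i^{(k)}$'s actions, the same property would propagate to $u$, contradicting $u \notin W_{i^{(k)}}^k$. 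Hence every edge of $E \setminus E_n$ has its source in some $V_{i^{(k)}}$, i.e.\ in a pessimist's vertex set.

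The optimist case is then immediate: for an optimist $j$, since optimists are never picked by the algorithm, the set $V_j$ contains no outgoing edge of $E \setminus E_n$, so any deviation $\sigma'_j$ is realisable entirely on $E_n$-edges (the other players restrict themselves to $E_n$ under $\bsigma^{E_n}_{-j}$ anyway). Thus $\sigma'_j$ is effectively a deviation in $\Game^{E_n}_{\|v_0}$, which \cref{prop:xrseGEn} has already shown cannot be profitable.

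For a pessimist $j$ I would proceed by reverse induction on $k$ from $n$ down to $0$, proving that $\bsigma^{E_n}$ is an XRSE in $\Game^{E_k}_{\|v_0}$; the target statement is the case $k=0$, and the base case $k=n$ is \cref{prop:xrseGEn}. In the inductive step, a profitable deviation $\sigma'_j$ in $\Game^{E_k}_{\|v_0}$ that is not already one in $\Game^{E_{k+1}}_{\|v_0}$ must traverse a removed edge, and by the structural invariant above the deviating player must satisfy $j = i^{(k)}$ and the traversal must drive the play into $W_j^k$ with positive probability. The crucial technical claim is that from any vertex $v \in W_j^k$, even against $\bsigma^{E_n}_{-j}$ and inside $\Game$, player $j$ still cannot almost-surely avoid $\mu_j \leq z_j^n$; combined with the positive probability of reaching such a $v$ through the removed edge, this yields $\prob_{\bsigma^{E_n}_{-j}, \sigma'_j}(\mu_j \leq z_j^n) > 0$, contradicting profitability.

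The main obstacle is precisely this last sub-claim: the set $W_j^k$ is defined relative to $\bsigma^{E_k}_{-j}$ and the threshold $z_j^k$, which may drift from the final $\bsigma^{E_n}_{-j}$ and $z_j^n$ because removals at later iterations performed for other pessimists can alter $j$'s own risk value non-monotonically. I would handle this by tracking witnessing ``bad plays'' from $v$ through successive iterations, arguing that a bad play in $\Game^{E_k}$ against $\bsigma^{E_k}_{-j}$ towards some terminal $t$ with $\mu_j(t) \leq z_j^k$ either persists through iteration $n$, or is replaced by an equally bad play against $\bsigma^{E_n}_{-j}$ reaching a terminal with payoff at most $z_j^n$, essentially because the algorithm only erases edges entering vertices from which $j$ is already stuck and therefore preserves the positive-probability reachability of low-payoff outcomes for $j$.
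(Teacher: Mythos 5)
Your structural invariant (every removed edge has its source at a vertex of the pessimist chosen at that step) is correct and is essentially what the paper uses implicitly; your optimist case and your reverse induction are a sound repackaging of the paper's ``pick the earliest-removed edge used by the deviation'' argument. The difficulty, however, is entirely concentrated in what you call the crucial technical claim, and your plan for it does not work as stated. Two ingredients are missing. First, you worry that later removals ``can alter $j$'s own risk value non-monotonically''; in fact the sequence $(z_j^\l)_\l$ is non-decreasing for every pessimist $j$, and this monotonicity is not an obstacle to route around but a lemma you must prove and use: the witness extracted from $v \in W_j^k$ only guarantees a payoff $\leq z_j^k$, and without $z_j^k \leq z_j^n$ this does not contradict profitability relative to $z_j^n$. (The paper proves it by showing $z_j^{\l+1}$ is realised by a terminal already reachable in $(V,E_\l)$, the no-terminal case being excluded because such a trap would have been cut off at an earlier step.) Second, your mechanism for persistence --- ``the algorithm only erases edges entering vertices from which $j$ is already stuck'' --- is not the right one. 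What actually makes the witness survive is that edges with \emph{source} in $W_j^k$ are never removed after step $k$: the algorithm only removes edges whose source lies in the accessible set $A^\l$, and $W_j^k$ is made inaccessible from $v_0$ at step $k+1$, so $E_n \cap (W_j^k \times V) = E_k \cap (W_j^k \times V)$. You never invoke the accessibility restriction $A^k$, and without it this freezing does not follow from your structural invariant alone (a later iteration for a different pessimist $j'$ could a priori remove an edge leaving a vertex of $W_j^k \cap V_{j'}$).

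There is also a smaller unaddressed point: to exploit the frozen sub-arena you need the bad play witnessing $v \in W_j^k$ to be realisable \emph{without ever leaving} $W_j^k$, since outside that set the edge sets, and hence the behaviour of $\bsigma^{E_n}_{-j}$, may genuinely differ from those at step $k$. This needs its own argument: if every bad continuation from $v$ left $W_j^k$ almost surely, player $j$ could compose an escape with the avoidance strategies available outside $W_j^k$ and contradict $v \in W_j^k$. Until the monotonicity lemma, the freezing of $W_j^k$ via inaccessibility, and this confinement argument are supplied, the pessimist case of your proof is incomplete.
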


\begin{claimproof}
    Let $i$ be a player, and let $\sigma'_i$ be a deviation from $\bsigma^{E_n}$ for player $i$ in $\Game_{\|v_0}$.
    Since $\bsigma^n$ is stationary, we can assume that $\sigma'_i$ is positional by \cref{lm:secretlemma}.
    If the strategy $\sigma'_i$ uses only edges of $E_n$, then it can be considered as a deviation from $\bsigma^{E_n}$ in the game $\Game^{E_n}$, hence by \cref{prop:xrseGEn}, it is not a profitable deviation.
    
    Let us now assume that the strategy $\sigma'_i$ uses an edge that does not belong to $E_n$, i.e. there exists a vertex $v$ that is visited with positive probability in the strategy profile $(\bsigma^{E_n}_{-i}, \sigma'_i)$ and an edge $vw \in E \setminus E_n$ such that $w = \sigma'_i(v)$.
    Since only edges controlled by pessimists have been removed, we can immediately deduce that player $i$ is a pessimist.
    
    Now, among such edges, let us choose one whose removal occurred the earliest, that is, let us choose it in order to minimise the index $k$ such that $vw \in E_k \setminus E_{k+1}$.
    Thus, in the strategy profile $(\bsigma^{E_n}_{-i}, \sigma'_i)$, it is almost sure that only edges of $E_k$ are used.

    The fact that the edge $uv$ has been removed at step $k$ means that we had $u \not\in W_i^k$ and $v \in W_i^k$.
    Thus, from the vertex $v$, if player $i$ uses only edges of $E_k$ (which is the case when they follow $\sigma'_i$), and if the other players follow the strategy profile $\bsigma^{E_k}_{-i}$, player $i$ gets the payoff $z_i^k$ or less with positive probability.
    Let us show that it is also the case when the other players follow the strategy profile $\bsigma^{E_n}$ instead of $\bsigma^{E_k}$.

\begin{claim}
    From the vertex $v$, we have $\prob_{\bsigma^{E_n}_{-i}, \sigma'_i}(\mu_i \leq z_i^k) > 0$.
\end{claim}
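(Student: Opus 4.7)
My plan is to first extract a witness play using the definition of $W_i^k$, and then port it to the more restrictive profile $\bsigma^{E_n}_{-i}$. By the minimality of $k$ in choosing the removed edge $vw$, the restriction of $\sigma'_i$ to histories starting at $v$ only uses edges that lie in $E_k$, so it is a valid strategy in $\Game^{E_k}_{\|v}$. Combined with $v \in W_i^k$, this yields a positive-probability play $\pi$ from $v$ compatible with $(\bsigma^{E_k}_{-i}, \sigma'_i)$ with $\mu_i(\pi) \leq z_i^k$. If this $\pi$ happens to use only $E_n$-edges at every $-i$-controlled vertex along it, then $\pi$ is also compatible with $(\bsigma^{E_n}_{-i}, \sigma'_i)$ and the claim follows directly.

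Otherwise, I would proceed by induction on the first ``offending'' step $\ell \in \{k, \ldots, n-1\}$ at which $\pi$ uses an edge $u\bar{u} \in E_\ell \setminus E_{\ell+1}$ at a $-i$-vertex $u$. Such an edge was removed because $u \in A^\ell \setminus W_{j_\ell}^\ell$ and $\bar{u} \in W_{j_\ell}^\ell$ for some pessimist $j_\ell$. Using \cref{inv:outgoingedges} to guarantee an alternative $E_{\ell+1}$-outgoing edge from $u$, I would reroute $\pi$ at $u$: the new prefix stays positive-probability under $(\bsigma^{E_{\ell+1}}_{-i}, \sigma'_i)$, and its continuation either reaches no terminal (giving $\mu_i = 0 \leq z_i^k$ by non-negativity of rewards and of $z_i^k$) or reaches a new terminal whose payoff we must bound. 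Iterating this rerouting until no offending step remains then produces a witness play compatible with $(\bsigma^{E_n}_{-i}, \sigma'_i)$ with $\mu_i \leq z_i^k$.

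The main obstacle is the rerouting step, specifically verifying that at $u$ an alternative edge exists whose continuation still yields payoff at most $z_i^k$ with positive probability. The key structural fact to exploit is that the removed edges target pessimist-controlled trap sets $W_{j_\ell}^\ell$; avoiding these sets from $u$ keeps the play inside $A^\ell \setminus W_{j_\ell}^\ell$, which — combined with the observation that $v$ itself was placed in $W_i^k$ precisely to prevent player $i$ from escaping to $> z_i^k$ — should guarantee that the rerouted continuation still reaches either a ``bad'' terminal for $i$ or no terminal at all. A careful case analysis, separating the subcases where $j_\ell = i$ and $j_\ell \neq i$, together with the non-negativity of rewards, should close this gap and complete the proof that $\bsigma^{E_n}$ is an XRSE in $\Game_{\|v_0}$.
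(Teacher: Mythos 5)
Your proposal diverges from the paper's argument at the crucial point, and the part you flag as "the main obstacle" is a genuine gap that your rerouting strategy does not close. Rerouting the witness play at a vertex $u$ controlled by a player other than $i$ sends the play into a continuation over which you have no control of player $i$'s payoff: your proposed invariant --- that the rerouted play stays inside $A^\ell \setminus W^\ell_{j_\ell}$ --- says nothing about $\mu_i$ when $j_\ell \neq i$, and when $j_\ell = i$ it is actively counterproductive, since $A^\ell \setminus W_i^\ell$ is precisely the region from which player $i$ \emph{can} force a payoff above $z_i^\ell$ almost surely. So "reaches either a bad terminal for $i$ or no terminal at all" is not guaranteed, and the induction on offending steps does not go through as sketched.

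The paper's proof makes rerouting unnecessary via two observations you are missing. First, the witness event from $v \in W_i^k$ can be realised \emph{without ever leaving $W_i^k$}: if the payoff $\leq z_i^k$ is obtained at a terminal $t$, then $t \in W_i^k$ and $t$ is reached with positive probability while staying in $W_i^k$; if it is obtained by reaching no terminal, the play reaches with positive probability a vertex from which no terminal is accessible in $(V,E_k)$, and everything accessible from there lies in $W_i^k$. Second --- and this is the key structural fact --- after step $k$ the set $W_i^k$ is made inaccessible from $v_0$ in $(V,E_{k+1})$, so its vertices never belong to any later accessible set $A^\ell$, and hence none of their outgoing edges are ever removed: $E_n \cap (W_i^k \times V) = E_k \cap (W_i^k \times V)$. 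The witness event therefore survives verbatim under $(\bsigma^{E_n}_{-i}, \sigma'_i)$, with no offending edges to repair. Your first paragraph (extracting the witness from the definition of $W_i^k$ and using minimality of $k$ to ensure $\sigma'_i$ stays within $E_k$) matches the paper; to complete the proof you should replace the rerouting induction with the freeze-of-$W_i^k$ argument.
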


\begin{claimproof}
    We proceed by proving that when the players follow, from the vertex $v$, the strategy profile $(\bsigma^{E_k}, \sigma'_i)$, there is a positive probability that player $i$ gets the payoff $z_i^k$ or less \emph{and} that the set $W_i^k$ is never left.

    Indeed, if in that strategy profile there is a positive probability that player $i$ gets a payoff smaller than or equal to $z_i^k$ by reaching a terminal vertex $t$ that yields such a payoff, then we have $t \in W_i^k$, and with positive probability the terminal vertex $t$ is reached without leaving $W_i^k$.
    Similarly, if such a payoff is obtained by reaching no terminal vertex, and therefore getting payoff $0$, then, using a reasoning that has already been used above, with positive probability a vertex $w$ is reached without leaving $W_i^k$, such that from $w$, no terminal vertex is accessible anymore in $(V, E_k)$; and, therefore, all the vertices accessible from $w$ in that graph belong to $W_i^k$, hence once $w$ is reached it is almost sure that $W_i^k$ is never left.

    Then, the set $E_{k+1}$ was defined so that $W_i^k$ is no longer accessible from $v_0$ in the graph $(V, E_{k+1})$.
    Therefore, those vertices are not accessible at any step $\l > k$, and therefore no outgoing edge of a vertex of $W_i^k$ is ever removed in the sequel, i.e. $E_n \cap (W_i^k \times V) = E_k \cap (W_i^k \times V)$.
    Consequently, since $\sigma'_i$ uses only edges of $E_k$, when the strategy profile $(\bsigma^n, \sigma'_i)$ is played from $v$, it is also true that with positive probability player $i$ gets the payoff $z_i^k$, or less.
\end{claimproof}
    
    This claim proves that we have $\X_i(\bsigma^{E_n}_{-i}, \sigma'_i) \leq z_i^k$.
    To conclude that the deviation $\bsigma'_i$ is not profitable, we still need to prove that that quantity $z_i^k$ is smaller than or equal to (actually strictly smaller) the risk measure $\X_i(\bsigma^n)$.
    That inequality is an immediate consequence of the following claim.

\begin{claim}
    For every pessimistic player $j$, the sequence $(z_j^\l)$ of player $j$'s risk measures is non-decreasing.
\end{claim}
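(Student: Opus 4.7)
The plan is to exploit the fact that the algorithm only ever \emph{removes} edges, so that the supports of the successive stationary profiles $\bsigma^{E_\ell}$ are monotonically shrinking. First I would observe that by construction $E_{k+1} \subseteq E_k$, and that for stochastic vertices the probability function is preserved (by \cref{inv:outgoingedges} no outgoing edges of stochastic vertices are ever removed), while for each non-stochastic vertex $v$ the support of $\bsigma^{E_{k+1}}(v)$ is $E_{k+1} \cap E(v) \subseteq E_k \cap E(v)$, which is precisely the support of $\bsigma^{E_k}(v)$.

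The second step is to pass from supports to plays. A play $\pi$ has positive probability under $\bsigma^{E_\ell}$ if and only if every transition of $\pi$ out of a non-stochastic vertex lies in $E_\ell$ (the stochastic transitions always have positive probability). From the support inclusion above it follows that
\[
\{\pi \in \Plays\Game_{\|v_0} \mid \prob_{\bsigma^{E_{k+1}}}(\pi) > 0\} \;\subseteq\; \{\pi \in \Plays\Game_{\|v_0} \mid \prob_{\bsigma^{E_k}}(\pi) > 0\},
\]
and in particular the set of payoffs obtained by player $j$ with positive probability shrinks (or stays the same) when going from step $k$ to step $k+1$.

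Finally, I would invoke the definition of the pessimistic risk measure: since $\mu_j$ takes only finitely many values on $\Plays\Game_{\|v_0}$ (at most $|T|+1$ distinct values, counting the payoff $0$ assigned to non-terminating plays), the essential infimum $\X_j(\bsigma^{E_\ell}) = z_j^\ell$ is simply the minimum of the finite set of payoffs obtained with positive probability under $\bsigma^{E_\ell}$. Taking the minimum over a smaller set can only yield a value that is larger or equal, so $z_j^{k+1} \geq z_j^k$, and the sequence $(z_j^\ell)$ is non-decreasing by induction on $\ell$.

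I do not anticipate any real obstacle here: the entire argument rests on the monotonicity of supports plus the elementary fact that $\min$ is monotone under inclusion of finite sets. The only point that requires a small amount of care is the handling of non-terminating plays, but the convention $\mu(v_1 v_2 \dots) = 0$ makes the set of achievable payoffs finite and the argument goes through uniformly.
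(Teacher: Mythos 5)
There is a genuine gap. Your key step --- ``the set of payoffs obtained with positive probability shrinks when going from $E_k$ to $E_{k+1}$'' --- does not follow from the inclusion of supports, because of the payoff $0$ assigned to non-terminating plays. The inference you make is valid for payoffs realised by reaching a terminal vertex: a terminal reached with positive probability under $\bsigma^{E_{k+1}}$ is accessible in $(V,E_{k+1})\subseteq(V,E_k)$ and hence reached with positive probability under $\bsigma^{E_k}$. But the event ``no terminal is ever reached'' is not a single play; it is a union of infinite plays each of which typically has probability $0$, and its probability can \emph{increase} from $0$ to $1$ when edges are removed. (Toy example: a vertex $a$ with edges $a\to a$ and $a\to t$; randomising between both, the play $a^\omega$ has probability $0$, but after deleting $a\to t$ it has probability $1$.) So removing edges can newly introduce the payoff $0$ into the support of $\mu_j$, and since all rewards are non-negative, $0$ is the global minimum --- this would force $z_j^{k+1}=0\le z_j^k$ and could strictly \emph{decrease} the risk measure, which is exactly the opposite of what you need. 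Your statement that a play has positive probability iff all its non-stochastic transitions lie in $E_\ell$ is also false for infinite plays, which is where the conflation of ``compatible with $\bsigma^{E_\ell}$'' and ``has positive probability'' enters.

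The paper's proof spends most of its effort precisely on ruling out this case: it shows that if $z_j^{\l+1}=0$ were obtained by reaching no terminal, there would be a vertex $v$, visited infinitely often with positive probability, from which no terminal is accessible in $(V,E_{\l+1})$; letting $j'$ be the player whose edges were removed at step $\l$, any strategy of $j'$ using only edges of $E_\l$ from $v$ either never reaches a terminal (payoff $0$) or crosses a removed edge into $W_{j'}^\l$ (risk measure at most $z_{j'}^\l$), so $v\in W_{j'}^\l$ --- contradicting that $W_{j'}^\l$ was made inaccessible in $(V,E_{\l+1})$. Only after eliminating the non-termination case does the accessibility argument you give apply. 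To repair your proof you would need to supply this step; the monotonicity of $\min$ over shrinking finite sets alone does not suffice.
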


\begin{claimproof}
    Let $\l$ be a step index, and let us prove that we have $z_j^\l < z_j^{\l+1}$.
    The quantity $z_j^{\l+1}$ is the pessimistic risk measure of player $j$ in the strategy profile $\bsigma^{E_{\l+1}}$: there is therefore a positive probability that player $j$ gets the payoff $z_j^{\l+1}$ when that strategy profile is followed.
    Player $j$ obtains that payoff either by reaching a terminal vertex to which that payoff is assigned, or by reaching no terminal vertex at all.

    Let us first show that the second case is actually impossible.
    If player $j$ gets the payoff $z_j^{\l+1} = 0$ by reaching no terminal vertex, with the same reasoning as above, there is, in particular, a vertex $v$ that has positive probability to be visited infinitely often when $\bsigma^{\l+1}$ is played from $v_0$, and therefore such that no terminal vertex is accessible from $v$ in $(V, E_{\l+1})$.
    But then, let $j'$ be the player that was controlling the edges that were removed at step $\l$.
    Let us consider a strategy $\tau_{j'}$ of player $j'$ that uses only edges of $E_\l$.
    Then, when the strategy profile $(\bsigma^\l_{-j'}, \tau_{j'})$ is played from the vertex $v$, it will almost surely be true that either no terminal vertex is reached, leading to the payoff $0$, or an edge of $E_\l \setminus E_{\l+1}$ is taken, leading therefore to a vertex of $W_i^\l$, and to a risk measure of $z_{j'}^\l$ or less.
    Thus, since all rewards are non-negative and therefore $z_{j'}^\l \geq 0$, the vertex $v$ belongs to the set $W_{j'}^\l$, which is impossible since it should then have been made unaccessible in the graph $(V, E_{\l+1})$.

    Therefore, player $j$ gets payoff $z_j^{\l+1}$ by reaching a terminal giving them that payoff, which means that such a terminal is accessible from $v_0$ in the graph $(V, E_{\l+1})$.
    Then, it is also accessible from $v_0$ in the graph $(V, E_\l)$, and therefore it is reached with positive probability when following the strategy profile $\bsigma^{E_\l}$.
    Consequently, we have $z_j^\l \leq z_j^{\l+1}$.
\end{claimproof}

    Consequently, with $j = i$, we have $z_i^k \leq z_i^n$, and therefore $\X_i(\bsigma^{E_n}_{-i}, \sigma'_i) \leq z_i^k \leq z_i^n = \X_i(\bsigma^{E_n})$.
    The strategy $\sigma'_i$ is not a profitable deviation from the strategy profile $\bsigma^{E_n}$, which is therefore a (stationary) XRSE.
\end{claimproof}

    \paragraph*{Complexity}

    We finally show that \cref{algo:existence} runs with time $\Oh(m^2p)$.
    At each iteration of the while loop, at least one edge is removed; we therefore have at most $m$ iterations of the while loop.

    Now, during the $k^\text{th}$ iteration of the loop, the algorithm computes the set $A^k$, and for each pessimistic player $i$, the algorithm also computes the quantity $z_i^k = \X_i(\bsigma^{E_k})$, and then the set $W^k_i$.
    All of those computations can be done in time $\Oh(m)$ using \cref{lm:secretlemma}. Since there are $p$ players, this step therefore takes $\Oh(mp)$ time.
    Finally, checking whether $v_0 \in W^k_i$ for each player $i$ takes time $\Oh(p)$, and removing all the edges leading to $W_i^k$ to define the set $E_{k+1}$ takes time $\Oh(m)$.
    Hence, the complexity of the algorithm is $\Oh(m^2 p)$.
\end{proof}
\section{Appendix for \cref{sec:NPComplete}}\label{appendix:NPComplete}

\subsection{Proof of \cref{thm:memorysmall}}\label{app:memorysmall}



\memorysmall*

\begin{proof}[Proof of \cref{thm:memorysmall}]
We first define the anchoring set of players $\Lambda$ given a strategy profile~$\bsigma$.
\subparagraph*{Definition of $\Lambda$.}

\finiteMemAbstraction*

\begin{claimproof}
    We define the labelling $\Lambda$ inductively. 

    \subparagraph*{Base case.}
    First, on the one-vertex history $v_0$, we define $\Lambda(v_0) = \Pi$: at the start, all players must be anchored.
    Let us notice that the history $v_0$ satisfies Property~\ref{itm:optimistanchor}, which states that the optimists get the optimistic expectation they are supposed to get, and Property~\ref{itm:pessimistanchor}, which states that the pessimists have no profitable deviations.
    The other properties will be checked in the inductive case.

    \subparagraph*{Inductive case.}
    Suppose $\Lambda(hv)$ has already been defined, where $hv$ is a history compatible with the strategy profile $\bsigma$, and that the five properties are satisfied by $\Lambda$ on all histories on which it is already defined.
    Let $w_1, \dots, w_k$ be the successors of $v$ that are chosen by the strategy $\bsigma(hv)$ with non-zero probability, that is, the support of $\bsigma(hv)$.
    If $k=1$, then we define $\Lambda(hvw_1) = \Lambda(hv)$.
    Note that Properties~\ref{itm:splitsetsanchorwithouti},~\ref{itm:splitsetsanchorwithi}, and~\ref{itm:nosplit} are immediately satisfied, and that Properties~\ref{itm:optimistanchor} and~\ref{itm:pessimistanchor} are satisfied by induction hypothesis.
    
    If $k>1$, we need to partition the set $\Lambda(h)$ between the $k$ successors.
    To do so, we will use the following claim.
    
    \begin{claim}\label{claim:successorAnchor}
    For each player $i \in \Lambda(hv)$ that does not control the vertex $v$, the follwing holds.
    \begin{itemize}
        \item If player $i$ is an optimist, and $\X_i(\bsigma_{\|hv}) = z_i$, then there is at least one successor $w_\l$ such that $\X_i(\bsigma_{\|hvw_\l}) = z_i$.
    
        \item If player $i$ is a pessimist, then there is a successor $w_\ell \in \Supp(\bsigma(hv))$, such that for every  strategy $\tau_i^\ell$ from $w_\l$, we have $\X_i(\bsigma_{-i\|hvw_\ell}, \tau_i) \leq z_i$.
    \end{itemize}
    \end{claim}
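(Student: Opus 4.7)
The plan is to handle the two bullets separately; both rely on the same two simple observations. First, since $hv$ is compatible with $\bsigma$, it is reached with strictly positive probability under $\bsigma$; in particular, for each $\ell$ the extended history $hvw_\ell$ also has $\prob_\bsigma(hvw_\ell) > 0$, so any event of positive probability from $hvw_\ell$ has positive probability from $v_0$ as well. Second, since $\mu_i$ takes only finitely many values (one per terminal vertex, plus possibly $0$), both $\oexp$ and $\pexp$ are attained: $\oexp_i(\bsigma_{\|hv}) = \max\{x : \prob_{\bsigma_{\|hv}}(\mu_i = x) > 0\}$ and similarly $\pexp_i$ is the infimum value of its support.

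For the optimist bullet, I would invoke the inductive hypothesis (Property~\ref{itm:optimistanchor} at $hv$), which yields $\X_i(\bsigma_{\|hv}) = z_i$ and hence $\prob_{\bsigma_{\|hv}}(\mu_i = z_i) > 0$. Decomposing this probability over the first step from $v$ gives
\[
\prob_{\bsigma_{\|hv}}(\mu_i = z_i) \;=\; \sum_{\ell=1}^{k} \bsigma(hv)(w_\ell) \cdot \prob_{\bsigma_{\|hvw_\ell}}(\mu_i = z_i),
\]
so some $w_\ell$ satisfies $\prob_{\bsigma_{\|hvw_\ell}}(\mu_i = z_i) > 0$, which gives $\X_i(\bsigma_{\|hvw_\ell}) \geq z_i$. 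The matching upper bound I would derive by contradiction: if some $w_{\ell'}$ gave $\X_i(\bsigma_{\|hvw_{\ell'}}) > z_i$, then a payoff strictly above $z_i$ is attained with positive probability from $hvw_{\ell'}$, and combining with $\prob_\bsigma(hvw_{\ell'}) > 0$ forces $\X_i(\bsigma) > z_i$, contradicting $z_i = \X_i(\bsigma)$.

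For the pessimist bullet, I would argue by contradiction against Property~\ref{itm:pessimistanchor} at $hv$. Assume every $w_\ell$ admits some $\tau_i^\ell \in \Strat_i\Game_{\|w_\ell}$ with $\X_i(\bsigma_{-i\|hvw_\ell}, \tau_i^\ell) > z_i$. Since $\pexp$ is attained, each such deviation comes with a value $x_\ell > z_i$ satisfying $\prob_{(\bsigma_{-i\|hvw_\ell}, \tau_i^\ell)}(\mu_i \geq x_\ell) = 1$; because there are only finitely many successors, $x^\star := \min_\ell x_\ell$ still satisfies $x^\star > z_i$. Splicing the $\tau_i^\ell$ into a single strategy $\tau_i \in \Strat_i\Game_{\|v}$ that, on any history of the form $vw_\ell h''$, follows $\tau_i^\ell$ on $w_\ell h''$, I get
\[
\prob_{(\bsigma_{-i\|hv}, \tau_i)}(\mu_i \geq x^\star) \;=\; \sum_{\ell=1}^{k} \bsigma(hv)(w_\ell) \cdot 1 \;=\; 1,
\]
and therefore $\X_i(\bsigma_{-i\|hv}, \tau_i) \geq x^\star > z_i$, contradicting Property~\ref{itm:pessimistanchor}. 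Hence some $w_\ell$ must admit no such improving deviation, which is exactly the stated conclusion.

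I do not expect a serious obstacle: the claim is really a one-step inductive propagation. The only points requiring care are the attainment of $\oexp$ and $\pexp$ on finitely-supported variables (so that a strict inequality can be witnessed by a uniform threshold $x^\star$) and the well-definedness of the spliced $\tau_i$, which is unproblematic precisely because the hypothesis $v \notin V_i$ means that $\tau_i$ needs to be defined only on histories already prefixed by some $vw_\ell$, and the first move from $v$ is determined by $\bsigma_{-i\|hv}$ alone.
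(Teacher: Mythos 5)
Your proof is correct and follows essentially the same route as the paper's: the optimist bullet is derived from Property~\ref{itm:optimistanchor} of the induction hypothesis, and the pessimist bullet by contradiction against Property~\ref{itm:pessimistanchor}, splicing the per-successor deviations $\tau_i^\ell$ into a single strategy from $v$. The paper's version is much terser; your added care about attainment of the essential infimum/supremum on finitely-supported payoffs and the uniform threshold $x^\star$ fills in details the paper leaves implicit, and is welcome.
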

    
    \begin{claimproof}
        The first case follows from Property~\ref{itm:optimistanchor} in the induction hypothesis.
    
        As for the second case, we proceed by contradiction.
        Let us assume that for each $w_\l$, there exists a strategy $\tau_i^\l$ such that $\X_i(\bsigma_{-i\|hvw_\l}, \tau^\l_i) > z_i$.
        Then, the strategy $\tau_i$ defined by $\tau_{i\|vw_\l} = \tau^\l$ for each $\l$ is such that $\X_i(\bsigma_{-i\|hv}, \tau_j) > z_i$, which is impossible since $\Lambda(hv)$ is assumed to satisfy Property~\ref{itm:pessimistanchor}.
    \end{claimproof}

    We define each set $\Lambda(hvw_\l)$ by iterating through each element of $\Lambda(hv)$ as follows:
    \begin{itemize}
        \item \textbf{Initialisation.} For all $w_\l$, declare $\Lambda(hvw) = \emptyset$.
        \item \textbf{Iteration over players.} Consider each player $i\in \Lambda(hv)$ sequentially and proceed as follows:
        If $i$ controls the vertex $v$, then add $i$to every set $\Lambda(hvw)$.
        If $i$ does not control $v$, then, by the claim stated earlier, there exists a successor $w_\l$. In this case, add ii to $\Lambda(hvw_\l)$ corresponding to this specific $w_\l)$
    \end{itemize}
    
    Not that the first four properties are thus guaranteed to be satisfied.

    Moreover, when there are several successors $w_\l$ possible, we always favour those such that, at the moment where the decision is taken, the sets $\Lambda(hvw_\l)$ are the smallest.
    This suffices to guarantee Property~\ref{itm:nosplit}.
    \end{claimproof}

\paragraph*{Construction of the strategy profile $\bsigma^\star$}
Based on $\Lambda$ as in \cref{lm:Lambda}, we construct a strategy profile that finite memory states. 
Formally, the definition of the finite-memory strategy $\bsigma^\star$ will be done by defining its memory structure.

The memory states are the following:
    \begin{itemize}    
        \item for each player $i$, the state $\punish_i$;

        \item for each vertex $v$ and each subset $A \subseteq \Pi$ of players such that there exists $h$ with $A = \Lambda(h)$, the state $\anchor_{Av}$;

        \item the state $\anchor_{\Pi\bot}$.
    \end{itemize}
Observe that there are at most $2^p + \Oh(p)$ such memory states.
We now define the transitions from each of those states from each vertex. Observe that long as the memory state does not change, strategy profile corresponds to a positional strategy profile. So, we describe such memoryless strategy profiles and also describe when the memory state changes. 
For each set $A \subseteq \Pi$, we write $W_A$ for the set of vertices that $\bsigma$ may visit while anchoring the set $A$, i.e., the set of vertices $v$ such that there exists a history $hv$ with $\Lambda(hv) = A$.

    \subparagraph*{Punishing memory states $\punish_i$.}
    First, let us define what $\bsigma^\star$ does when in state $\punish_i$, for some player $i$. Those memory states will correspond to the \emph{punishing strategies}, followed when player $i$ deviates from the assigned strategy with the other memory states. 
    By \cref{lm:secretlemma}, there is a positional strategy profile $\btau^{\dag i}_{-i}$ that minimises, from every vertex of the game, the payoff that player $i$ can enforce.
    In addition, we pick an arbitrary positional strategy $\sigma^{\dag i}_i$.
    Then, when the strategy profile $\bsigma^\star$ is in the memory state $\punish_i$ on reads a vertex $v$, the memory update function outputs the vertex $\btau^{\dag i}(v)$ and the same memory state $\punish_i$.

    \subparagraph*{Anchoring states with no player to anchor.}
    Let us now define what happens in memory state $\anchor_{\emptyset v}$.
    Consider the objective of achieving a payoff vector that has positive probability to be achieved in $\bsigma$, while visiting only vertices of $W_\emptyset$.
    Using classical attractor-based proofs (or \cref{lm:secretlemma}), there exists a positional strategy profile that achieves that objective with probability $1$ from every vertex from which that is possible: let us call it $\btau^{\anch \emptyset}$.
    Then, when in memory state $\anchor_{\emptyset v}$ and reading the vertex $w$, we distinguish two cases.
    \begin{itemize}
        \item If $vw$ is an edge that is compatible with the strategy profile $\btau^{\anch \emptyset}$, then the strategy profile $\bsigma^\star$ outputs the vertex $\btau^{\dag i}(w)$, where $i$ is the player controlling $v$, and shifts to the memory state $\punish_i$.

        \item Otherwise, it outputs the vertex $\btau^{\anch \emptyset}(w)$ and moves to the memory state $\anchor_{\emptyset w}$.
    \end{itemize}

    \subparagraph*{Anchoring state with one player to anchor.}
    We can now move to singletons, and define what happens in the states of the form $\anchor_{\{i\} v}$.
    In such a state, we define the strategy based on the that gives player $i$ exactly the extreme risk measure $z_i$.
    More precisely, we want player $i$ to receive payoff $z_i$ with positive probability, and never leave the set of vertices $W_{\{i\}}$ with probability~$1$.
    Using \cref{lm:secretlemma}, there exists a positional strategy profile $\btau^{\anch i}$ that satisfies that property from every vertex from which it is possible.
    Note that that objective is in particular satisfiable, and therefore satisfied by $\btau^{\anch i}$, from every vertex $v \in W_{\{i\}}$.
    Similar to the previous step, we define the strategy profile $\bsigma^\star$ in the states of the form $\anchor_{\{i\} v}$ so that it follows the strategy profile $\btau^{\anch i}$, remembers the last vertex that was visited and uses that memory to switch to the corresponding punishing state when some player $j$ deviates.

        \subparagraph*{Anchoring states with two or more players to anchor.}
    Now, let us consider the states of the form $\anchor_{A v}$, where $A$ has cardinality at least $2$.
    The existence of $\anchor_{A v}$ implies that there is a history $h$ such that $\Lambda(h) = A$.
    Moreover, since each randomisation splits the label of histories in sets that have at most one element in common (Properties~\ref{itm:splitsetsanchorwithouti} and~\ref{itm:splitsetsanchorwithi}), there is only one side of each split that can contain $A$, which implies that among such histories $h$, we can choose one that is a prefix of all others.
    After history $h$, the histories labelled by $A$ form a sequence $h, hv_1, hv_1v_2, \dots$ which may be infinite, end in a terminal vertex, or end with a new split.

    \textbf{If that sequence end with a split,} then there is a longest history $hv_1 \dots v_q$ with $\Lambda(hv_1 \dots v_q) = A$ and $k \geq 2$ vertices $w_1, \dots, w_k \in \Supp(\bsigma(hv_1 \dots v_q))$ such that we have $\Lambda(hv_1 \dots v_q w_\l) \neq \emptyset, A$ for each $\l$.
    We can then define a simple history $h'v_q$ that also goes from the vertex $\last(h)$ to the vertex $v_q$, with $\Occ(h'v_q) \subseteq \Occ(\last(h) v_1 \dots v_q)$.
    We then define the strategy profile $\bsigma^\star$ in each state $\anchor_{Av}$ so that it follows the history $h' v_q$ and remembers the last vertex visited, and switches to the state $\punish_i$ and follows the strategy profile $\btau^{\dag i}$ when a given player $i$ deviates and takes an edge that they are not supposed to take.
    Moreover, when an edge is taken that does belong to the history $h'v_q$, but not because a player deviated (it is then necessarily because of a stochastic vertex), the memory switches to the state $\anchor_{\emptyset v}$ (where $v$ is the last vertex seen) and immediately follows the corresponding strategy.
    Finally, when the vertex $v_q$ is reached and the memory is in state $\anchor_{A \last(h')}$, the strategy profile $\bsigma^\star$ chooses randomly between the edges $v_q w_1$, \dots, and $v_q w_k$, all with positive probability.
    Such action will often be referred to as a \emph{split}.

        \textbf{If that sequence is infinite or end in a terminal vertex,} then $\pi^A = v_1 v_2 \dots$ is a play, and satisfies $\Lambda(h\pi^A_{< k}) = A$ for each $k$.
        We can then consider a play $\pi^{A\star}$ with $\Occ(\pi^{A\star}) \subseteq \Occ(\pi^A)$ and $\Inf(\pi^{A\star}) \subseteq \Inf(\pi^A)$ that is either a simple path from $\pi^A_0$ to the terminal reached by $\pi^A$, or, if $\pi^A$ is infinite, a simple lasso (i.e., a play of the form $h'c^\omega$, where the history $h'c$ is simple).
        We can moreover choose $\pi^{A\star}$ so that the set of vertices visited infinitely often (if there are any) in $\pi^{A\star}$ is included in the set of vertices visited infinitely often in $\pi^A$.
Then, we can define $\bsigma^\star$ in the states of the form $\anchor_{A v}$ as following the play $\pi^{A\star}$, and remembering the last vertex seen.
    When a player $i$ deviates and takes an edge that should not be taken, the memory switches to the state $\punish_i$ and follows the strategy profile $\btau^{\dag i}$.
    Finally, when an edge is taken that does not belong to $\pi^{A\star}$ but does not correspond to a deviation either, we switch to the state $\anchor_{\emptyset w}$ where $w$ is the last vertex seen, and to the corresponding strategy profile.

\subparagraph*{Initialisation.}
    The strategy profile $\bsigma^\star$  has the state $\anchor_{\Pi\bot}$ as the initial memory state.
    In this state, it behaves exactly as in any state of the form $\anchor_{\Pi v}$, but without having memorised a last visited vertex $v$, since there is no such vertex.
    From that memory state therefore, it necessarily reads the vertex $v_0$, and starts acting as described in the previous case.

\subparagraph*{The pure case.}
In this construction, the vertices on which the strategy profile $\bsigma^\star$ proceeds to an actual randomisation (i.e., the vertices $v$ such that there exists a history $hv$ such that the support of the distribution $\bsigma^\star(hv)$ contains more than one element) are vertices on which $\bsigma$ also proceeds to such a randomisation.
Therefore, if $\bsigma$ is pure (i.e., if randomisations occur only on stochastic vertices), so is $\bsigma^\star$.

\paragraph*{A combinatorial break: counting states}

    Now that the strategy $\bsigma^\star$ is defined, let us bound the memory it uses.
    There are, obviously, exactly $p$ states of the form $\punish_i$, and one state $\anchor_{\Pi\bot}$.
    To prove that there are at most $3np-2n$ states of the form $\anchor_{Av}$, we need to prove that there are at most $3p-2$ sets $A$ such that there is a history $h$ with $\Lambda(h) = A$.

    Let us call \emph{$\Lambda$-anchored} all such sets $A$.
    By analogy with strategies, we write $\Lambda_{\|hv}$ for the labelling that maps each history $h' \in \Hist\Game_{\|v}$ compatible with $\bsigma_{\|hv}$ to the set $\Lambda(hh')$, and we will also use the notion of anchoredness for each of those labellings $\Lambda_{\|hv}$.
    We proceed by proving the following stronger result.
    
    \begin{proposition}\label{prop:combinatorial}
        For every history $h$ compatible with $\bsigma$, if $\Lambda(h)$ contains at least two elements, then there are at most $3|\Lambda(h)|-2$ sets that are $\Lambda_{\|h}$-anchored.
    \end{proposition}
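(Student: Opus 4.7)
The plan is to prove this by strong induction on $n = |A|$, where $A = \Lambda(h)$. The base case $n = 2$ is immediate: every anchored label must be a subset of $A$, so $|\text{Lab}(h)| \leq |2^A| = 4 = 3\cdot 2 - 2$, where $\text{Lab}(h)$ denotes the set of labels anchored in the subtree of $h$.

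For the inductive step with $n \geq 3$, I would first analyse the \emph{splits} among the histories $hh'$ with $\Lambda(hh') = A$. At any such history with $k \geq 2$ successors, Properties~\ref{itm:splitsetsanchorwithouti} and~\ref{itm:splitsetsanchorwithi} force one of two shapes: a \emph{partition split}, where the child labels partition $A$ (possibly with empty parts), or a \emph{player-$i$ split} at a vertex controlled by some $i \in A$, where all child labels contain $i$ and the sets $\{B_\ell \setminus \{i\}\}_\ell$ partition $A \setminus \{i\}$. A short preliminary step, combining Property~\ref{itm:nosplit} with Properties~\ref{itm:optimistanchor} and~\ref{itm:pessimistanchor}, would show that no child can retain the label $A$ except in the \emph{degenerate} partition case where one child inherits $A$ and all other children are labelled $\emptyset$.

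Let $\mathcal{D}$ denote the set of distinct non-$A$ child labels appearing at any split in the subtree. Every anchored label other than $A$ lies in $\text{Lab}(B)$ for some $B \in \mathcal{D}$ (where $\text{Lab}(B)$ is the set of anchored labels in a subtree rooted at some $B$-labelled history), so $\text{Lab}(h) = \{A\} \cup \bigcup_{B \in \mathcal{D}} \text{Lab}(B)$. The induction hypothesis bounds each $|\text{Lab}(B)|$ by $3|B| - 2$ if $|B| \geq 2$, by $2$ if $|B| = 1$, and by $1$ if $B = \emptyset$. The main obstacle is to control the overlaps in this union sharply enough to stay below $3n - 2$: in a partition split, the sets $\text{Lab}(B_\ell)$ for distinct $\ell$ share only $\emptyset$, but in a player-$i$ split they can share both $\emptyset$ and $\{i\}$.

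To absorb the extra overlap produced by player-$i$ splits, I plan to prove simultaneously a strengthened statement: for any $i \in A$, $|\text{Lab}(h) \setminus \{\emptyset, \{i\}\}| \leq 3n - 4$. The main bound then follows because $|\text{Lab}(h)| \leq |\text{Lab}(h) \setminus \{\emptyset, \{i\}\}| + 2$. Establishing this strengthened bound is the heart of the argument: writing $a$ for the number of child labels of size at least $2$ and $b$ for those of size exactly $1$, and combining the partition identities $\sum_\ell |B_\ell| = n$ (partition case) or $\sum_\ell (|B_\ell| - 1) = n - 1$ (player-$i$ case) with the constraint that all children are proper subsets of $A$, I expect to derive inequalities on $a$ and $b$ that exactly match the target $3n - 4$. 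The case of multiple splits within the $A$-labelled sub-tree is absorbed into the single-split analysis by letting $\mathcal{D}$ collect all of their child labels at once, since the pairwise overlap of any two $\text{Lab}(B)$ and $\text{Lab}(B')$ is still controlled by $\{\emptyset, \{j\}\}$ for an appropriate $j$.
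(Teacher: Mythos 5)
Your structural analysis is sound and matches the paper's: the $\Lambda(h)$-labelled histories form a chain, the no-split property forces every side-branch of that chain to carry the empty label, and the final genuine split is either a partition split or a player-$i$ split whose children pairwise intersect exactly in $\{i\}$. The gap is in the combinatorial bookkeeping. Your strengthened invariant --- $|\mathrm{Lab}(h)\setminus\{\emptyset,\{i\}\}|\leq 3n-4$ for every $i\in\Lambda(h)$ --- is a true statement, but it is not inductively self-sustaining. Consider a player-$i$ split with exactly two children of size at least two, $B_1\cup B_2=A$ and $B_1\cap B_2=\{i\}$, and take $j\in B_1\setminus\{i\}$. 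To bound $|\mathrm{Lab}(h)\setminus\{\emptyset,\{j\}\}|$ you must invoke the hypothesis on $B_1$ with excluded singleton $\{j\}$ and on $B_2$ with excluded singleton $\{i\}$; the label $\{i\}$ may then still be contributed by $\mathrm{Lab}(B_1)$ and must be counted once more. The resulting bound is $1+(3|B_1|-4)+(3|B_2|-4)+1=3n-3$, one more than the target $3n-4$ (already for $n=3$, $B_1=\{i,j\}$, $B_2=\{i,k\}$, the estimate is not tight by exactly one unit). Repairing this would require an inductive hypothesis excluding \emph{two} singletons from $\mathrm{Lab}(B_1)$, which in turn requires three at the next level, and so on.

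The paper escapes this escalation by strengthening all the way: it counts only the anchored sets of cardinality at least $2$, writing $f(h)$ for their number. For such sets, disjointness across the children of a split is automatic, since two distinct children share at most the single element $i$ and hence cannot both anchor a common set of size at least $2$. This gives the exact recursion $f(h)=1+\sum_{\l}f(hv_\l)$ over the children with labels of size at least $2$, and combined with $\sum_{\l}|\Lambda(hv_\l)|\leq|\Lambda(h)|+k-1$ one obtains $f(h)\leq 2|\Lambda(h)|-1-k\leq 2|\Lambda(h)|-3$ when there are $k\geq 2$ such children (with $k\geq 2$ forced, as you observe, by the properness of the children). The empty set and the at most $|\Lambda(h)|$ singletons are then added back wholesale, yielding $3|\Lambda(h)|-2$. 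If you replace your invariant by this count of large anchored sets, the rest of your argument --- the degenerate-split reduction via the no-split property, the case distinction between partition and player-$i$ splits, and the base case --- goes through essentially unchanged.
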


\begin{claimproof}
    For each history $h$, we write $f(h)$ for the number of $\Lambda_{\|h}$-anchored sets that have cardinal at least $2$.
    There are $|\Lambda(h)|+1$ subsets of $\Lambda(h)$ that have cardinality $0$ or $1$: the result will therefore be proved if we prove $f(h) \leq 2|\Lambda(h)| - 3$.
    The proof goes by induction on $m = \Lambda(h) \geq 2$.

    \subparagraph*{Base case.}
    If $m = 2$, the set $\Lambda(h)$ is a pair $\Lambda(h) = \{i, j\}$.
    Then, since the $\Lambda_{\|h}$-anchored sets are all subsets of $\Lambda(h)$, the only set of cardinality at most $2$ that is $\Lambda_{\|h}$-anchored is the pair $\{i, j\}$ itself, hence we have $f(h) = 2 \times 2 - 3 = 1$, as desired.

    \subparagraph*{Inductive case.}
    If $m > 2$, and if we assume that the result is true for every history $h'$ with $2 \leq |\Lambda(h')| \leq m-1$, then let $\{v_1, \dots, v_k\} \subseteq \Supp(\bsigma(h))$ be the set of possible next vertices $v$ such that $|\Lambda(hv)| \geq 2$.

    If $k = 1$, i.e., if $\Lambda(hv_1) = \Lambda(h)$, then we have $f(h) = f(hv_\l)$ and the result for $h$ will be proved if we prove it for $hv_1$.
    Following that reasoning, we can extend the history $h$ until we are not in that case: if we always are, then the only $\Lambda_{\|h}$-anchored sets are $\Lambda(h)$ itself, and possibly the empty sets and some singletons, hence $f(h) = 1$ and the result is immediate.
    We can therefore assume that $k > 1$.

    Let $i$ be the player controlling the vertex $\last(h)$; we set $i = \bot$ if $\last(h)$ is a stochastic node.
    Then, by Properties~\ref{itm:splitsetsanchorwithouti} and~\ref{itm:splitsetsanchorwithi} of \cref{lm:Lambda}  guaranteed during the construction of $\Lambda$, the sets $\Lambda(hv_1) \setminus \{i\}, \dots, \Lambda(hv_k) \setminus \{i\}$ form a partition of $\Lambda(h) \setminus \{i\}$.
    Therefore, no set of cardinality at least $2$ can be simultaneously $\Lambda(hv_\l)$-anchored and $\Lambda(hv_{\l'})$-anchored for $\l \neq \l'$, hence the equality $f(h) = 1+\sum_\l f(hv_\l)$.
    Now, since each set $\Lambda(hv_\l)$ has at least $2$ and less than $m$ elements, we can apply the induction hypothesis to deduce:
    $$f(h) \leq 1 + \sum_{\l=1}^k (2|\Lambda(hv_\l)| - 3).$$
    Moreover, we have $\sum_\l |\Lambda(hv_\l)| \leq m + k - 1$ (each element of $\Lambda(h)$ occurs in one of the sets $\Lambda(hv_\l)$, except possibly one that would occur in all of them), hence the inequality above becomes:
    $$f(h) \leq 1 + 2(m+k-1) -3k$$
    $$= 2m - 1 - k$$
    and since we have assumed $k \geq 2$, we obtain $f(h) \leq 2m-3$.
\end{claimproof}    

Let us recall that $\Lambda(v_0) = \Pi$.
As a particular case of this claim, we obtain, if $p \geq 2$, that there are at most $3p-2$ sets that are $\Lambda$-anchored, as desired.
In the case $p = 1$, the game $\Game_{\|v_0}$ is an MDP, and using \cref{lm:secretlemma}, we can immediately construct $\bsigma^\star$ as a positional strategy (which has therefore $1 \leq 3n \times 1 - 2n + 1 + 1$ memory states) with the same risk measure as $\bsigma$.

    \paragraph*{The strategy profile $\bsigma^\star$ has the desired extreme risk measures.}

We now show that $\X(\bsigma^\star) = \X(\bsigma)$.
Let us recall that we defined $\bz = \X(\bsigma)$.
    
\begin{proposition}\label{prop:ActualPayoff}
    The strategy profile $\bsigma^\star$ satisfies the equality $\X(\bsigma^\star) = \bz$.
\end{proposition}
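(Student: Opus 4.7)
The plan is to derive $\X(\bsigma^\star) = \bz$ player-by-player from a single invariant on the memory states of $\bsigma^\star$, together with a tracking lemma that pairs each reachable memory state with a history witness in $\bsigma$. Since no player deviates under $\bsigma^\star$, the punishing states are unreachable, so we only need to analyse the anchoring memory states $\anchor_{Av}$ (and the initial state $\anchor_{\Pi\bot}$).

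First, I would prove a \emph{tracking lemma}: whenever the memory state $\anchor_{Av}$ is reached with positive probability along a play of $\bsigma^\star$, there exists a history $h$ compatible with $\bsigma$ with $\Lambda(h) = A$ and $\last(h) = v$. The argument is a straightforward induction on the length of the play, following the case analysis of the construction. At the root, $\bsigma^\star$ starts in $\anchor_{\Pi \bot}$ which corresponds to the history $v_0$ with $\Lambda(v_0) = \Pi$. When $|A|\ge 2$, $\bsigma^\star$ either follows an extracted simple lasso/path $\pi^{A\star}$ taken from a genuine $A$-labelled sequence $\pi^A$ of $\bsigma$, or an extracted simple history $h' v_q$ that ends with an imported split of $\bsigma$; in both cases the inclusion of occurrences $\Occ(\pi^{A\star}) \subseteq \Occ(\pi^A)$ and the splits being imported from $\bsigma$ give the required witness in $\bsigma$. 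A stochastic mismatch sends $\bsigma^\star$ to a state $\anchor_{\emptyset w}$, and the witness is obtained from the corresponding successor in $\bsigma$ where $\Lambda = \emptyset$.

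Second, I would establish the following \emph{anchoring invariant}, proved by strong induction on $|A|$: from any state $\anchor_{Av}$ reachable with positive probability under $\bsigma^\star$, the continuation satisfies (i) for each optimist $i \in A$, $\oexp_i$ of the continuation equals $z_i$; and (ii) for each pessimist $i \in A$, $\pexp_i$ of the continuation equals $z_i$. The case $|A| = 1$ uses that $\btau^{\anch i}$ enforces, with probability $1$, never leaving $W_{\{i\}}$ while reaching payoff $z_i$ with positive probability; combined with Properties~\ref{itm:optimistanchor} and~\ref{itm:pessimistanchor} of $\Lambda$, this yields both directions. The case $A = \emptyset$ uses that $\btau^{\anch \emptyset}$ realises a payoff vector that $\bsigma$ already achieves with positive probability along plays labelled by $\emptyset$, so no value-constraint of the players is violated. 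For $|A| \geq 2$, $\bsigma^\star$ follows $\pi^{A\star}$ or $h' v_q$ and then either reaches a terminal of $\pi^A$, enters a smaller anchor set $\Lambda(hv_1\dots v_q w_\l)$ through an imported split, or drops to an $\anchor_\emptyset$-regime after a stochastic mismatch; the inductive hypothesis applied to these strictly smaller sub-cases, together with the tracking lemma and Properties~\ref{itm:splitsetsanchorwithouti}--\ref{itm:nosplit} of $\Lambda$, yield the claim.

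The proposition then follows by specialising the anchoring invariant at $\anchor_{\Pi\bot}$, where the anchor set is the full player set. The main obstacle I expect is the upper-bound direction for optimists and the lower-bound direction for pessimists: these require arguing that replacing the actual play $\pi^A$ of $\bsigma$ by the extracted simple surrogate $\pi^{A\star}$ (respectively $h'v_q$) does not create \emph{new} terminals or new positive-probability payoffs that $\bsigma$ itself could not realise from a correspondingly anchored history. This is where the occurrence inclusions $\Occ(\pi^{A\star}) \subseteq \Occ(\pi^A)$ and $\Inf(\pi^{A\star})\subseteq\Inf(\pi^A)$ built into the construction, combined with Properties~\ref{itm:optimistanchor}--\ref{itm:nosplit} of $\Lambda$ and the deterministic nature of $\bsigma^\star$ at non-stochastic vertices outside the declared randomisations, are decisive.
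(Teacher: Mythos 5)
Your overall architecture (track anchor sets back to histories of $\bsigma$, then argue regime by regime for $A=\emptyset$, singletons, and $|A|\geq 2$) matches the paper's, but the single induction you propose does not close as stated. Your anchoring invariant at $\anchor_{Av}$ only constrains the continuation payoffs of players $i \in A$. At a split of $A$ into $A_1,\dots,A_k$, an optimist $i$ survives into (at most) one branch $A_\l$; for every other branch $A_{\l'}$ with $i \notin A_{\l'}$, your inductive hypothesis applied to $\anchor_{A_{\l'}w_{\l'}}$ says nothing about player $i$'s payoff there, and the same problem recurs every time the play drops into an $\anchor_{\emptyset w}$ regime after a stochastic mismatch. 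So the upper bound $\oexp_i \leq z_i$ for optimists (and the lower bound for pessimists) over the \emph{whole} continuation cannot be derived from the invariant you induct on. The obstacle you flag at the end --- that the surrogate $\pi^{A\star}$ must not create new terminals relative to $\pi^A$ --- is real but is not the binding one: the binding one is controlling player $i$'s payoff along branches where $i$ is no longer anchored at all.

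The missing ingredient is a \emph{containment} statement that covers all players simultaneously: every payoff vector achieved with positive probability under $\bsigma^\star$ is also achieved with positive probability under $\bsigma$ (from $v_0$). This is exactly the separate claim the paper proves in the second half of its argument, by observing that every play of $\bsigma^\star$ eventually stabilises in states $\anchor_{Av}$ for a fixed $A$ and checking the three regimes: $\btau^{\anch\emptyset}$ and $\btau^{\anch i}$ are defined precisely so as to realise only payoff vectors that $\bsigma$ realises, and for $|A|\geq 2$ the play $\pi^{A\star}$ yields the same payoff as $\pi^A$, which the paper's Claim about $\pi^A$ shows is generated with positive probability by $\bsigma$. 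Once you have this containment, the bounds for unanchored branches follow from the global fact $\X(\bsigma)=\bz$ rather than from your local invariant, and your existence direction (payoff $z_i$ reached with positive probability via the chain of splits down to a minimal anchored set containing $i$) completes the proof. You could equivalently repair your induction by strengthening the invariant to include this containment property for all players, not just those in $A$; without that strengthening, the inductive step fails.
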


\begin{claimproof}
    Let $i$ be a player: we want to prove that $\X(\bsigma^\star) = z_i$.
    Let us first see how $z_i$ has positive probability of being obtained in the strategy profile $\bsigma$, and we will then show that no larger (respectively smaller) if $i$ is optimistic (respectively pessimistic) has a positive probability by the same strategy.

    \subparagraph*{Player $i$ gets payoff $z_i$ with positive probability.}
    Let $A \subseteq \Pi$ be one of the smallest sets (for the inclusion relation) containing $i$ such that there exists a history $hu$ with $\Lambda(hu) = A$.
    Then, by construction of $\Lambda$, there exists a finite sequence of sets $\Pi = A_0, A_1, \dots, A_m = A$ and of histories $h_1 v_1 w_1, \dots, h_m v_m w_m$ where for each $k$, the history $h_{k+1}$ starts from $w_k$, the history $h_1 v_1 \dots h_k v_k w_k$ is compatible with $\bsigma$, and we have $\Lambda(h_1 v_1 \dots h_k v_k) = A_{k-1}$ and $\Lambda(h_1 v_1 \dots h_k v_k w_k) = A_k$.
    We can then write $hu = h_1 v_1 \dots h_m v_m w_m$.
    
    Consider the strategy profile $\bsigma^\star$, which initially follows  the positional strategy profile $\btau^{\anch \Pi}$. This strategy profile generates, with nonzero probability, a history $h'_1 v_1$ starting from vertex $v_0$ to vertex $v_1$, based on our construction. 
    From that vertex $v_1$, it proceeds to a randomised action and, with positive probability, moves to the vertex $w_1$ and switches to the positional strategy profile $\btau^{\anch A_1}$, and so on: there is, therefore, a history $h'_1 v_1 h'_2 v_2 \dots h'_m v_m w_m$ that is compatible with the strategy profile $\bsigma^\star$ and after which the collective memory is in the state $\anchor_{A v_m}$, and plays accordingly.

    Since $A_m= A$ is the  subset of $\Pi$ where $i\in A = \Lambda(hu)$, we are in the case where the set $A$ is no longer split further by our labelling. That is, there is a play $\pi$ from $w_m$ such that $\Lambda(h \pi_{\leq k}) = A$ for every $k$ such that that is defined.
    Then, in the construction of the strategy profile $\bsigma^\star$ we have distinguished two cases: the one where $A$ was a singleton, and the one where it had at least two elements (the empty case is excluded, since $A$ contains the player $i$).

    \textbf{If $A$ is a singleton,} then after the history $hu$, without any player deviating, all players are following the strategy profile $\btau^{\anch i}$.
    By its definition, that strategy profile achieves the payoff $z_i$ for player $i$ with positive probability.

    \textbf{If $A$ has at least two elements,} then after that same history, all players are following the play $\pi^{A\star}$, which yields the same payoffs as $\pi^A$.
    However, we must still prove that player $i$ actually gets the payoff $z_i$ in $\pi^A$, and that the play $\pi^{A\star}$ is generated with positive probability (i.e. that it does not cross infinitely many stochastic vertices---which we must first show for $\pi^A$).
We do so in the following claim, which we will use again later.

\begin{claim}\label{claim:piA}
    The play $\pi^A$ is (eventually) generated with positive probability when the players follow the strategy profile $\bsigma$.
    Similarly, the play $\pi^{A\star}$ is generated with positive probability when they follow $\bsigma^\star$.
    Both plays yield to each player $j \in A$ the payoff $z_j$.
\end{claim}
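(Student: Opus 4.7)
The plan is to prove the three assertions of the claim in sequence: that $\pi^A$ is generated with positive probability under $\bsigma$, that $\pi^{A\star}$ is generated with positive probability under $\bsigma^\star$, and that both plays yield each player $j \in A$ the payoff $z_j$.

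For the first assertion, I would begin by observing that by construction of $\pi^A$, every finite prefix $h\pi^A_{\leq k}$ satisfies $\Lambda(h\pi^A_{\leq k}) = A$, which is only defined for histories compatible with $\bsigma$; thus each edge $\pi^A_k \pi^A_{k+1}$ is taken with positive probability at the corresponding history. In the finite case (reaching a terminal), the probability of $\pi^A$ is then a finite product of positive transition probabilities, hence positive. The delicate sub-case is when $\pi^A$ is infinite, in which case it must eventually form a lasso; the only obstacle to positive probability is a stochastic vertex $v$ visited infinitely often along the cycle such that the $\pi^A$-edge out of $v$ has probability strictly less than $1$. By Property~\ref{itm:splitsetsanchorwithouti} of \cref{lm:Lambda}, at such a $v$ the label $A$ is inherited by exactly one successor while the other successors receive the empty label. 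I would rule this scenario out by invoking Property~\ref{itm:nosplit}: each non-$A$ successor would allow some pessimist $i \in A$ to deviate to a value strictly above $z_i$ (or force some optimist strictly below $z_i$). Using such a non-$A$ successor as a branching point, one can build a deviation strategy for $i$ from $h$ that accumulates, over the infinitely many visits to $v$, enough probability mass on profitable-deviation branches to contradict Property~\ref{itm:pessimistanchor} (or, in the optimist case, the global value $z_i$) at the root history $h$. Hence only finitely many stochastic vertices along $\pi^A$ admit multiple successors in $\bsigma$, and after that finite prefix each transition has probability $1$, so $\pi^A$ is generated with positive probability.

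For the second assertion, I would transfer this structural argument to $\pi^{A\star}$. Since $\Occ(\pi^{A\star}) \subseteq \Occ(\pi^A)$ and $\Inf(\pi^{A\star}) \subseteq \Inf(\pi^A)$, any stochastic vertex appearing infinitely often along $\pi^{A\star}$ with multiple successors in $\bsigma$ would already have been ruled out in the first step. Moreover, in the memory state $\anchor_{Av}$, $\bsigma^\star$ is defined to take the unique $\pi^{A\star}$-edge at every non-stochastic vertex of the lasso or simple path; so the probability that $\bsigma^\star$ follows $\pi^{A\star}$ is again a (possibly eventually constant) positive product.

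For the payoff equality, I would treat optimists and pessimists in $A$ separately. For an optimist $j \in A$, Property~\ref{itm:optimistanchor} gives $\oexp_j(\bsigma_{\|h\pi^A_{\leq k}}) = z_j$ for all $k$; since $\oexp$ equals the supremum over the support of the distribution at each vertex and Property~\ref{itm:nosplit} forces every non-$A$ successor strictly below $z_j$, a straightforward induction on $k$ identifies the $A$-branch as the one realizing the supremum, so the payoff of $\pi^A$ is exactly $z_j$ (either $\mu_j$ of the terminal in the finite case, or, in the infinite case, $0$, which the induction forces to coincide with $z_j$). For a pessimist $j \in A$, the lower bound $\mu_j(\pi^A) \geq z_j$ follows from the almost-sure inequality $\mu_j \geq z_j$ under $\bsigma$ (using that $\pi^A$ has positive probability by the first step), while the upper bound $\mu_j(\pi^A) \leq z_j$ follows from a symmetric induction: if the $A$-branch at some step yielded strictly more than $z_j$, combining it with the profitable deviations on non-$A$ branches guaranteed by Property~\ref{itm:nosplit} would contradict Property~\ref{itm:pessimistanchor} at $h$. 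The equality for $\pi^{A\star}$ is immediate: in the finite case $\pi^{A\star}$ ends at the same terminal as $\pi^A$, and in the infinite case both plays avoid every terminal and therefore yield payoff $0$ to everyone.

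The main obstacle will be the infinite-lasso case, specifically ruling out stochastic branching inside $\Inf(\pi^A)$ using a compound argument that converts repeated local deviations into a single global deviation strategy that violates the anchoring inequalities. Once this technical hurdle is cleared, the remaining arguments reduce to routine inductions on the prefix length, leveraging Properties~\ref{itm:optimistanchor}--\ref{itm:nosplit} of the labelling $\Lambda$.
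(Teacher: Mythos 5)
Your proposal runs on the same engine as the paper's proof but packages it differently, and two of your steps would need repair before they close. The paper does not first establish positive probability and then compute payoffs: for each $j \in A$ it does both at once with a single decomposition of the plays from $h$ into $\pi^A$ itself plus the countably many branch histories $\pi^A_{\leq k}w$. For an optimist, Property~\ref{itm:optimistanchor} gives $\prob_{\bsigma_{\|h}}(\mu_j = z_j) > 0$ and Property~\ref{itm:nosplit} kills every branch's contribution to that event, so the mass must sit on $\pi^A$; for a pessimist, the composed deviation built from the $\tau_j^{kw}$ would exceed $z_j$ on every branch, so Property~\ref{itm:pessimistanchor} forces $\prob(\pi^A) > 0$ and $\mu_j(\pi^A) \leq z_j$. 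Your ``compound deviation'' for pessimists is exactly this composed strategy, so that half essentially matches the paper. The weak spots are these. First, an infinite $\pi^A$ need not ``eventually form a lasso'', and the branch points along $\pi^A$ are not only stochastic vertices: $\bsigma$ may randomise at player-controlled vertices, where by Property~\ref{itm:splitsetsanchorwithi} the off-branches carry label $\{i\}$ rather than $\emptyset$ --- harmless, since Property~\ref{itm:nosplit} still applies to them, but your case analysis omits it. Second, and more substantively, your optimist argument by ``induction on $k$'' only controls finite prefixes; knowing that each finite stage realises the supremum along the $A$-branch does not by itself yield $\prob(\pi^A) > 0$ or pin down the payoff of the infinite play. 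To pass to the limit you need exactly the countable total-probability decomposition of the event $\{\mu_j = z_j\}$ over the branch points that the paper writes down, so you should make that step explicit rather than appeal to induction. The transfer to $\pi^{A\star}$ is then best done as in the paper, via $\Occ(\pi^{A\star}) \subseteq \Occ(\pi^A)$ and $\Inf(\pi^{A\star}) \subseteq \Inf(\pi^A)$ once $\prob_\bsigma(\pi^A) > 0$ is known (which already forces all but finitely many transitions of $\pi^A$ to have probability one).
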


\begin{claimproof}
    Let $j \in A$.
    Let us proceed by case disjunction according to the risk measure used by player $j$.

\emph{If player $j$ is an optimist,} then, by Property~\ref{itm:optimistanchor} of \cref{lm:Lambda}, we have $\X_j(\bsigma_{\|h}) = z_j$, and therefore $\prob_{\bsigma_{\|h}}(\mu_j = z_j) > 0$, i.e., by the law of total probability:
        $$\prob_{\bsigma_{\|hu}}(\pi^A) \prob_{\bsigma_{\|h}}(\mu_j = z_j \mid \pi^A) + \sum_k \sum_{w \in E(\pi_k) \setminus \{\pi_{k+1}\}} \prob_{\bsigma_{\|hu}}(\pi^A_{\leq k} w) \prob_{\bsigma_{\|hu}}(\mu_j = z_j \mid \pi^A_{\leq k}w) > 0.$$
        But using Property~\ref{itm:nosplit}, all the terms of the summation on the right are zero, hence the product $\prob_{\bsigma_{\|h}}(\pi^A) \prob_{\bsigma_{\|h}}(\mu_j = z_j \mid \pi^A) > 0$ is positive, i.e. the play $\pi^A$ has  a positive probability of being generated and $\mu_j(\pi^A) = z_j$.

   \emph{If player $j$ is a pessimist,} then because of Property~\ref{itm:nosplit} again, for every $k \geq 0$ and each $w \in \Supp\left(\bsigma\left(h\pi^A_{\leq k}\right)\right)$, there exists a strategy $\tau_j^{kw}$ such that $\X_j(\bsigma_{-j\|h\pi^A_{\leq k}w}, \tau_j^{kw}) > z_j$.
        By composing all those strategies, we obtain a deviation $\tau_j$ of the strategy $\sigma_{j\|h}$; which, by Property~\ref{itm:pessimistanchor}, satisfies the inequality $\X_j(\bsigma_{-j\|h}, \tau_j) \leq z_j$.
        Therefore, either:
        \begin{itemize}
            \item we have:
            $$\min_k \min_{w \in \Supp\left(\bsigma\left(h\pi^A_{\leq k}\right)\right) \setminus \{\pi^A_{k+1}\}} \X_j(\bsigma_{-j\|h\pi_{\leq k}w}, \tau^{kw}_j) \leq z_j,$$
            which is impossible by definition of the strategies $\tau_j^{kw}$;

            \item or we have $\prob_{\bsigma_{-j\|h}, \tau_j}(\pi^A) = \prob_{\bsigma_{\|h}}(\pi^A) \neq 0$ and $\mu_j(\pi^A) \leq z_j$, and then actually $\mu_j(\pi^A) = z_j$.
        \end{itemize}

We have thus proven that player $j$ gets the payoff $z_j$ in $\pi^A$, and that the play $\pi^A$ is generated with positive probability in $\bsigma$.
The analogous results about $\pi^{A\star}$ follow using the equalities $\Occ(\pi^{A\star}) = \Occ(\pi^A)$ and $\Inf(\pi^{A\star}) = \Inf(\pi^A)$.
\end{claimproof}

    In those two cases (if $A$ is a singleton or has several elements), we obtain that the strategy profile $\bsigma^\star$ is such that, with some positive probability, player $i$ gets the payoff $z_i$.

    \subparagraph*{Player $i$ gets risk measure $z_i$.}
    We still have to prove that player $i$ has zero probability of getting a lower payoff (if they are a pessimist) or a higher payoff (if they are an optimist).
    To show both cases, we prove the following claim:

    \begin{claim}
        Every payoff vector that has a positive probability of being achieved in the strategy profile $\bsigma^\star$ also has a positive probability of being achieved in the strategy profile $\bsigma$.
    \end{claim}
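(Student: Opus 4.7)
The plan is to establish a correspondence between $\bsigma^\star$-compatible plays realised with positive probability and $\bsigma$-compatible plays realised with positive probability having the same payoff vector. I would proceed by analysing the memory-state trajectory of $\bsigma^\star$ along such a play, segment by segment between consecutive anchor-state changes.

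First, I would set up and maintain the following invariant: whenever the memory of $\bsigma^\star$ enters an anchor state $\anchor_{Av}$ after a $\bsigma^\star$-compatible history realised with positive probability, there exists a $\bsigma$-compatible history $h$ with $\last(h)=v$ and $\Lambda(h)=A$. The invariant holds initially, since $\anchor_{\Pi\bot}$ is witnessed by the one-vertex history $v_0$ with $\Lambda(v_0)=\Pi$. It is preserved at every memory transition: at a split inside $\anchor_{Av}$ with $|A|\geq 2$, the chosen successor $w_\l$ lies in $\Supp(\bsigma(hv_1\dots v_q))$, so $hv_1\dots v_qw_\l$ is $\bsigma$-compatible with $\Lambda$-label a strict subset of $A$; at a stochastic deviation from the intended $\pi^{A\star}$ (or from the simple split-witness history), the destination $w$ is automatically $\bsigma$-reachable from $h$, since stochastic edges are in the support of every strategy profile, and the resulting $\anchor_{\emptyset w}$ state is witnessed by the extended history. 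No memory change occurs otherwise.

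Second, I would show segment by segment that every payoff vector produced by $\bsigma^\star$ with positive probability is produced by $\bsigma$ with positive probability. For $\anchor_{Av}$ with $|A|\geq 2$, by construction $\Occ(\pi^{A\star})\subseteq\Occ(\pi^A)$ and $\Inf(\pi^{A\star})\subseteq\Inf(\pi^A)$, and the simple witness history used at a split has its occurrence set contained in that of the $\pi^A$-segment; \cref{claim:piA} then gives a $\bsigma$-compatible subplay with matching payoffs. For $\anchor_{\{i\}v}$ and $\anchor_{\emptyset v}$, I would sharpen the construction so that $\btau^{\anch i}$ and $\btau^{\anch \emptyset}$ (whose existence is given by \cref{lm:secretlemma}) are chosen to be witnessed by concrete $\bsigma$-compatible simple lassoes or paths to terminals, staying in the relevant $W$-set; any stochastic branching off these paths only sends the memory into another $\anchor_{\emptyset w}$ state, to which the invariant again applies. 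Concatenating the $\bsigma$-compatible segments obtained across consecutive anchor states along the $\bsigma^\star$-compatible play yields a single $\bsigma$-compatible play with positive probability and the same payoff vector, completing the claim.

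The main obstacle I anticipate is handling the cases $|A|\leq 1$, where the abstract choice of $\btau^{\anch i}$ and $\btau^{\anch \emptyset}$ via \cref{lm:secretlemma} does not a priori guarantee that \emph{every} generated play (rather than just the distinguished plays that realise the target objectives) matches a $\bsigma$-compatible one. Refining the construction to follow specific simple witnesses in $\bsigma$ circumvents this, and the key leverage is the observation that stochastic branchings which deviate from these witnesses automatically reduce to the $\anchor_\emptyset$ case, where the invariant can be reapplied inductively.
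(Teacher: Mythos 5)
Your overall decomposition --- track the memory-state trajectory of $\bsigma^\star$, note that it stabilises among states $\anchor_{Av}$ for a fixed $A$, and split on $|A|$ --- is the same as the paper's, and your treatment of the $|A|\geq 2$ case via \cref{claim:piA} matches the paper exactly. The invariant you maintain (every reachable $\anchor_{Av}$ is witnessed by a $\bsigma$-compatible history $h$ with $\last(h)=v$ and $\Lambda(h)=A$) is also what the paper uses implicitly. The problem is your handling of $|A|\leq 1$. The obstacle you identify there is not actually present: the profiles $\btau^{\anch\emptyset}$ and $\btau^{\anch j}$ are \emph{defined} in the construction as positional profiles achieving, \emph{with probability $1$}, the objective ``the generated payoff vector is one that $\bsigma$ generates with positive probability (while staying in $W_\emptyset$, resp.\ $W_{\{j\}}$)''; satisfiability of that objective from the entry vertex $w$ is witnessed by $\bsigma_{\|h'w}$ itself (your invariant supplies the history $h'w$), so \emph{every} play generated from such a state, up to a null set, already yields a $\bsigma$-achievable payoff vector. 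Since the payoff of a play depends only on its tail, this closes the case immediately: no concatenation of segments into a single $\bsigma$-compatible play is needed, and the claim only asserts equality of payoff vectors, not of plays.

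Your proposed repair --- re-engineering $\btau^{\anch\emptyset}$ and $\btau^{\anch j}$ to follow concrete $\bsigma$-compatible simple witnesses --- is therefore unnecessary, and as stated it does not go through: a state $\anchor_{\emptyset v}$ only remembers the current vertex, so ``the witness path from the entry vertex'' cannot be followed (different entry vertices may prescribe conflicting moves at a shared vertex, and remembering the entry vertex would require additional memory states, breaking the $3np-2n+p+1$ bound). You would also have to handle the play that branches off a witness at stochastic vertices infinitely often, reaching no terminal and yielding the payoff vector $(0)_{i\in\Pi}$; your sketch does not argue that this vector is $\bsigma$-achievable. Finally, modifying the construction means you are no longer proving the claim about the profile $\bsigma^\star$ that the rest of the theorem (memory bound, purity, equilibrium property) analyses. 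The fix is simply to use the auxiliary profiles exactly as the paper defines them and read off the conclusion from their almost-sure objective.
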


\begin{claimproof}
    Let $\bz'$ be such a payoff vector.
    Then, there is a history $hw$ compatible with $\bsigma^\star$ and a set $A \subseteq \Pi$ such that, after the history $hw$, the strategy profile $\bsigma^\star$ is in state $\anchor_{A \last(h)}$, and from that point it has a nonzero probability of achieving the payoff vector $\bz'$ while staying in states of the form $\anchor_{A v}$.
    
    \emph{If $A$ is empty}, then the strategy profile $\btau^{\anch \emptyset}$ has been defined as a strategy profile that almost surely generates a payoff vector that is generated with positive probability by $\bsigma$, from every vertex from which that is possible.
    That requirement is satisfiable, and therefore satisfied by $\btau^{\anch \emptyset}$, from the vertex $w$, since that vertex is itself reached with positive probability in the strategy profile $\bsigma$.
    Therefore, the payoff vector $\bz'$ is also achieved with positive probability in $\bsigma$.
    
    \emph{If $A$ is a singleton,} say $A = \{j\}$, then the strategy profile $\btau^{\anch j}$ has been defined so that from every vertex from which that is possible, on the one hand, it generates the payoff $z_j$ with positive probability, and on the other hand, it is almost sure that the payoff vector that will be generated has also positive probability to be generated in $\bsigma$.
    Similarly as above, that requirement is satisfiable from the vertex $w$, since $\bsigma_{\|hw}$ satisfies it.
    Therefore, again, the payoff vector $\bz'$ is also achieved with positive probability in $\bsigma$.
    
    \emph{If $A$ has at least two elements}, then the strategy profile $\bsigma^\star$ stays in states of the form $\anchor_{A v}$ only along one play, namely $\pi^{A\star}$, and that play generates a payoff vector that was also associated with the play $\pi^A$ that by \cref{claim:piA}, has positive probability to be generated in $\bsigma$, hence the same conclusion.
\end{claimproof}

This proves the equality $\X_i(\bsigma^\star) = z_i$.
\end{claimproof}

\paragraph*{The strategy profile $\bsigma^\star$ is an XRSE.}

We have now constructed the finite-memory strategy profile $\bsigma^\star$, showed that it had the expected number of memory states, and that it generates the expected risk measures.
We must now give the final argument for our construction: that strategy profile is also an extreme risk-sensitive equilibrium.
We will prove that result by showing separately that optimists have no profitable deviations, and then that neither do pessimists.

\begin{proposition}\label{prop:NodeviationOpt}
    No optimist has a profitable deviation in $\bsigma^\star$.
\end{proposition}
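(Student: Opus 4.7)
The plan is to argue by contradiction: assume some optimistic player $i$ admits a deviation $\tau_i$ with $\X_i(\bsigma^\star_{-i}, \tau_i) > z_i$. Since $\X_i$ for an optimist is the essential supremum of $\mu_i$, this is equivalent to the existence of a play $\pi$ with $\prob_{\bsigma^\star_{-i}, \tau_i}(\pi) > 0$ and $\mu_i(\pi) > z_i$. I then split according to whether, along $\pi$, the strategy $\tau_i$ ever selects an edge outside the support of the distribution prescribed by $\bsigma^\star$ at the current memory state.

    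In the first branch, no such detectable deviation occurs along $\pi$; then $\pi$ is itself realised with positive probability under $\bsigma^\star$, and \cref{prop:ActualPayoff} immediately yields $\mu_i(\pi) \leq \X_i(\bsigma^\star) = z_i$, contradicting $\mu_i(\pi) > z_i$.

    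In the second branch, let $v$ be the first vertex along $\pi$ at which player $i$ departs from $\bsigma^\star$, and let $vw$ be the edge taken. I would first leverage the construction of $\bsigma^\star$ to check that the prefix of $\pi$ up to $v$ stays inside anchoring memory states, so that $v$ lies in some set $W_A$; this produces a history $h$ ending at $v$ with $\prob_\bsigma(h) > 0$. I would then use that the memory of $\bsigma^\star$ switches to $\punish_i$ right after $vw$ and thereafter follows the minimax-minimising profile $\btau^{\dag i}_{-i}$ from $w$; by the optimality of $\btau^{\dag i}$, this bounds $\mu_i(\pi) \leq V_i(w)$ where $V_i(w) := \min_{\btau_{-i}} \sup_{\tau'_i} \X_i(\btau_{-i}, \tau'_i)$ in the game $\Game_{\|w}$.

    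The final step is to show $V_i(w) \leq z_i$, which I plan to do by building a one-shot deviation $\tau'_i$ of $i$ in $\Game_{\|v_0}$: copy $\sigma_i$ on every history, except play edge $vw$ at $hv$ and an arbitrary continuation $\tau''_i$ on histories extending $hvw$. Since $\prob_{\bsigma_{-i}, \tau'_i}(hvw) = \prob_\bsigma(h) > 0$, every payoff realised with positive probability under $(\bsigma_{-i\|hvw}, \tau''_i)$ is also realised with positive probability under $(\bsigma_{-i}, \tau'_i)$; the XRSE property of $\bsigma$ then forces that payoff to be at most $z_i$, giving $\sup_{\tau''_i} \X_i(\bsigma_{-i\|hvw}, \tau''_i) \leq z_i$ and hence $V_i(w) \leq z_i$, because $\bsigma_{-i\|hvw}$ is one particular choice of $\btau_{-i}$ in the infimum defining $V_i(w)$. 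I expect the main obstacle to be the first sub-step of the second branch, namely verifying that every vertex traversed in an anchoring state of $\bsigma^\star$ indeed belongs to some $W_A$; this requires unfolding the three sub-constructions underlying the anchoring modes ($\btau^{\anch \emptyset}$, $\btau^{\anch j}$, and the simple path or lasso used when $|A| \geq 2$) and checking that each of them stays within the appropriate $W$-set until a deviation is detected.
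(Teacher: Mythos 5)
Your proposal is correct and follows essentially the same route as the paper's proof: both split on whether the deviation is ever detected (triggering the switch to $\punish_i$) or stays within the support of $\bsigma^\star$, handle the detected case by combining the minimax property of $\btau^{\dag i}_{-i}$ with the XRSE property of the original $\bsigma$ at a corresponding history, and handle the undetected case via $\X_i(\bsigma^\star)=z_i$ from \cref{prop:ActualPayoff}. The only notable difference is cosmetic: you make the composed "one-shot" deviation in $\Game_{\|v_0}$ explicit where the paper leaves that composition implicit, and you correctly flag as the delicate sub-step the fact that vertices reached in anchoring states are also reached with positive probability under $\bsigma$ --- which the paper likewise asserts with only a brief justification.
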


\begin{claimproof}
    Let $i$ be an optimist, and let us consider a deviation $\sigma'_i$ of that player from $\bsigma^\star$. Let us write $z'$ for the risk measure $z' = \X_i(\bsigma^\star_{-i}, \sigma'_i)$.

    Let us notice that along every play compatible with $\bsigma^\star_{-i}$, the transitions that are possible in the memory structure of the strategy profile $\bsigma^\star$ can be classified as follows:
    \begin{itemize}
        \item transitions among states of the form $\anchor_{A v}$ for a fixed $A$;
        
        \item transitions from a state of the form $\anchor_{A v}$ to a state of the form $\anchor_{B w}$ with $B \subset A$;

        \item transitions from a state of the form $\anchor_{A v}$ to the state $\punish_i$;

        \item and transitions from $\punish_i$ to itself
    \end{itemize}
    Therefore, any such play stabilises either in the state $\punish_i$, or among the states of the form $\anchor_{A v}$ for a fixed set $A$.
    Consequently, if in the strategy profile $(\bsigma^\star_{-i}, \sigma'_i)$ player $i$ gets the payoff $z'$ with positive probability, then we can also say that either:
    \begin{itemize}
        \item with positive probability, player $i$ gets the payoff $z'$ \emph{and} the state $\punish_i$ is reached;

        \item or there exists a set $A \subseteq \Pi$ such that with positive probability, player $i$ gets the payoff $z'$, and the collective memory remains in states of the form $\anchor_{A v}$.
    \end{itemize}

    \emph{In the first case,} let us consider a history $hv$ compatible with $\bsigma^\star_{-i}$ such that the collective memory is in an anchoring state after $h$ and in state $\punish_i$ after $hv$.
    If player $i$ can obtain the risk measure $z'$ by going to $v$ from that vertex against $\bsigma^\star_{-i\|h}$, and therefore, against the punishing strategy profile $\btau^{\dag i}_{-i}$, it means that they can enforce that risk measure against every possible strategy profile from $\last(h)$.
    On the other hand, if the collective memory is in an anchoring state after $h$, it means that the vertex $\last(h)$ is also visited with positive probability in the strategy profile $\bsigma$ (otherwise we would have switched to a punishing state earlier).
    There is therefore a history $h'$ compatible with $\bsigma$ such that $\last(h) = \last(h')$; and after that history, against the strategy profile $\bsigma_{\|h'}$, player $i$ also has the possibility of getting with positive probability the payoff $z'$.
    Since $\bsigma$ is an XRSE, that implies $z' \leq z_i$.

    \emph{In the second case,} let us notice that the strategy profiles of the form $\btau^{\anch A}$ are pure, and therefore that any deviation of player $i$ is immediately detected and leads to a switch to state $\punish_i$.
    Therefore, if the collective memory remains in states of the form $\anchor_{A v}$, it means that player $i$ is actually following the strategy $\sigma^\star_i$.
    Thus, we also have $z' \leq z_i$.
    
    The strategy $\sigma'_i$ is not a profitable deviation from $\bsigma^\star$.
\end{claimproof}

We can now end the proof with the dual proposition.

\begin{proposition}
    No pessimist has a profitable deviation in $\bsigma^\star$.
\end{proposition}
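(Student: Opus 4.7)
The plan is to show, for any pessimist $i$ and any deviation $\sigma'_i$, that $\X_i(\bsigma^\star_{-i}, \sigma'_i) \leq z_i$, i.e., that in $(\bsigma^\star_{-i}, \sigma'_i)$ some play of positive probability yields player $i$ a payoff at most $z_i$. I will exhibit such a play by tracing an \emph{anchor path of $i$} through the memory structure of $\bsigma^\star$, using Property~\ref{itm:pessimistanchor} of $\Lambda$ in place of the Property~\ref{itm:optimistanchor} that drove the optimist case.

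First I will describe an anchor path of $i$. Starting from the initial state $\anchor_{\Pi\bot}$, at every transition where the anchor set splits I follow one branch whose anchor set still contains $i$. Such a branch always exists and is reached with positive probability: at splits covered by Property~\ref{itm:splitsetsanchorwithouti} (stochastic vertex, or vertex controlled by a player outside the current anchor), exactly one successor set contains $i$, and the split is realised by nature or by a player in $\bsigma^\star_{-i}$, so its positive branching probability does not depend on $\sigma'_i$; at splits covered by Property~\ref{itm:splitsetsanchorwithi} (vertex controlled by $i$ itself, with $i$ in the current anchor), every successor contains $i$, so any move of $\sigma'_i$ that stays within the support of $\sigma^\star_i$ keeps us on the anchor. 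By \cref{prop:combinatorial}, the anchor tree has depth at most $3p-2$, so this path reaches a \emph{leaf} state---either some $\anchor_{\{i\}w}$ or some $\anchor_{Aw}$ with $|A|\geq 2$ and no further splits, in either case with $i$ in the anchor set.

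Then I would split on whether $\sigma'_i$ stays within the support of $\sigma^\star_i$ at every $i$-controlled history along this anchor path. In the no-deviation case, the leaf is reached with positive probability, and the strategy profile followed at the leaf---namely $\btau^{\anch i}$ for a singleton leaf, or the play $\pi^{A\star}$ for $|A|\geq 2$ (\cref{claim:piA})---yields player $i$ the payoff $z_i$ with positive probability, giving $\X_i(\bsigma^\star_{-i}, \sigma'_i) \leq z_i$. In the deviation case, pick a history $hu$ on the anchor path at which $\sigma'_i$ assigns positive probability to some vertex $v$ outside the support of $\sigma^\star_i(hu)$; then memory switches to $\punish_i$ at $v$ and the other players proceed with $\btau^{\dag i}_{-i}$. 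Take the $\bsigma$-history $h'$ with $\last(h')=u$ and $\Lambda(h')=A$ provided by the construction of $\bsigma^\star$, and for any strategy $\tau'_i$ from $v$ let $\tau_i$ in $\Game_{\|u}$ play $v$ at $u$ then $\tau'_i$. Property~\ref{itm:pessimistanchor} applied to $h'$ (where $i\in A$) yields $\X_i(\bsigma_{-i\|h'v}, \tau'_i) = \X_i(\bsigma_{-i\|h'}, \tau_i) \leq z_i$, hence $\sup_{\tau'_i} \X_i(\bsigma_{-i\|h'v}, \tau'_i) \leq z_i$, and so the min-max value at $v$ is at most $z_i$; since $\btau^{\dag i}_{-i}$ attains this min-max, $\X_i(\btau^{\dag i}_{-i}, \sigma'_{i\|hv}) \leq z_i$ from $v$, and the positive probability of reaching $v$ lifts this to $\X_i(\bsigma^\star_{-i}, \sigma'_i) \leq z_i$.

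The main obstacle will be to justify that the anchor path of $i$ is traced with positive probability in $(\bsigma^\star_{-i}, \sigma'_i)$ despite the fact that $\sigma'_i$ could behave arbitrarily in the sibling branches where $i$ has been split off. The resolution is the independence pointed out above: the splits that can drop $i$ are of Property~\ref{itm:splitsetsanchorwithouti} type, so their branching probabilities are fixed by nature and by the strategies of $\bsigma^\star_{-i}$, and therefore the branch that keeps $i$ anchored is reached with positive probability regardless of what $\sigma'_i$ does on the sibling branches. Once this is in hand, the argument mirrors, in a pessimist-dual fashion, the optimist proposition and together with it completes the proof that $\bsigma^\star$ is an XRSE.
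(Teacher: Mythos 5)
Your proof is correct and follows essentially the same route as the paper's: a case split on whether $\sigma'_i$ ever leaves the support of $\sigma^\star_i$ while the collective memory is in an anchoring state containing $i$, handling the leaving case via Property~\ref{itm:pessimistanchor} together with the min-max optimality of $\btau^{\dag i}_{-i}$, and the staying case via Property~\ref{itm:splitsetsanchorwithi} and the payoff arguments of \cref{prop:ActualPayoff}. One small imprecision: a split at a vertex controlled by some $j \in A \setminus \{i\}$ falls under Property~\ref{itm:splitsetsanchorwithi} rather than Property~\ref{itm:splitsetsanchorwithouti}, but there too $i$ lands in exactly one successor set and the branching is realised by $\bsigma^\star_{-i}$, so your independence argument still applies.
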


\begin{claimproof}
    Let $i$ be a pessimist, and consider a deviation $\sigma'_i$ of that player from $\bsigma^\star$.
    We intend to prove that the deviation $\sigma'_i$ is not profitable, that is, when following the strategy profile $(\bsigma^\star_{-i}, \sigma'_i)$, there is still a positive probability that player $i$ receives a payoff smaller than or equal to $z_i$.
    Using \cref{lm:secretlemma}, we can assume without loss of generality that $\sigma'_i$ is pure.

    First, we observe that for each history $hv$ compatible with $\bsigma^\star$ such that, after $hv$, the collective memory is in state $\anchor_{A \last(h)}$ with $i \in A$, the vertex $v$ is such that there also exists a history $h'v$ compatible with $\bsigma$ with $\Lambda(h'v) = A$.
    By Property~\ref{itm:pessimistanchor} of \cref{lm:Lambda}, we have $\X_i(\bsigma_{-i\|h'v}, \tau_i) \leq z_i$ for every $\tau_i$.
    Therefore, if player $i$ accepts to follow the history $hv$ and, then, deviates and takes an edge that makes the collective memory switch to the state $\punish_i$, then with positive probability player $i$ gets a payoff lesser than or equal to $z_i$.
    If such an action is ever performed, then the deviation $\sigma'_i$ is not profitable.

    Let us now assume that $\sigma'_i$ performs no such action: after every history $hv \in \Hist_i \Game_{\|v_0}$, if the collective memory is in a state of the form $\anchor_{A \last(h)}$ with $i \in A$, the vertex $\sigma'_i(hv)$ belongs to the set $\Supp(\sigma^\star_i(hv))$.
    Then, by Property~\ref{itm:splitsetsanchorwithi} of \cref{lm:Lambda}, we also have $i \in \Lambda(hv\sigma'_i(hv))$.
    Thus, there still exists a set $A$ with $i \in A$ such that, with positive probability, when following the strategy profile $(\bsigma^\star_{-i}, \sigma'_i)$, the strategy profile $\bsigma^\star_{-i}$ stabilises among memory states of the form $\anchor_{A v}$; and then, the strategy profile $\bsigma^\star$ only proceeds to pure actions, hence the strategy $\sigma'_i$ is actually following $\sigma^\star_i$.

    Using the same arguments as in the proof of \cref{prop:ActualPayoff} (definition of $\btau^{\anch i}$ in case $A = \{i\}$, and \cref{claim:piA} in case $A$ has more elements), we can then conclude that the player $i$ gets the payoff $z_i$ with positive probability and therefore the deviation $\sigma'_i$ is not profitable.
    \end{claimproof}

    The strategy profile $\bsigma^\star$ is an XRSE, satisfies the equality $\X(\bsigma^\star) = \X(\bsigma)$, and uses the desired number of memory states.
    Furthermore, if $\bsigma$ is pure, so is $\bsigma^\star$.
\end{proof}
\subsection{Proof of \cref{lemma:np_hardness}}\label{app:np_hardness}
\NPHard*
\begin{proof}[Proof of \cref{lemma:np_hardness}]
  We prove $\NP$-hardness by reducing from the problem $\THREESAT$. Consider a $\THREESAT$ formula $\Phi$, over the variables  $x_1,\dots,x_n$, where $\Phi = C_1\land C_2\land \dots\land C_m$, where for each $i$ we have $C_i = (\ell_{i1}\lor \ell_{i2}\lor \ell_{i3})$ and for $j=1,2,3$, we have $\ell_{ij} = x_k$ or $\ell_{ij} = \neg x_k$ for some $k\in \{1, \dots, n\}$. 
  We construct  a game $\Game_\Phi$ with two players for each literal $\ell$, denoted by $\Circle \ell$ and $\Square \ell$. The game is depicted in \cref{fig:NPhard}.
  For convenience, some terminal vertices have been represented several times.
  Each player $\Circle \ell$ controls one vertex, the vertex $\Circle \l$, of circled shape, and symmetrically, each player $\Square \ell$ controls the square-shaped vertex $\Square \l$.
  Further, we add a player $C_i$, who controls the vertex $C_i$, for each clause $C_i$. Finally, there is a player $\Diamond$ who does not control any vertex. There are also stochastic vertices, that are represented by the black circles.
  In each terminal vertex, the symbol $\forall$ should be understood as "every (other) player".


 We assume all players are pessimistic, and ask if there is an XRSE where player $\Diamond$'s risk measure is exactly $2$.
We give the formal definition of the game $\Game_\Phi$ below.

    \begin{figure}
        \centering
        
        \begin{tikzpicture}[shorten >=1pt, node distance=1.5cm and 2cm, on grid, auto, scale=1.1]
          \tikzstyle{state}=[circle, draw, minimum size=20pt, inner sep=1pt]
          \tikzstyle{squarestate}=[rectangle, draw, minimum size=20pt, inner sep=1pt] 

            \node[state] (qnc1) at (0, 0) {$\neg x_{1}$};
            \node[state, initial,initial text=] (qc1) at (0, 1) {$x_{1}$};

            \node[scale=0.6] (tpunish1) at (0,-1) {$t_\dag:~\stack{\forall}{0}$};
            
            \node[stoch, scale=0.6] (stoc2) at (1, 0) {$s_{\neg x_1}$};
            \node[stoch, scale=0.6] (stoc1) at (1, 1) {$s_{x_1}$};

            \node[scale=0.6] (reward1) at (1,2) {$f_{x_1}:~\stack{\circ x_1}{1}$$\stack{\forall}{2}$};
            \node[scale=0.6] (reward2) at (1,-1) {$f_{\neg x_1}:~\stack{\circ \neg x_1}{1}$$\stack{\forall}{2}$};
            
            \node[squarestate] (qns1) at (2, 1) {$\neg x_{1}$};
            \node[squarestate] (qs1) at (2, 0) {$x_{1}$};

            \node[scale=0.6] (punishW1) at (2,2) {$t_\diamond:~\stack{\diamond}{0}$$\stack{\forall}{2}$};
            \node[scale=0.6] (punishW2) at (2,-1) {$t_\diamond:~\stack{\diamond}{0}$$\stack{\forall}{2}$};
            
            \node[state] (qnc2) at (3.5, 0) {$\neg x_{2}$};
            \node[state] (qc2) at (3.5, 1) {$x_{2}$};
            \node[stoch, scale=0.6] (stoc3) at (4.5, 1) {$s_{x_2}$};
            \node[stoch, scale=0.6] (stoc4) at (4.5, 0) {$s_{\neg x_2}$};

            \node[scale=0.6] (reward3) at (4.5,2) {$f_{x_2}:~\stack{\circ x_2}{1}$$\stack{\forall}{2}$};
            \node[scale=0.6] (reward4) at (4.5,-1) {$f_{\neg x_2}:~\stack{\circ \neg x_2}{1}$$\stack{\forall}{2}$};
            
            \node[squarestate] (qns2) at (5.5, 1) {$\neg x_{2}$};
            \node[squarestate] (qs2) at (5.5, 0) {$x_{2}$};

            \node[scale=0.6] (punishW3) at (5.5,2) {$t_\diamond:~\stack{\diamond}{0}$$\stack{\forall}{2}$};
            \node[scale=0.6] (punishW4) at (5.5,-1) {$t_\diamond:~\stack{\diamond}{0}$$\stack{\forall}{2}$};

            \node[scale=0.6] (tpunish2) at (3.5,-1) {$t_\dag:~\stack{\forall}{0}$};
            \node (qc3) at (6.5, 1) {};

            \node (dots) at (6.8,0.5) {$\dots$};

            \node[squarestate, initial,initial text=] (qnsn) at (8, 1) {$\neg x_{n}$};
            \node[squarestate, initial,initial text=] (qsn) at (8, 0) {$x_{n}$};

            \node[scale=0.6] (punishW5) at (8,2) {$t_\diamond:~\stack{\diamond}{0}$$\stack{\forall}{2}$};
            \node[scale=0.6] (punishW6) at (8,-1) {$t_\diamond:~\stack{\diamond}{0}$$\stack{\forall}{2}$};
            
            \node[stoch, scale=0.6] (stochfin) at (9,0.5) {$s_\mathsf{r}$};
            \node (fakenode) at (9,0.5) {};
            
            \node (c1) at (10,1.8) {$C_1$};
            \node (c2) at (10,1) {$C_2$};
            \node (cdots) at (10,0.4) {$\vdots$};
            \node (c3) at (10,-0.6) {$C_m$};

            \node[scale=0.6] (ter1) at (11.2, 1.7) {$t_{x_2}:~\stack{\Box x_2}{1}$ $\stack{\forall}{2}$};
            \node[scale=0.6] (ter2) at (11.2, 1) {$t_{x_4}:~\stack{\Box x_4}{1}$ $\stack{\forall}{2}$};
            \node[scale=0.6] (ter3) at (11.2, -0.3) {$t_{\neg x_{11}}:~\stack{\Box\neg x_{11}}{1}$ $\stack{\forall}{2}$};

          \path[->]
              (qc1) edge (stoc1)
              (qc1) edge (qnc1)
              (qnc1) edge (stoc2)
              (stoc1) edge (qns1)
              (stoc2) edge (qs1)
              (qns1) edge (qc2)
              (qs1) edge (qc2)
              (qc2) edge (stoc3)
              (qc2) edge (qnc2)
              (qnc2) edge (stoc4)
              (stoc3) edge (qns2)
              (stoc4) edge (qs2)
              (qns2) edge (qc3)
              (qs2) edge (qc3)
              (qnsn) edge (stochfin)
              (qsn) edge (stochfin)
              (fakenode) edge (c1)
              (fakenode) edge (c2)
              (fakenode) edge (c3)
              (c2) edge (ter1)
              (c2) edge (ter2)
              (c2) edge (ter3)
              (qnc1) edge (tpunish1)
              (qnc2) edge (tpunish2);

        \path[->]
            (stoc1) edge (reward1)
            (stoc2) edge (reward2)
            (stoc3) edge (reward3)
            (stoc4) edge (reward4)
            (qs1) edge (punishW2)
            (qns1) edge (punishW1)
            (qs2) edge (punishW4)
            (qns2) edge (punishW3)
            (qsn) edge (punishW6)
            (qnsn) edge (punishW5);
        
        \end{tikzpicture}
        \caption{Construction of a game $\Game_\Phi$ from a $\THREESAT$ formula $\Phi$}
        \label{fig:NPhard}
    \end{figure}


\subparagraph*{Construction of the game $\Game_\Phi$: vertices, edges and payoffs.}
For each literal $\ell$, we define two players $\Square\ell$ and $\Circle\ell$. We add one other player $C_i$ for each clause $C_i$, and an additional constraining player $\Diamond$.
All players are pessimists.

Each player owns at most one vertex in the game, and therefore, we will refer to the player and vertex interchangeably. There is one vertex for each of the players mentioned above other than $\Diamond$, who owns no vertices. Further, there are $2n + 1$ many stochastic vertices: one for each literal $s_{x_1},s_{x_2},\dots,s_{x_n}$,  $s_{\neg x_1},s_{\neg x_2},\dots,s_{\neg x_n}$, and finally one clause-randomiser $s_\mathsf{r}$. 
There are also $2n + 2$ terminal vertices, written $f_{\ell}$ and $t_{\ell}$ for each literal $\ell$, and further the terminal vertices $t_\Diamond$ and $t_\dag$.

We now define the edges between the vertices of the graph for all $i\in \{1, \dots, n\}$:  there are edges from $\Circle x_i$ to $\Circle\neg x_i$, and edges from $\Circle\neg x_i$ to $t_\dag$.
    Further, for every literal $\ell = x_i$ or $\neg x_i$, there are edges:
    \begin{itemize}
        \item from $\Circle\ell$ to $s_{\ell}$;
        \item from $s_{\ell}$ to $f_\ell$ and to $\Square\Bar{\ell}$, where $\Bar{\ell} = \neg x_i$ if $\ell = x_i$ and $\Bar{\ell} = x_i$ if $\ell = \neg x_i$;
        \item from $\Square\ell$ to $t_\Diamond$;
        \item from $\Square\ell$ to $\Circle x_{i+1}$ if  $i<n$,  and  to $s_\mathsf{r}$ if  $i=n$.
    \end{itemize}
    Finally, for all clauses $C_j$, there are edges from $s_\mathsf{r}$ to $C_j$ and from $C_j$ to $t_{\ell}$ such that $\ell$ occurs positively in the clause $C_j$.

    The terminal vertices yield the following payoffs.
\begin{itemize}
    \item In terminal $t_\ell$, all players get payoff $2$, except the player $\Square \ell$ who gets payoff $1$. 
    \item In terminal $f_\ell$, all players get payoff $2$, except player $\Circle \ell$ who gets payoff $1$.
    \item In terminal $t_\dag$, all players get payoff $0$.
    \item In terminal $t_\Diamond$, all players get payoff $2$, except player $\Diamond$ who gets payoff $0$.
\end{itemize}    

Finally, we let the constraints be that player $\Diamond$ gets a risk measure of exactly $2$.
Equivalently, we define $\bx$ and $\by$ by $\by = (2)_{i \in \Pi}$, $x_i = 0$ for each $i \in \Pi \setminus \{\Diamond\}$, and $x_\Diamond = 2$.





    \subparagraph*{If $\Phi$ is satisfiable, then there is an XRSE satisfying the constraints.}
    Consider a satisfying assignment of the $\THREESAT$ formula, described by the assignment $\alpha$ from the set of all variables to $\{\top,\bot\}$.
    
    For each $i$, let $\ell_i$ denote the literal, among $x_i$ and $\neg x_i$, which is set to true by 
    the satisfying assignment $\alpha$.
    Let us define the (positional) strategy profile $\bsigma^\alpha$.
    
    \begin{itemize}
        \item Player $\Circle \ell_i$ goes to $s_{\ell_i}$.
        \item Player $\Circle x_i$ goes to $s_{x_i}$ if $\alpha(x_i) = \top$, and to $\Circle \neg x_i$ otherwise.
        \item Player $\Circle\neg x_i$ goes to $s_{\neg x_i}$ if $\alpha(x_i) = \top$ and to $t_\dag$ otherwise.
        \item For each player $\Square \ell$, the strategy is to chose the edge that does \emph{not} lead to $t_\Diamond$. That is, the edge to $\Circle x_{i+1}$ if $\ell = x_i$ or $\neg x_{i}$ and  $i<n$,  and  the edge to $s_\mathsf{r}$ if  $i=n$.
        \item Each clause player $C_i$ takes the edge to the vertex $\ell_j$ such that the litteral $\ell_j$ was set to true by the satisfying assignment $\alpha$. 
    \end{itemize}
     We now show that this is an XRSE that satisfies the constraint. First, we verify if the constraints are satisfied. Observe that following the strategy profile $\bsigma^\alpha$, it is almost sure that none of the terminals where player $\Diamond$ has payoff less than $2$ will be reached. Therefore this satisfies the constraints. 

     We now argue that $\bsigma^\alpha$ is an XRSE, i.e. that no player can get a better risk measure by deviating.
     The result is immediate for player $\Diamond$ and for the clause players, who all get risk measure $2$, the best they could hope for.
     
     For each literal $\l$, player $\Circle\ell$ gets risk measure $1$ if $\l$ is set to true, and risk measure $2$ if $\ell$ is set to false.
     The same argument as above holds therefore in the second case.
    In the first case, they get risk measure $0$, but they have no profitable deviation, since the only deviation available leads to $t_\dag$ and to the payoff $0$.
     
     Player $\Square \ell$  has also risk measure  $2$ when $\ell$ is set to false.
     Otherwise, they get payoff $1$. In that second case, the vertex owned by the player is not visited in any history of the game, hence they have no possibility of deviating.

     The (positional) strategy profile $\bsigma^\alpha$ is therefore an XRSE.
     
    \subparagraph*{If there is an XRSE satisfying the constraints, then $\Phi$ is satisfiable.} 
    Let us assume that there exists an XRSE $\bsigma$ in the game $\Game_\Phi$, such that player $\Diamond$ gets the risk measure $2$.
    We prove, first, that we can assume that $\bsigma$ is pure (and therefore positional, since there is then only one history leading to each vertex).

\begin{claim}
    There exists an XRSE $\bsigma^\star$ in $\Game_{\|v_0}$ where player $\Diamond$ gets risk measure $2$ that is positional.
\end{claim}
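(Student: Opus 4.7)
The plan is to start from an arbitrary XRSE $\bsigma$ satisfying $\X_\Diamond(\bsigma) = 2$ and purify it vertex by vertex into a positional profile $\bsigma^\star$ preserving that value for $\Diamond$. The first move would be to extract forced pure choices from the hypothesis: because $\Diamond$ is pessimistic, every play of positive probability must give $\Diamond$ at least $2$, so $t_\dag$ and $t_\Diamond$ are unreached under $\bsigma$. This already forces any visited $\Circle\neg x_i$ to play $s_{\neg x_i}$, and any visited $\Square \ell$ to play its forward edge. Moreover, because deviating to $t_\Diamond$ would yield $\Square \ell$ payoff $2$, the equilibrium condition implies that whenever $\Square \ell$ is reached, its essinf under $\bsigma$ is already $2$, hence $t_\ell$ is unreached and no clause $C_j$ sends positive probability to $t_\ell$.

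Next I would define $\bsigma^\star$ by a single positional choice per non-stochastic vertex. For each level $i$, I pick a pure truth value compatible with $\bsigma$: if $s_{x_i} \in \Supp(\bsigma(\Circle x_i))$, I set $\bsigma^\star(\Circle x_i) = s_{x_i}$ and, as an off-path threat, $\bsigma^\star(\Circle \neg x_i) = t_\dag$; otherwise I set $\bsigma^\star(\Circle x_i) = \Circle \neg x_i$ and $\bsigma^\star(\Circle \neg x_i) = s_{\neg x_i}$. For each $\Square \ell$ I let $\bsigma^\star(\Square \ell)$ be the forward edge. For each clause vertex $C_j$ I pick some $\ell \in C_j$ that $\bsigma(C_j)$ used with positive probability and set $\bsigma^\star(C_j) = t_\ell$. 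Since the underlying graph is acyclic, each vertex lies on at most one play, and this positional profile is well-defined.

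It will then remain to verify that $\bsigma^\star$ is an XRSE with $\X_\Diamond(\bsigma^\star) = 2$. The constraint on $\Diamond$ holds by construction since $\bsigma^\star$ avoids $t_\dag$ and $t_\Diamond$ along every play. Non-deviation reduces to a short case analysis: $\Circle x_i$ at a True level currently gets $1$ and would fall into the $t_\dag$ trap by deviating; $\Circle x_i$ at a False level currently gets $2$ and a deviation would expose them to $f_{x_i}$; $\Circle \neg x_i$, reached only at a False level, could only deviate to $t_\dag$; clause players are indifferent since they always get $2$; and each $\Square \ell$ is reached only when $\ell$ is false in the purified assignment.

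The main obstacle is showing that the clause purification is consistent, namely that the chosen literal $\ell$ at $C_j$ is still true under the purified assignment, so that $\Square \ell$ stays unreached and does not gain a profitable deviation to $t_\Diamond$. This will follow directly from the first paragraph: $C_j$ choosing $\ell$ with positive probability in $\bsigma$ forces $\Square \ell$ unreached in $\bsigma$, which forces $\Circle \bar\ell$ never to play $s_{\bar\ell}$ in $\bsigma$ --- i.e.\ the level of $\ell$ was already \emph{pure} in $\bsigma$ with $\ell$ set to true --- a property trivially preserved by the purification.
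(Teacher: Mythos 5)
Your overall strategy (purify an arbitrary XRSE into a positional one) matches the paper's, and most of your individual observations are sound, but there is a genuine gap in your first paragraph on which the consistency argument of your last paragraph depends. You claim that whenever $\Square\ell$ is reached, the equilibrium condition forces its essential infimum under $\bsigma$ to be $2$, because deviating to $t_\Diamond$ ``would yield payoff $2$''. That deviation only redirects the plays that pass \emph{through} the vertex $\Square\ell$; the essential infimum after the deviation is still taken over all positive-probability plays, including any that reach $t_\ell$ via the sibling vertex $\Square\bar\ell$. If the player $\Circle x_i$ at the level of $\ell$ randomises between $s_{x_i}$ and $\Circle\neg x_i$, then both $\Square x_i$ and $\Square\neg x_i$ are reached with positive probability, a play can reach $t_\ell$ while avoiding $\Square\ell$, and the deviation to $t_\Diamond$ leaves $\Square\ell$'s risk measure at $1$, so no contradiction is obtained. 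Consequently your chain ``$C_j$ chooses $\ell$ with positive probability $\Rightarrow$ $\Square\ell$ is unreached $\Rightarrow$ the level of $\ell$ is already pure in $\bsigma$'' is not established, and this is precisely the step you flagged as the main obstacle.

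The missing ingredient is the paper's argument that $\Circle x_i$ cannot randomise in any XRSE in which $\Diamond$ gets risk measure $2$: if she puts positive probability on both $s_{x_i}$ and $\Circle\neg x_i$, then $f_{x_i}$ is reached with positive probability and her essential infimum is $1$; but deterministically taking the edge to $\Circle\neg x_i$ is an undetectable deviation (that edge is already in the support, and $\Circle\neg x_i$, being on-path, is forced to continue to $s_{\neg x_i}$), after which every positive-probability play gives her payoff $2$. Once randomisation at the $\Circle x_i$ vertices is excluded, every play reaching $t_\ell$ does pass through $\Square\ell$ whenever $\Square\ell$ is reached, your paragraph-one claim becomes correct, and the rest of your construction and verification goes through essentially as in the paper.
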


\begin{proof}
    Let us first focus on what happens in vertices that have positive probability of being reached.

    If the vertex $\Circle\neg x_i$ has a positive probability of being reached in $\bsigma$, then any strategy of the player $\Circle \neg x_i$ that goes to $t_\dag$ with positive probability gives the player $\Diamond$ the risk measure $0$.
    Therefore, necessarily, the strategy $\sigma_{\circ \neg x_i}$ consists of deterministically going to $s_{\neg x_i}$.
    The same argument holds for the vertices of the form $\Square \l$.
    
    If now the vertex $\Circle x_i$ has a positive probability of being reached and if the player $\Circle x_i$ randomises between the two edges available, then she gets the risk measure $1$, since the terminal vertex $f_{x_i}$ is reached with positive probability and $t_\dag$ with probability zero.
    But then, if she deviates and goes to the vertex $\Circle \neg x_i$ with probability $1$, she avoids the terminal vertex $f_{x_i}$, and the other players will not react since they do not detect the deviation.
    She therefore gets the risk measure $2$, and the deviation is profitable.
    Consequently, the strategy $\sigma_{\circ x_i}$ can only  deterministically select one of those two edges.

    At the end of the game, for each $j$, the player $C_j$ could play a randomised strategy. In such a case, her strategy can be replaced by a pure strategy that takes, deterministically, one of the edges that she was previously taking.
    Such a modification in her strategy can only increase the risk measure of some players (namely, those of the form $\Square \l$) without impacting player $\Diamond$'s risk measure or giving any player the possibility of profitably deviating.

    Finally, if one of those vertices is reached after a history that is not compatible with $\bsigma$, i.e. if one of those players deviates: it is necessarily due to a deviation of a player of the form $\Circle x_i$, since any other deviation would immediately lead to a terminal vertex.
    If she went to $s_{x_i}$ instead of $\Circle \neg x_i$, what the other players do afterwards does not matter, since such a deviation cannot be profitable: with positive probability, the terminal vertex $f_{x_i}$ is reached, and she gets payoff $1$.
    If she went to $\Circle x_i$ instead of $s_{x_i}$, then we can assume that player $\Circle \neg x_i$'s strategy consists of going to the terminal vertex $t_\dag$, giving her the payoff $0$.
    Those modifications do not impact the fact that $\bsigma$ is an XRSE.
\end{proof}

We therefore assume that $\bsigma$ is positional.
Let us now define the assignment $\alpha$ as follows: for each variable $x_i$ we have $\alpha(x_i) = \top$ if $\sigma_{\circ x_i}(\Circle x_i) = s_{x_i}$, and $\alpha(x_i) = \bot$ if $\sigma_{\circ x_i}(\Circle x_i) = \Circle \neg x_i$.
Let then $C_j$ be a clause, and let us prove that it is satisfied by $\alpha$.
Let $t_\l = \sigma_{C_j}(C_j)$.
Then, the player $\Square \l$ gets risk measure $1$ in the XRSE $\bsigma$.
Consequently, the vertex $\Square \l$ is never reached: otherwise, the only play compatible with $\bsigma$ in which player $\Square \l$ gets payoff $1$ would traverse the vertex $\Square \l$, and player $\Square \l$ would have a profitable deviation by going to the terminal vertex $t_\Diamond$.
If that is the case, then the definition of $\alpha$ given above implies that the literal $\l$ is true.

The assignment $\alpha$ satisfies therefore the formula $\Phi$.

\subparagraph*{Conclusion.}
We have defined an instance of the constrained existence problem of XRSEs from an instance of $\THREESAT$ and proved that one is a positive instance if and only if the other is.
This proves the $\NP$-hardness of the constrained existence problem of XRSEs, since the game $\Game_\Phi$ can clearly be constructed in polynomial time.
Moreover, the game $\Game_\Phi$ is such that if an XRSE where player $\Diamond$ gets risk measure $2$ exists, then there also exists such an equilibrium that it positional, which proves also $\NP$-hardness when the players are restricted to pure, stationary or positional strategies.
\end{proof}


\subsection{Proof of \cref{lm:ptimeupperbound}}\label{app:ptimeupperbound}

\ptimeupperbound*

\begin{proof}[Proof of \cref{lm:ptimeupperbound}] \paragraph*{Preliminary remarks}

We are given the game $\Game_{\|v_0}$ and two threshold vectors $\bx, \by \in \Qb^{\Pi}$; we wish to find an XRSE $\bsigma$ such that $\bx \leq \X(\bsigma) \leq \by$. 

    Throughout the proof, when $W \subseteq V$ is a set of vertices and $F \subseteq E$ is a set of edges, we write $\Attr(W, F)$ for the \emph{positive probabilistic attractor} of $W$ in $(V, F)$, i.e. the set of vertices $v$ such that for every strategy profile $\bsigma$ in $\Game_{\|v}$ that uses only edges of $F$, there is a positive probability of reaching $W$.
    As a consequence of \cref{lm:secretlemma} (replacing the vertices of $W$ with terminal vertices), we have the following.
    
    \begin{claim}\label{claim:positiveattractorLinear}
      Given $W$, the set $\Attr(W, F)$ can be computed in time $\Oh(m)$.
    \end{claim}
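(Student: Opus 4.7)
The approach is a direct reduction to \cref{lm:secretlemma}. The plan is to construct, from $\Game$, $W$ and $F$, a two-player auxiliary game $\Game'$ in which $\Attr(W,F)$ coincides with the set of vertices where the min-max value of \cref{lm:secretlemma} equals $1$; the $\Oh(m)$ bound then follows from the attractor-style algorithm underlying that lemma, which labels all vertices simultaneously in a single linear-time sweep.

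Concretely, I would build $\Game'$ by (i) restricting the edge set of $\Game$ to $F$, (ii) turning every vertex of $W$ into a terminal by deleting its outgoing edges, (iii) introducing a fictional optimistic player $i$ (controlling no vertex) with payoff $1$ on all vertices of $W$ and payoff $0$ on every pre-existing terminal, and (iv) introducing a fictional pessimistic player $j$ that inherits control of every non-stochastic vertex of $\Game$. For any vertex $v$ and any coalition strategy $\sigma_j$, since the only payoffs in $\Game'$ are $0$ and $1$, the optimistic risk measure $\X_i(\sigma_j)$ from $v$ equals $1$ exactly when $W$ is reached with positive probability under $\sigma_j$, and $0$ otherwise. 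Consequently $\inf_{\sigma_j} \X_i(\sigma_j) \geq 1$ from $v$ iff no coalition strategy avoids $W$ almost surely, which is precisely the defining condition for $v \in \Attr(W,F)$.

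It then suffices to invoke \cref{lm:secretlemma}, which computes $\inf_{\sigma_j} \sup_{\sigma_i} \X_i(\sigma_i, \sigma_j)$ in time $\Oh(m)$; note that $\sup_{\sigma_i}$ is trivial here since player $i$ controls no vertex. As emphasised in the proof of that lemma, its underlying procedure is the classical positive-probability attractor computation, which outputs the optimal value at \emph{every} vertex in one linear-time pass, so extracting $\Attr(W,F)$ as the set of vertices whose value equals $1$ costs no more than $\Oh(m)$. There is no substantial obstacle: the only points of care are that stochastic vertices keep their full $F$-support (which holds throughout the uses of \cref{lm:ptimeupperbound}, since the sets $E_k$ will always contain every stochastic edge), and that pre-existing terminals $t \notin W$ must be assigned payoff $0$ so that they do not spuriously enter $\Attr(W,F)$.
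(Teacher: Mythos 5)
Your proposal is correct and follows essentially the same route as the paper, whose entire proof is the parenthetical remark that the claim follows from \cref{lm:secretlemma} after ``replacing the vertices of $W$ with terminal vertices''; you have simply made that reduction explicit (restricting to $F$, making $W$ terminal with payoff $1$ for an optimist against a minimising coalition, and reading off $\Attr(W,F)$ as the vertices of value $1$). The extra care you take about stochastic vertices keeping their $F$-support and about pre-existing terminals getting payoff $0$ is sound and consistent with how the claim is used in \cref{lm:ptimeupperbound}.
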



Similarly, \cref{lm:secretlemma} enables us to compute the \emph{adversarial values} of each vertex, i.e., the best risk measure that the player controlling that vertex can ensure from that vertex when the other players are fully hostile.

    \begin{claim}\label{claim:adverserialXRLinear}
        For each $i$ and $v \in V_i$, the quantity:
        $$\val(v) = \inf_{\btau_{-i} \in \Strat_{-i}\Game_{\|v}} \sup_{\tau_i \in \Strat_i\Game_{\|v}} \X_i(\btau)$$
        can be computed in time $\Oh(m)$. 
    \end{claim}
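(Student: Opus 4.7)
The plan is to recognise $\val(v)$ as the value of a two-player zero-sum stochastic game and to invoke \cref{lm:secretlemma} directly. First, I would reduce the multi-player game $\Game$ to a two-player game $\Gamebis$ by merging all players other than $i$ into a single fictional player, called $-i$, who controls every vertex of $V_{-i} = \bigcup_{j \neq i} V_j$; the stochastic vertices, the probability function $\p$, and player $i$'s payoff function $\mu_i$ remain unchanged. Since the sets $V_j$ for $j \neq i$ are pairwise disjoint, strategies of player $-i$ in $\Gamebis$ are in bijection with coalition strategy profiles $\btau_{-i}$ in $\Game$; once paired with any strategy $\tau_i$ of player $i$, they induce the same probability measure on plays. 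Consequently,
\[
\val(v) \;=\; \inf_{\btau_{-i} \in \Strat_{-i}\Gamebis_{\|v}} \;\sup_{\tau_i \in \Strat_i\Gamebis_{\|v}} \X_i(\tau_i, \btau_{-i}).
\]

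Because player $i$ is either a pessimist or an optimist, the risk measure $\X_i$ is either $\pexp$ or $\oexp$, so the right-hand side falls into exactly the setting of \cref{lm:secretlemma}. The concluding remark of that lemma explicitly covers the case where one of the two players represents a coalition whose unique objective is to minimise player $i$'s risk measure, which is precisely the situation here. Applying the lemma yields positional optimal strategies for both player $i$ and the coalition, as well as the value $\val(v)$ itself, all in time $\Oh(m)$.

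Beyond the reduction, there is essentially no new work: the only points to verify are that merging the coalition into a single player preserves the $\inf\sup$ structure and the probability measure induced on plays, and both are immediate because player $-i$'s strategy only needs to make choices at vertices of $V_{-i}$, which are disjoint from $V_i$ and from the stochastic vertices. The only potential obstacle — confirming that \cref{lm:secretlemma} applies verbatim to coalitions against a single player — is already dispatched by that lemma's final paragraph.
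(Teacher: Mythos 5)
Your proposal is correct and follows exactly the route the paper intends: the paper states this claim as an immediate consequence of \cref{lm:secretlemma}, whose final paragraph explicitly licenses the coalition-as-single-player reduction you spell out. The only difference is that you make the bijection between coalition strategy profiles and strategies of the merged adversary explicit, which the paper leaves implicit.
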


Then, computing all those values can be done in time $\Oh(m^2)$.
We can therefore assume that those quantities $\val(v)$ are given with the input.

    \paragraph*{Cycle-friendly and cycle-averse cases}

We differentiate the two types of instances. 
    If there exists a player $i$ such that we have $y_i < 0$, then the requirement $\bx \leq \X(\bsigma) \leq \by$ implies that $\bsigma$ must almost surely reach a terminal vertex: we call that case the \emph{cycle-averse} case.
    If there is no such player, we are in the \emph{cycle-friendly} case.
    Our algorithm will work slightly differently in those two cases.
    However, the fundamental idea is still the same in both cases: we prune iteratively the set of edges, and each of the subsets $F \subseteq E$ which we obtain will induce a strategy profile $\bsigma^F$, in which the profitable deviations will be detected and used to prune new edges.
    However, the definition of $\bsigma^F$ differs in the cycle-averse and the cycle-friendly case.

    \paragraph*{Algorithm in the cycle-friendly case.}
    In the cycle-friendly case, for a given set of edges $F$, the strategy profile $\bsigma^F$ in the game $\Game_{\|v_0}$, is defined as follows: from each non-stochastic vertex $v$, when $v$ is seen for the first time, the strategy profile randomises uniformly between all the edges $vw \in F$.
    Later, when $v$ is visited again, it always repeats the same choice.
    Equivalently, each player initially chooses, at random, a positional strategy, and then follows it.
    If some player $i$ deviates and takes an edge that they are not supposed to take (be it an edge that does not belong to $F$ or an outgoing edge of a vertex from which a different edge has already been taken), then all the players switch to the positional strategy profile $\btau^{\dag i}$, where $\btau^{\dag i}_{-i}$ minimises the best risk measure that player $i$ can get (a positional such strategy profile exists by \cref{lm:secretlemma}), and $\tau^{\dag i}_i$ is some positional strategy.

    Our algorithm in the cycle-friendly case is presented in~\cref{algo:cyclefriendly}.
    Each step $k$ consists of identifying a new set of vertices $V_\bad^k$ that must be avoided.
    At step $k=0$, it is the set of terminal vertices that give some player $i$ a payoff that is larger than $y_i$, which would then make them have an off-constraints risk measure.
    At step $k \geq 1$, it is the set of vertices $v$ whose adversarial value $\val(v)$ is greater than the risk $z_i^k = \X_i(\bsigma^{E_k})$, where $i$ is the player controlling $v$.
    In other words, the vertices from which that player can have a profitable deviation.
    Note that it that second case, the computation of $V_\bad^k$ requires the computation of $z_i^k$, which can be done in time $\Oh(m)$ by computing the set of terminals that are accessible from $v_0$ in $(V, E_k)$, and by deciding whether the probability of reaching no terminal is positive: that will be the case if and only if there exists a positional strategy profile that uses only edges of $E_k$ (and therefore that $\bsigma^{E_k}$ is following with positive probability) such that with positive probability no terminal vertex is reached, which can be decided in time $\Oh(m)$ using \cref{lm:secretlemma}.

    Then, the positive probabilistic attractor $A_k = \Attr(v_\bad^k, E_k)$ is computed.
    If $k \geq 1$ and $v_0 \in A_k$, i.e., if it is not possible to avoid reaching the set $V_\bad^k$, the answer $\No$ is returned.
    Otherwise, the set $E_{k+1}$ is defined from $E_k$ by removing all the edges that lead from a vertex that does not belong to $A_k$ to a vertex that does, thus making sure that $V_\bad^k$ will never be reached.
    The algorithm stops when there is no more edge to remove.
    Then, the algorithm answers $\Yes$ and outputs the set $E_k$, as a succinct representation of the strategy profile $\bsigma^{E_{k+1}}$, if we have $z_i^k \geq x_i$ for each $i$, and answers $\No$ otherwise.

            \begin{algorithm}
            \begin{algorithmic}\caption{Constrained existence problem with optimists in the cycle-friendly case}\label{algo:cyclefriendly}
                \Procedure{CycleFriendly}{$\Game, \Bar{x},\Bar{y}$}
                    \State $k \gets 0$
                    \State $E_k\gets E$
                    \State $V_\bad^k = \{t\in T \mid \mu_i(t)> y_i\}$ 
                    \State $A_k \gets \Attr(V^k_\bad, E_k)$
                    \If{$v_0 \in A_k$}
                        \Return{$\No$}
                    \Else
                        \State $E_{k+1} \gets E_k \setminus \{uv\in E_k\mid u\not\in A_k\text{ and }v\in A_k\}$
                    \While{$k=0$\text{ or }$E_{k+1}\neq E_k$}
                        \State $k\gets k+1$
                        \State Compute $z^k_i = \X_i(\bsigma^{E_k})$ for each $i \in \Pi$
                        \State $V^k_\bad \gets \{v \mid \val(v) > z_i^k \text{ for } i \in \Pi \text{ such that } v \in V_i\}$
                        \State $A_k \gets \Attr(V^k_\bad, E_k)$
                        \If{$v_0 \in A_k$}
                            \Return{$\No$}
                        \Else
                            \State $E_{k+1} \gets E_k \setminus \{uv\in E_k\mid u\notin A_k\text{ and }v\in A_k\}$
                        \EndIf
                    \EndWhile
                    \EndIf
                    \If{$z_i^k\geq x_i$ for all players $i$}
                        \Return $(\Yes, E_{k+1})$
                    \Else{\text{ }}\Return{$\No$}
                    \EndIf
                \EndProcedure
            \end{algorithmic}
        \end{algorithm}

     \subparagraph*{Correctness in the cycle-friendly case.}
To prove the correctness of \cref{algo:cyclefriendly}, we first need to prove that the edge removals are such that all vertices always keep at least one outgoing edge, and that the stochastic ones always keep all of them, so that the strategy $\bsigma^{E_k}$ is always properly defined.

\begin{invariant}
    At each step $k$, every vertex $v \not\in V_?$ is such that $E_k(v) \neq \emptyset$, and every vertex $v \in V_?$ is such that $E_k(v) = E(v)$.
\end{invariant}

The proof is left to the reader.
We also need termination.

\begin{claim}
    \cref{algo:cyclefriendly} terminates.
\end{claim}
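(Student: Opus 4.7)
The plan is to show that the sequence of edge sets $(E_k)_k$ generated by \cref{algo:cyclefriendly} is non-increasing, and that the main while loop, after its initial iteration at $k=0$, either halts with $E_{k+1} = E_k$ or strictly decreases $E_k$; termination then follows by finiteness of $E$.

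First, I would verify the inclusion $E_{k+1} \subseteq E_k$ at every step, which is immediate from the assignment $E_{k+1} \gets E_k \setminus \{uv \in E_k \mid u \notin A_k, v \in A_k\}$. This gives a non-increasing chain $E_0 \supseteq E_1 \supseteq E_2 \supseteq \dots$ of subsets of the finite set $E$.

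Second, I would analyse the loop guard \emph{$k=0$ or $E_{k+1} \neq E_k$}. The initial entry at $k=0$ is covered by the first disjunct. For every subsequent $k \geq 1$, the body of the loop executes only if $E_{k+1} \neq E_k$, which combined with the previous inclusion yields $E_{k+1} \subsetneq E_k$, i.e., strict progress: at least one edge is discarded. Since the universe contains only $|E| = m$ edges, the loop cannot execute its body more than $m+1$ times. On any iteration the algorithm may also short-circuit by returning $\No$ when $v_0 \in A_k$; either way, the procedure halts after $\Oh(m)$ iterations.

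The argument contains no real obstacle, since monotonicity is built directly into the definition of $E_{k+1}$ and the loop guard explicitly enforces progress whenever it does not exit. The only step to double-check is the control flow, namely that the counter $k$ is incremented and $V^k_\bad$, $A_k$, and $E_{k+1}$ are recomputed before the guard is re-evaluated, so that a stable $E_{k+1} = E_k$ actually causes the loop to terminate rather than loop forever on stale values. The resulting $\Oh(m)$ bound on the number of iterations will also feed into the subsequent derivation of the overall $\Oh(pm^2)$ running time claimed in \cref{lm:ptimeupperbound}.
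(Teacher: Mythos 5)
Your proof is correct and follows essentially the same route as the paper's: the sets $E_k$ form a non-increasing chain by construction, the loop guard forces $E_{k+1} \subsetneq E_k$ on every continued iteration with $k \geq 1$ (the $k=0$ disjunct only covering the initial entry, where possibly $E_0 = E_1$), and finiteness of $E$ bounds the number of iterations by $\Oh(m)$. Your additional remarks on the control flow and the early $\No$ returns are consistent with the algorithm as written and do not change the argument.
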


\begin{claimproof}
    At each step $k \geq 1$, we either have that the algorithm terminates, or that an edge is removed.
    The sequence $E_1, E_2, \dots$ is therefore strictly decreasing (note that we might have $E_0 = E_1$), hence it cannot be infinite.
\end{claimproof}

Now, to prove correctness, we will first prove the following claim.

        \begin{claim} \label{claim:zik}
            For each player $i$ and each index $k$, every strategy profile $\bsigma'$ that uses only edges of $E_k$ is such that $\X_i(\bsigma') \leq z_i^k$.
        \end{claim}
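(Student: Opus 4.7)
The plan is to exploit a key structural feature of $\bsigma^{E_k}$. By its definition (uniform randomisation at the first visit of each non-stochastic vertex, with the choice frozen thereafter), playing $\bsigma^{E_k}$ is equivalent to sampling at the start a positional strategy profile $\btau$ uniformly at random from those whose support lies in $E_k$, and then playing $\btau$. As a direct consequence, each such positional profile $\btau$ is played with strictly positive probability, so any outcome---``reach terminal $t$'' or ``reach no terminal at all''---has positive probability under $\bsigma^{E_k}$ if and only if it has positive probability under some positional strategy profile supported on $E_k$.

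I would then establish the analogous support property for $\bsigma'$: every outcome $O$ attained with positive probability by $\bsigma'$ is also attained with positive probability by some positional profile supported on $E_k$. To see this, collapse all non-stochastic vertices into a single fictitious player in the MDP obtained by restricting the graph to the edges in $E_k$ (which is well-defined thanks to the invariant that stochastic vertices keep all their outgoing edges), and give this fictitious player the objective of either ``eventually reaching $t$'' (for a given terminal $t$) or ``never reaching any terminal''. Since $\bsigma'$ uses only edges of $E_k$, it induces a strategy in this collapsed MDP that achieves the objective with positive probability, and \cref{lm:secretlemma} (standard memoryless determinacy of positive-probability reachability and safety objectives in MDPs) then provides a positional strategy with the same guarantee. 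Such a positional strategy unpacks into a positional profile $\btau$ on $E_k$ witnessing $O$, which by the first step is played with positive probability under $\bsigma^{E_k}$.

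Combining the two steps, every payoff value that player $i$ receives with positive probability under $\bsigma'$ is also received with positive probability under $\bsigma^{E_k}$: the value $\mu_i(t)$ whenever a terminal $t$ is reached, and the value $0$ in the event that no terminal is reached. Taking the supremum of such values---which is exactly how the optimistic risk measure $\X_i$ is computed when $i$ is an optimist---then yields $\X_i(\bsigma') \leq \X_i(\bsigma^{E_k}) = z_i^k$. The main subtlety I foresee is the fidelity of the ``uniform over positional profiles'' reformulation: a naïve reading of $\bsigma^{E_k}$ as fresh uniform randomisation at each visit would break the covering argument, since it could not reproduce the behaviour of a deterministic $\bsigma'$ that revisits a vertex and always takes the same outgoing edge. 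The ``freeze-after-first-visit'' convention chosen in the construction is precisely what makes the reformulation hold by definition, and once that is in place the remainder of the argument is a routine application of memoryless determinacy for qualitative MDP objectives.
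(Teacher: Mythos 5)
Your proposal is correct and follows essentially the same route as the paper: both arguments reduce the claim to showing that every payoff vector obtained with positive probability under $\bsigma'$ is also obtained with positive probability under $\bsigma^{E_k}$, using the fact that the freeze-after-first-visit construction makes $\bsigma^{E_k}$ play every positional profile supported on $E_k$ with positive probability, and invoking \cref{lm:secretlemma} for the positional witness, before concluding via optimism. The only cosmetic difference is that the paper dispatches the ``terminal reached'' case by a direct accessibility argument and reserves the positional-strategy machinery for the ``no terminal reached'' case, whereas you route both cases through it.
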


        \begin{claimproof}
            This result is a consequence of the fact that every payoff vector that can be obtained with positive probability in the strategy profile $\bsigma'$ is obtained with positive probability in the strategy profile $\bsigma^{E_k}$.
            
            Indeed, consider some payoff vector $\bz$ that has a positive probability to be generated in $\bsigma'$.
            If $\bz$ is obtained by reaching a terminal vertex, then that terminal vertex is accessible from $v_0$ in the graph $(V, E_k)$, and it therefore has a positive probability to be reached in $\bsigma^{E_k}$.
            
            If $\bz = (0)_i$ is obtained by reaching no terminal vertex, then by \cref{lm:secretlemma}, there exists a positional strategy profile $\btau$ that uses only edges of $E_k$ such that with positive probability, no terminal vertex is reached.
            Then, when following the strategy profile $\bsigma^{E_k}$, there is a positive probability that the players actually follow $\btau$.
            And therefore, there is also a positive probability to get the payoff vector $\bz = (0)_i$ in the strategy profile $\bsigma^{E_k}$.

            Since all players are optimists, the claim follows.
        \end{claimproof}

Note that this claim implies that the sequence $(z_i^k)_k$, for each $i$, is nondecreasing.

We can now prove correctness.
To do so, we need to prove two propositions: the algorithm recognises only positive instances, and recognises all of them.

\begin{proposition}
    The algorithm recognises only positive instances.
\end{proposition}

\begin{claimproof}
        Let us assume that the algorithm answers $\Yes$ at step $k$: let us show that the strategy profile $\bsigma^{E_k}$ is an XRSE that satisfies the desired constraints.
        Note that the algorithm does never answer $\Yes$ at step $0$, hence we necessarily have $k \geq 1$.

        \subparagraph*{The strategy profile $\bsigma^{E_k}$ satisfies $\bx \leq \X({\bsigma^{E_k} }) \leq \by$.}
        
        The lower bound is immediate since the algorithm answers $\Yes$ at step $k$ only if the strategy profile $\bsigma^{E_k}$ satisfies that constraint.

            Regarding the upper bound, observe that the set $E_1$ has been defined so that the set $\Attr(V_{\frownie}^0, E)$, and therefore the set $V_{\frownie}^0$, is not accessible from $v_0$ in the graph $(V, E_1)$, and therefore not in the graph $(V, E_{k+1})$.
            Thus, it is almost sure in $\bsigma^{E_{k+1}}$ that no vertex of $V_{\frownie}^0$ will ever be reached.
            In other words, all terminals that have a positive probability of being reached give each player $i$ a lower payoff than $y_i$.
            Now, if there is a positive probability that the play never will reach a terminal, that also does not give any player $i$ such a payoff, since we are in the cycle-friendly case.

        \subparagraph*{The strategy profile $\sigma^{E_k}$ is an XRSE.}
        
        Let $i$ be a player, and let $\sigma'_i$ be a deviation of player $i$ from $\bsigma^{E_k}$.
        We can assume without loss of generality that $\sigma'_i$ is pure.        
        Let $z' = \X_i({\bsigma^{E_k}_{-i}, \sigma'_i})$ be the extreme risk measure obtained by the player $i$.
        We want to prove that the deviation $\sigma'_i$ is not profitable, that is, we have $z' \leq z_i^k$.

        If the deviation $\sigma'_i$ uses only the edges of $E_k$, then it cannot be profitable by \cref{claim:zik}.
        But if it does use more edges, let us show that it cannot be a profitable deviation either.
 
\begin{claim}
    If there is a history $hv$ compatible with $\bsigma^{E_k}$ such that $v\sigma'_i(hv) \not\in E_k$, then we have $\X_i(\bsigma^{E_k}_{\|hv}, \sigma'_{i\|hv}) \leq z_i^k$.
\end{claim}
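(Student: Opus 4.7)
My plan is to combine two simple observations: (i) the vertex $v$ itself lies outside $A_k$, hence outside $V_\bad^k$, so that $\val(v) \leq z_i^k$; and (ii) once the illegal edge $v \to w := v\sigma'_i(hv)$ is taken, the opponents switch to the positional adversarial profile $\btau^{\dag i}_{-i}$, which minimises player $i$'s best response uniformly from every vertex.

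For (i), since $hv$ is compatible with $\bsigma^{E_k}$, the vertex $v$ is reachable from $v_0$ within $(V,E_k)$. Because the algorithm stops at step $k$ without returning $\No$, we have $v_0 \notin A_k$, and moreover $E_{k+1}=E_k$; the latter equality means, by construction of $E_{k+1}$, that $E_k$ already contains no edge from $V\setminus A_k$ into $A_k$. Consequently, no play using edges of $E_k$ that starts at $v_0$ can ever enter $A_k$, so $v \notin A_k \supseteq V_\bad^k$, and since $v \in V_i$ this yields $\val(v) \leq z_i^k$.

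For (ii), unfolding the definition of $\bsigma^{E_k}$: because $vw \notin E_k$ is an illegal move at $v$, the profile prescribes that every player $j \neq i$ switches, from history $hvw$ onwards, to their component of $\btau^{\dag i}$. Since $\btau^{\dag i}$ is positional and $v \in V_i$, the opponents' choice at $v$ is irrelevant, and so the profiles $(\bsigma^{E_k}_{-i\|hv},\sigma'_{i\|hv})$ and $(\btau^{\dag i}_{-i\|v},\sigma'_{i\|hv})$ induce the very same distribution over plays in $\Game_{\|v}$. Combining this with (i) gives
\[
\X_i(\bsigma^{E_k}_{\|hv},\sigma'_{i\|hv}) = \X_i(\btau^{\dag i}_{-i\|v},\sigma'_{i\|hv}) \leq \sup_{\tau_i\in\Strat_i\Game_{\|v}} \X_i(\btau^{\dag i}_{-i\|v},\tau_i) = \val(v) \leq z_i^k,
\]
which is exactly the claim.

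The main subtle point, and therefore the main obstacle I anticipate, is justifying that the single positional profile $\btau^{\dag i}_{-i}$ realises the adversarial value $\val$ uniformly from every starting vertex; this is not spelled out verbatim in \cref{lm:secretlemma}, but it is a routine consequence of the attractor-style construction underlying that lemma, and I would invoke it without further elaboration.
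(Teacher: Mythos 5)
Your proof is correct and follows essentially the same route as the paper's: the punishing profile $\btau^{\dag i}_{-i}$ caps player $i$'s achievable risk measure at $\val(v)$, and accessibility of $v$ from $v_0$ in $(V,E_k)$ forces $v \notin A_k \supseteq V_\bad^k$, hence $\val(v) \leq z_i^k$. You merely spell out two points the paper leaves implicit (why the fixed point $E_{k+1}=E_k$ keeps $A_k$ unreachable, and the uniform optimality of the positional punishing profile from every vertex), both of which are legitimate.
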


\begin{claimproof}
    After such a history, the strategy profile $\bsigma^{E_k}_{-i}$ follows the positional strategy profile $\btau^{\dag i}_{-i}$.
    By the definition of that strategy profile, we have $\X_i(\bsigma^{E_k}_{\|hv}, \sigma'_{i\|hv}) \leq \val(v)$.
    On the other hand, the vertex $v$ is accessible from $v_0$ in $(V, E_k)$, since it is visited with a positive probability in $\bsigma^{E_k}$.
    Therefore, it does not belong to the set $A_k$, and in particular not to the set $V_\bad^k$, which means that we have $\val(v) \leq z_i^k$.
    Hence, the conclusion follows. 
\end{claimproof}

In the general case, the payoffs that player $i$ obtains with positive probability in the strategy profile $(\bsigma^{E_k}_{-i}, \sigma'_i)$ are obtained either by using only edges that belong to $E_k$, or by using an edge that does not. In both cases, we have shown that player $i$ cannot get a payoff greater than $z_i^k$, which proves that the strategy profile $\bsigma^{E_k}$ is an XRSE.
\cref{algo:cyclefriendly} answers $\Yes$ only on positive instances, and outputs in that case a succinct representation of an XRSE matching the constraints.
\end{claimproof}

It now remains to prove the converse.
 
            \begin{proposition}
                \cref{algo:cyclefriendly} recognises all positive instances. 
            \end{proposition}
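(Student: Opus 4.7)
The plan is to prove, by induction on $k$, that whenever a positive instance is witnessed by an XRSE $\bsigma$ satisfying $\bx \leq \X(\bsigma) \leq \by$, the strategy profile $\bsigma$ uses (with positive probability) only edges that belong to $E_k$, at every iteration $k$ reached by the algorithm. Once this invariant is established, the algorithm cannot halt with $\No$ inside the loop: since $\bsigma$ is played from $v_0$ and uses only edges in $E_k$, the vertex $v_0$ cannot be in $\Attr(V^k_\bad, E_k) = A_k$ unless $\bsigma$ itself visits $V^k_\bad$ with positive probability, which the induction will forbid. Similarly, upon exiting the loop, \cref{claim:zik} applied to the invariant yields $x_i \leq \X_i(\bsigma) \leq z_i^k$ for every player $i$, so the final test succeeds and the algorithm returns $\Yes$.

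The base case $k=0$ is immediate since $E_0 = E$. For the inductive step, the key reduction is to show that if $\bsigma$ uses only edges of $E_k$, then $\bsigma$ visits no vertex of $V^k_\bad$ with positive probability; the passage to $E_{k+1}$ will then follow automatically, because any vertex of $A_k$ would, by definition of the positive probabilistic attractor, lead $\bsigma$ into $V^k_\bad$ with positive probability, so no vertex of $A_k$ is visited by $\bsigma$ and hence no edge from $V \setminus A_k$ into $A_k$ is used by $\bsigma$ either. At $k=0$, the set $V^0_\bad$ contains only terminals $t$ such that $\mu_i(t) > y_i$ for some $i$; visiting such a $t$ with positive probability would immediately contradict the upper-bound constraint $\X_i(\bsigma) \leq y_i$, since player $i$ is an optimist.

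The main obstacle, and the heart of the proof, will be the inductive step for $k \geq 1$. Here I plan to argue by contradiction: suppose some $v \in V^k_\bad$, controlled by some player $i$, is visited by $\bsigma$ with positive probability. By the definition of $\val(v)$ and by \cref{lm:secretlemma}, player $i$ has a (positional) strategy $\tau_i$ in $\Game_{\|v}$ such that $\X_i(\bsigma_{-i\|hv}, \tau_i) \geq \val(v)$ against the continuation of $\bsigma_{-i}$ after any history $hv$ compatible with $\bsigma$. Composing this with $\sigma_i$ up to the first visit of $v$ yields a deviation $\sigma'_i$ of player $i$ from $\bsigma$ in the full game; since $v$ is reached with positive probability and $i$ is optimistic, this will give $\X_i(\bsigma_{-i}, \sigma'_i) \geq \val(v) > z_i^k \geq \X_i(\bsigma)$, where the last inequality follows from the inductive hypothesis together with \cref{claim:zik}. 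This contradicts $\bsigma$ being an XRSE, closing the induction and proving the proposition.
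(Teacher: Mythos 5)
Your proposal is correct and follows essentially the same route as the paper: the same invariant (every edge used with positive probability by the witness XRSE $\bsigma$ survives into each $E_k$), proved by induction with the base case handled via the upper-bound constraint, the inductive step via a profitable-deviation contradiction at a vertex of $V^k_\bad$ using $\val(v) > z_i^k \geq \X_i(\bsigma)$ (the last inequality from \cref{claim:zik}), and the same two-case analysis ruling out a $\No$ answer. The only difference is cosmetic: you establish ``$\bsigma$ avoids $V^k_\bad$'' first and deduce that no removed edge is used, whereas the paper starts from a hypothetically used removed edge and derives a visit to $V^k_\bad$ --- the arguments are logically identical.
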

    
            \begin{claimproof}
           Let us assume that we have a positive instance, i.e., that there exists an XRSE $\bsigma$ with $\bx \leq \X(\bsigma) \leq \by$.
        Let us show that the algorithm will answer $\Yes$.
        To do so, we first prove the following invariant: if an edge is removed at some step, then it is never taken by the XRSE $\bsigma$.

        \begin{invariant} \label{inv:edgesnotused}
            For each $k \geq 0$, every edge that has positive probability to be eventually taken in $\bsigma$ belongs to $E_k$.
        \end{invariant}

        \begin{claimproof}
        We prove the invariant by induction. 
        
        \subparagraph*{Base case.} The case $k=0$ is immediate, since we have $E_0 = E$.
        
        Further, at step $k=1$, if the strategy profile $\bsigma$ uses eventually, with positive probability, an edge that does not belong to $E_1$, then it goes with positive probability to a vertex $v \in \Attr(V^0_\bad, E_0)$.
        Then, with positive probability, a terminal vertex will be reached that gives to some player $i$ a payoff greater than $y_i$, which is impossible.
        Therefore, such an edge cannot be taken in $\bsigma$.

        \subparagraph*{Induction step.} Let us assume that the invariant is true until step $k \geq 1$, and let us show that it holds at step $k+1$.    
        Let $uv$ be an edge that is used with positive probability when following $\bsigma$, and let us assume toward contradiction that it does not belong to $E_{k+1}$.
        Since the invariant is true at each step until $k$, we can assume that $uv$ has been removed at step $k$, i.e., that we have $uv \in E_k \setminus E_{k+1}$.
        Then, we have $u \not\in A_k$ and $v \in A_k$.
        The strategy profile $\bsigma$ has therefore positive probability of visiting the set $A_k$, and therefore the set $V_\bad^k$.
        Then, from a vertex of $V_\bad^k$, i.e., a vertex $v$ with $\val(v) > z_i^k$, player $i$ can deviate and get a risk measure strictly better than $z_i^k$.
        But since the invariant is true at step $k$, the strategy profile $\bsigma$ uses only vertices of $E_k$, and therefore, by \cref{claim:zik}, we have $\X_i(\bsigma) \leq z_i^k$: player $i$ has a profitable deviation in $\bsigma$, which is impossible.
\end{claimproof}

        We are now able to conclude.
        The answer $\No$ can be given in the two following cases:
            \begin{itemize}
                \item \emph{If at step $k$, we have $v_0 \in \Attr(V^k_\bad, E_k)$.}
                Then, the strategy profile $\bsigma$ visits the set $V^k_\bad$ with positive probability.
                With the same arguments that were used in the proof of \cref{inv:edgesnotused}, that is not possible.
                
                \item \emph{If during step $k$, no edge is removed, but we have $z^k_i < x_i$ for some player $i$.}
                Since $\bsigma$ uses only edges of $E_k$ by \cref{inv:edgesnotused}, we can apply \cref{claim:zik}, and obtain $\X_i(\bsigma) \leq z^k_i$, and therefore $\X_i(\bsigma) < x_i$: that case is therefore also impossible by definition of $\bsigma$.
            \end{itemize}
        None of those cases is possible, hence our algorithm will eventually answer $\Yes$.
    \end{claimproof}

    \paragraph*{Algorithm in the cycle-averse case}

    The algorithm and the structure of the proof will be similar.
    However, we need some significant modifications, especially in the definition of the strategy profiles $\bsigma^F$.

    In the cycle-averse case, for a given set of edges $F$, the strategy profile $\bsigma^F$, in the game $\Game_{\|v_0}$, is defined as follows: from each vertex $v \not\in V_?$, it randomises uniformly between all the edges $vw \in F$. Contrary to the cycle-friendly case, the outcome of such a randomisation has no influence on what will happen if $v$ is seen again.
    If some player $i$ deviates and takes an edge that they are not supposed to take (an edge that does not belong to $E_k$, then), then all the players switch to the positional strategy profile $\btau^{\dag i}$, where $\btau^{\dag i}_{-i}$ minimises the best risk measure that player $i$ can get (a positional such strategy profile exists by \cref{lm:secretlemma}), and $\tau^{\dag i}_i$ is some positional strategy.
 
    Our algorithm in the cycle-averse case is presented in~\cref{algo:cycleaverse}.
    Again, each step $k$ identifies a new set of vertices that must be avoided.
    Their definition depends now on the parity of $k$.
    When $k$ is even, it is the same as in the cycle-friendly case: the set $V^k_\bad$ is the set of vertices $v$ such that $\val(v) > z_i^k$, where $i$ is the player controlling $v$, and $A_k$ is the positive probabilistic attractor of $V_\bad^k$.
    When $k$ is odd, we define directly $A_k$ as the set of vertices from which whatever the players play, there is a positive probability of reaching no terminal vertex.
    
    Again, the computation of $V_\bad^k$ for an even step $k \geq 2$ requires the computation of $z_i^k$, which can be done in time $\Oh(m)$ by computing the set of terminals that are accessible from $v_0$ in $(V, E_k)$, and by deciding whether the probability of reaching no terminal is positive: that will be the case, now, if and only if there exists a vertex from which no terminal vertex is accessible, which can also be decided in time $\Oh(m)$.
    As for odd steps, the computation of $A_k$ can also be done in $\Oh(m)$ using \cref{lm:secretlemma}.
    
    If $k \geq 1$ and $v_0 \in A_k$, i.e., if it is not possible to avoid reaching the set $V_\bad^k$, the answer $\No$ is returned.
    Otherwise, the set $E_{k+1}$ is defined from $E_k$ by removing all the edges that lead from a vertex that does not belong to $A_k$ to a vertex that does, thus making sure that $V_\bad^k$ will never be reached.

    The loop stops when there is no more edge to remove, i.e., when we get $E_{k+2} = E_k$.
    Then, the algorithm answers $\No$ if we have $z_i^k < x_i$ for some $i$.
    Otherwise, it  performs \emph{final refinements}, defined as follows: first, it defines $F_0 = E_k$.
    Then, once $F_\l$ is defined for some $\l$, it checks whether there exists an edge $uv$ that matches the following conditions in the graph $(V, E_\l)$:
        \begin{enumerate}
            \item\label{itm:cuttableedge} the vertex $u$ is not stochastic and has several outgoing edges;
            \item\label{itm:nolessterminals} all the terminal vertices accessible from $v$ are also accessible from $v_0$ without using $uv$;
            \item\label{itm:nocycle} at least one terminal vertex is accessible from $u$ without using $uv$.
        \end{enumerate}
    In the following, we will refer to those conditions as Conditions~\ref{itm:cuttableedge}, \ref{itm:nolessterminals}, and \ref{itm:nocycle}.
    If there exists such an edge, then we define $F_{\l+1} = F_\l \setminus \{uv\}$.
    If there is no such edge, the algorithm stops there, answers $\Yes$, and returns $F_\l$ as a succinct representation of $\bsigma^{F_\l}$.

            \begin{algorithm}[t]
        \begin{algorithmic}\caption{Constrained existence problem with optimists in the cycle-averse case}\label{algo:cycleaverse}
                \Procedure{CycleAverse}{$\Game, \Bar{x},\Bar{y}$}
                    \State $k \gets 0$
                    \State $E_k\gets E$
                    \State $V_\bad^k = \{t\in T \mid \mu_i(t)> y_i\}$ 
                    \State $A_k \gets \Attr(V^k_\bad, E_k)$
                    \If{$v_0 \in A_k$}
                        \Return{$\No$}
                    \Else
                        \State $E_{k+1} \gets E_k \setminus \{uv\in E_k\mid u\not\in A_k\text{ and }v\in A_k\}$
                    \While{$E_{k+2}\neq E_k$\text{ or }$k \leq 1$}
                        \State $k\gets k+1$
                        \If{$k$ is even}
                            \State Compute $z^k_i = \X_i(\bsigma^{E_k})$ for each $i \in \Pi$
                            \State $V^k_\bad \gets \{v \mid \val(v) > z_i^k \text{ for } i \in \Pi \text{ such that } v \in V_i\}$
                            \State $A_k \gets \Attr(V^k_\bad, E_k)$
                        \Else
                            \State $A^k_\bad \gets \{v \mid \forall \btau \in \Strat_\Pi\Game_{\|v}, \prob_\btau(\Occ \cap T = \emptyset) > 0\}$
                        \EndIf
                        \If{$v_0 \in A_k$}
                            \Return{$\No$}
                        \Else
                            \State $E_{k+1} \gets E_k \setminus \{uv\in E_k\mid u\notin A_k\text{ and }v\in A_k\}$
                        \EndIf
                    \EndWhile
                    \EndIf
                    \If{$z_i^k < x_i$ for some player $i$}
                        \Return{$\No$}
                    \Else
                        \State $\l \gets 0$
                        \State $F_\l \gets E_k$
                        \Comment{Final refinement steps}
                        \While{there exists $uv$ satisfying Conditions~\ref{itm:cuttableedge}, \ref{itm:nolessterminals}, and \ref{itm:nocycle}}
                            \State $\l \gets \l+1$
                            \State $F_{\l+1} \gets F_\l \setminus \{uv\}$
                        \EndWhile
                        \Return{$(\Yes, F_\l)$}
                    \EndIf
                \EndProcedure
            \end{algorithmic}
        \end{algorithm}

     \paragraph*{Correctness in the cycle-averse case.}
The fact that $\bsigma^{E_k}$ and $\bsigma^{F_\l}$ are always correctly defined can be proved with arguments similar as those that were used for the cycle-friendly case.
We now focus on correctness properly said.
We first need the following properties.

\begin{invariant}\label{inv:terminalsaccessible}
    For every even $k > 0$, and for every $\l$, the graph $(V, E_k)$, or $(V, F_\l)$, contains no vertex that is accessible from $v_0$ and from which no terminal vertex is accessible.
\end{invariant}

\begin{claimproof}
    In the graph $(V, E_k)$ (for $k>0$ even), no induction is required: the set $E_k$ has been obtained after an odd step, in which the set $A_{k-1}$ has been made inaccessible.
    Thus, if we have a vertex $v$ from which no terminal vertex is accessible, it means that in the graph $(V, E_{k-1})$, all paths from $v$ to a terminal vertex were traversing a vertex of $A_{k-1}$, which implies that $v$ itself belonged to $A_{k-1}$, and is therefore not accessible from $v_0$ in $(V, E_k)$.

    This also proves that the invariant is true during the final refinements at step $\l = 0$.
    If now we assume that it is true at some step $\l$, then Condition~\ref{itm:nocycle} guarantees that it remains true at step $\l+1$.
\end{claimproof}

\begin{invariant} \label{inv:finishingtouchesconstantpayoff}
    If the algorithm switches to final refinements after step $k$, then for each step $\l$ of final refinements and for each player $i$, we have $\X_i(\bsigma^{F_\l}) = z_i^k$.
\end{invariant}

\begin{claimproof}
    The invariant is immediate for $\l=0$, since we have $F_\l = E_k$.
    Then, if it is true at step $\l$, it remains true at step $\l+1$.
    Indeed, Condition~\ref{itm:nolessterminals} guarantees that the set of terminal vertices accessible from $v_0$ in $(V, F_\l)$ is the same as in $(V, F_{\l+1})$.
    In other words, the terminal vertices that are reached with positive probability in $\bsigma^{F_\l}$ and $\bsigma^{F_{\l+1}}$ are the same.
    Moreover, \cref{inv:terminalsaccessible} guarantees that it is almost sure that some terminal vertex will be reached, in $\bsigma^{F_\l}$ as well as in $\bsigma^{F_{\l+1}}$.
    Therefore, the set of payoff vectors that have positive probability to be obtained is the same in both strategy profiles, hence the risk measures are the same.
\end{claimproof}

We can now prove correctness.
To do so, we need to prove two propositions: the algorithm recognises only positive instances, and recognises all of them.

\begin{proposition}
    The algorithm recognises only positive instances.
\end{proposition}

\begin{claimproof}
        Let us assume that the algorithm answers $\Yes$ at step $\l$ of the final refinements, after having switched to the final refinements loop at step $k$: let us show that the strategy profile $\bsigma^{F_\l}$ is an XRSE that satisfies the desired constraints.
        Note that the algorithm does never answer $\Yes$ at step $0$, hence we necessarily have $k \geq 1$.

 \subparagraph*{The strategy profile $\bsigma^{F_\l}$ satisfies $\bx \leq \X({\bsigma^{F_\l} }) \leq \by$.}
        
        The algorithm switches to the final refinements at step $k$ only if the strategy profile $\bsigma^{F_\l}$ satisfies $\X(\bsigma^{E_k}) \geq \bx$.
        Then, by \cref{inv:finishingtouchesconstantpayoff}, we also have $\X(\bsigma^{F_\l}) \geq \bx$.

        As for the upper bound, observe that the set $E_1$ has been defined so that the set $\Attr(V_{\frownie}^0, E)$, and therefore the set $V_{\frownie}^0$, is not accessible from $v_0$ in the graph $(V, E_1)$, and therefore not in the graph $(V, F_\l)$ either.
        Thus, it is almost sure in $\bsigma^{F_\l}$ that no vertex of $V_{\frownie}^0$ will ever be reached.
        In other words, all terminals that have positive probability to be reached give to each player $i$ a payoff smaller than $y_i$.
        That is sufficient to prove the lower bound, because it is almost sure, when following $\bsigma^{F_\l}$, that some terminal vertex will eventually be reached, by \cref{inv:terminalsaccessible}.

        \subparagraph*{The strategy profile $\bsigma^{F_\l}$ is an XRSE.}
        
        Let $i$ be a player, and let $\sigma'_i$ be a deviation of player $i$ from $\bsigma^{F_\l}$.
        Let $z' = \X_i({\bsigma^{F_\l}_{-i}, \sigma'_i})$ be the extreme risk measure obtained by player $i$.
        We want to prove that the deviation $\sigma'_i$ is not profitable, i.e., that we have $z' \leq z_i^k$ (since we have $\X_i(\bsigma^{E_\l}) = z_i^k$ by \cref{inv:finishingtouchesconstantpayoff}).
        To do so, we first show that player $i$ cannot obtain a payoff better than $z_i^k$ after using an edge that does not belong to $F_\l$.

        We first show that if the deviation $\sigma'_i$ uses only edges of $F_\l$, then it cannot be profitable.

\begin{claim}\label{claim:finishingtouchescycleimpossible}
    If it is almost sure, when following $(\bsigma^{F_\l}_{-i}, \sigma'_i)$, that only edges of $F_\l$ will be used, then the deviation $\sigma'_i$ is not profitable.
\end{claim}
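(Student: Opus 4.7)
} My plan is to mimic the argument of \cref{claim:zik}, but exploit the fact that, in the cycle-averse case, the strategy profile $\bsigma^{F_\l}$ randomises uniformly among all outgoing edges in $F_\l$ \emph{at each visit} of a non-stochastic vertex (as opposed to committing to one edge on the first visit). The goal will be to show that every payoff vector obtained with positive probability under $(\bsigma^{F_\l}_{-i}, \sigma'_i)$ is also obtained with positive probability under $\bsigma^{F_\l}$, from which the claim follows because player $i$ is an optimist and because $\X_i(\bsigma^{F_\l}) = z_i^k$ by \cref{inv:finishingtouchesconstantpayoff}.

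First I would argue that, when following $(\bsigma^{F_\l}_{-i}, \sigma'_i)$, it is almost sure that some terminal vertex is eventually reached. Indeed, by hypothesis the play almost surely uses only edges of $F_\l$, and by \cref{inv:terminalsaccessible}, from every vertex $v$ accessible from $v_0$ in $(V, F_\l)$ there exists a finite path in $F_\l$ to some terminal. Since the other players (and any stochastic vertex) give positive probability to every edge of $F_\l$ outgoing from a non-stochastic (respectively stochastic) vertex, whatever $\sigma'_i$ does, from every visited vertex there is a uniform lower bound on the probability of reaching a terminal within $|V|$ steps. A standard Borel-Cantelli style argument then rules out staying forever in the non-terminal part with positive probability.

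Next I would observe that, on the almost-sure event that a terminal $t$ is reached using only edges of $F_\l$, the corresponding finite history $h t$ is simply a path in $(V, F_\l)$ from $v_0$ to $t$. Since under $\bsigma^{F_\l}$ every outgoing edge in $F_\l$ of every non-stochastic vertex receives positive probability on every visit, such a history $ht$ also has positive probability in $\bsigma^{F_\l}$. Therefore the set of payoff vectors that occur with positive probability under $(\bsigma^{F_\l}_{-i}, \sigma'_i)$ is contained in the set of payoff vectors occurring with positive probability under $\bsigma^{F_\l}$. Since $\X_i$ is the optimistic risk measure (i.e., the essential supremum of $\mu_i$), this inclusion yields $\X_i(\bsigma^{F_\l}_{-i}, \sigma'_i) \leq \X_i(\bsigma^{F_\l}) = z_i^k$ by \cref{inv:finishingtouchesconstantpayoff}, so $\sigma'_i$ is not profitable.

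The main obstacle I anticipate is the first step, i.e., justifying that no terminal is avoided with positive probability in the deviated profile. Unlike in the cycle-friendly case, here the randomisation of $\bsigma^{F_\l}_{-i}$ is fresh at each visit, which is precisely what makes the argument go through; I would make this explicit by noting that the event ``no terminal is reached in the next $|V|$ steps'' has probability at most $1 - \epsilon$ for some $\epsilon > 0$ depending only on $|F_\l|$ and $|V|$, independently of the past, so iterating gives probability zero to reaching no terminal at all.
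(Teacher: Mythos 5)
Your overall plan (show that every payoff vector realised with positive probability under $(\bsigma^{F_\l}_{-i}, \sigma'_i)$ is also realised with positive probability under $\bsigma^{F_\l}$, then conclude by optimism and \cref{inv:finishingtouchesconstantpayoff}) matches the paper's, and your second step is sound: a terminal reached via $F_\l$-edges is accessible in $(V,F_\l)$ and hence reached with positive probability under $\bsigma^{F_\l}$. The gap is exactly where you anticipated it, and your proposed fix does not close it. The uniform bound ``a terminal is reached within $|V|$ steps with probability at least $\epsilon$, whatever $\sigma'_i$ does'' does not follow from \cref{inv:terminalsaccessible} plus the fresh randomisation of the other players: the witnessing path from the current vertex to a terminal may pass through vertices controlled by player $i$ themselves, and there $\sigma'_i$ --- not the uniform randomisation --- dictates the successor. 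Concretely, if $u,u'\in V_i$ with $uu', u'u, ut \in F_\l$ and $t$ terminal, then \cref{inv:terminalsaccessible} holds at $u$ and $u'$, yet the positional deviation that always plays $uu'$ and $u'u$ avoids every terminal with probability $1$ while using only edges of $F_\l$; no amount of randomisation by the \emph{other} players rescues the $\epsilon$ bound.

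What rules such deviations out is not \cref{inv:terminalsaccessible} but the termination of the final-refinement loop: the edge $uu'$ above satisfies Conditions~\ref{itm:cuttableedge}, \ref{itm:nolessterminals} and~\ref{itm:nocycle} and would have been deleted. The paper's proof makes this the heart of the argument: it first reduces to positional $\sigma'_i$, then supposes a terminal-free region $W$ is trapped with positive probability, builds a DAG of simple paths from a first-entered $w\in W$ to the accessible terminals, and shows that if $\sigma'_i$ cuts every branch of that DAG then some edge of $F_\l$ still satisfies Conditions~\ref{itm:cuttableedge}--\ref{itm:nocycle}, contradicting the exit condition of the refinement loop. Your proof never invokes the fact that no edge satisfying those three conditions remains, so the almost-sure reachability step is unproved; you would need to supply an argument of this kind (or an equivalent one) to complete it.
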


\begin{claimproof}
    First, let us note that as long as player $i$ uses only edges that belong to $F_\l$, the strategy profile $\bsigma^{F_\l}$ behaves in a stationary way, and we can therefore assume without loss of generality that $\sigma'_i$ is positional.

    The payoff $z'$ may be obtained by reaching a terminal vertex: in that case, that terminal vertex is accessible from $v_0$ in $(V, F_\l)$, and therefore also reached with positive probability when following the strategy profile $\bsigma^{F_\l}$, hence $z' \leq z_i^k$.

    Let us show that it cannot be obtained by reaching no terminal.
    We proceed by contradiction: if, in the strategy profile $(\bsigma^{F_\l}, \sigma'_i)$, there is a positive probability of reaching no terminal when following that strategy profile, then there is a vertex that has positive probability of being visited infinitely often.
    We can then define the set $W$ of such vertices, i.e., the set $W = \{v \in V \mid \prob_{\bsigma^{F_\l}, \sigma'_i}(v \in \Inf) > 0\}$.
    Thus, when the strategy profile $(\bsigma^{F_\l}, \sigma'_i)$ is followed from a vertex of $W$, it is almost sure that no terminal vertex is reached, and that the set $W$ will never be left.
    We can then choose $w \in W$ such that it has positive probability of being reached without visiting any other vertex of $W$ before, i.e., such that there exists a history $hw$ from $v_0$ with $\Occ(h) \cap W = \emptyset$ (note that $h$ can be empty).
    
    On the other hand, in the graph $(V, F_\l)$, there is at least one terminal vertex accessible from $w$: all vertices from which no terminal is accessible are made themselves inaccessible at odd steps, the switch to final refinements loop happens only if there is no more edge to remove in that perspective, and Condition~\ref{itm:nocycle} guarantees that the final refinements loop leave at least one terminal vertex accessible from every vertex accessible from $v_0$.

    From each terminal $t$ accessible from $w$, we pick a simple path $h_0^t \dots h_{q_t}^t$ from $h_0^t = w$ to $h_{q_t}^t = t$ in the graph $(V, F_\l)$.
    Those paths define a directed acyclic graph (DAG) $D = (V_D, E_D)$ rooted at $w$, where all non-terminal vertices have at least one outgoing edge, with $V_D \subseteq V$ and $E_D \subseteq F_\l$.
    Now, since $\sigma'_i$ guarantees that no terminal vertex will be reached, each branch $h^t$ of that DAG is such that there exists a (smallest) index $j$ with $h^t_j \in W \cap V_i$, and $\sigma'_i(h^t_j) \neq h^t_{j+1}$.
    It may be the case that $h^t_j \sigma'_i(h^t_j) \in E_D$, i.e., that from $h^t_j$, player $i$ proceeds to an undetectable deviation and takes another branch of the DAG.
    But that cannot be the case for all $t$, otherwise, there would be a branch $h^t$ that would be followed with positive probability when following $(\bsigma^{F_\l}_{-i}, \sigma'_i)$ from $w$, and therefore a terminal vertex $t$ that would be reached with positive probability, which contradicts the definition of $w$.

    There must therefore exist an edge $uv \in F_\l \setminus E_D$, with $u \in V_D \cap V_i \cap W$
    We will show that such an edge should have been removed during the final refinements loop.
    First, it immediately satisfies Condition~\ref{itm:cuttableedge}.
    Moreover, the vertex $u$ is necessarily on a branch $h^t$ of $D$ that leads to a terminal vertex $t$, hence it satisfies Condition~\ref{itm:nocycle}.
    Finally, since $v$ is accessible from $w$ in $(V, F_\l)$, the terminal vertices that are accessible from $v$ in $(V, F_\l)$ are all accessible from $w$ in that same graph, and therefore are accessible from $w$ in the DAG $D$.
    Since $w$ is accessible from $v_0$ without visiting any vertex of $W$, and it particular without visiting $u$, it means that the terminal vertices accessible from $v$ are also accessible from $v_0$ without using the edge $uv$.
    In other words, the edge $uv$ satisfies Condition~\ref{itm:nolessterminals}, and should have been removed during the final refinements.

    This case is therefore impossible: when the players follow the strategy profile $(\bsigma^{F_\l}, \sigma'_i)$, it is almost sure that some terminal vertex will be reached, and that concludes the proof.
\end{claimproof}
        
But now, if the strategy $\sigma'_i$ does use edges that do not belong to $F_\l$, let us show that it cannot be a profitable deviation either.

\begin{claim}
    If there is a history $hv$ compatible with $\bsigma^{F_\l}$ such that $v\sigma'_i(hv) \not\in F_\l$, then we have $\X_i(\bsigma^{F_\l}_{\|hv}, \sigma'_{i\|hv}) \leq z_i^k$.
\end{claim}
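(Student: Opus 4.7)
The plan is to adapt Claim~4 of the cycle-friendly proof to the cycle-averse setting: first I would upper bound $\X_i(\bsigma^{F_\l}_{\|hv}, \sigma'_{i\|hv})$ by the adversarial value $\val(v)$, and then show $\val(v) \leq z_i^k$. For the first bound, recall that in the cycle-averse construction of $\bsigma^F$, any transition through an edge outside $F$ is immediately detectable by the other players, who then switch to the positional strategy profile $\btau^{\dag i}_{-i}$ guaranteed by \cref{lm:secretlemma}. Since $v \in V_i$ (because $\sigma'_i$ acts from $v$) and the deviating edge $v\sigma'_i(hv)$ lies outside $F_\l$, whatever player $i$ does from $v$ onwards, including $\sigma'_{i\|hv}$, the opponents play the minimiser and we obtain $\X_i(\bsigma^{F_\l}_{\|hv}, \sigma'_{i\|hv}) \leq \val(v)$.

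For the second bound, observe that $v$ is visited with positive probability under $\bsigma^{F_\l}$, so $v$ is accessible from $v_0$ in $(V, F_\l)$, and hence in $(V, E_k)$ because $F_\l \subseteq F_0 = E_k$. The exit condition $E_{k+2} = E_k$ of the main loop, combined with the monotonicity $E_{k+2} \subseteq E_{k+1} \subseteq E_k$, forces $E_{k+1} = E_k = E_{k+2}$, so no edges were removed at either step $k$ or step $k+1$; in particular one of these two indices, say $k^*$, is even. At that even iteration, $A_{k^*} = \Attr(V_\bad^{k^*}, E_{k^*})$ was already disconnected from $v_0$ in $(V, E_{k^*})$, since otherwise some outside-to-$A_{k^*}$ edge would have been cut. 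Accessibility of $v$ in $(V, E_{k^*})$ thus forces $v \notin A_{k^*} \supseteq V_\bad^{k^*}$, and the definition of $V_\bad^{k^*}$ together with $v \in V_i$ gives $\val(v) \leq z_i^{k^*}$. Finally, as $E_{k^*} = E_k$ the strategy profiles $\bsigma^{E_{k^*}}$ and $\bsigma^{E_k}$ coincide, so $z_i^{k^*} = z_i^k$ and the desired inequality follows.

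The main obstacle I anticipate is purely bookkeeping: ensuring that the stable tail of the main loop really yields an even index $k^*$ at which the $V_\bad$-based argument applies, and that the quantity $z_i^k$ used in the statement (and tied to $\X_i(\bsigma^{F_\l})$ by Invariant~\ref{inv:finishingtouchesconstantpayoff}) agrees with $z_i^{k^*}$. Both reduce to the observation that throughout the no-removal window at the end of the loop the induced strategy profile does not change, so the risk measures $z_i^j$ stay constant across those indices. Once this is spelled out, the rest of the argument is a direct transcription of the cycle-friendly case.
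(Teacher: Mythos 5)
Your proof follows the same two-step structure as the paper's own argument: first bound the deviation's value by $\val(v)$ via the switch to the punishing profile $\btau^{\dag i}_{-i}$, then show $\val(v) \leq z_i^k$ because $v$, being accessible from $v_0$ in $(V,F_\l) \subseteq (V,E_k)$, cannot belong to the set $V_\bad$ computed at the even iteration of the terminal no-removal window, whose risk values coincide with $z_i^k$ since the edge sets have stabilised. The only divergence is the bookkeeping you yourself flag: the paper selects the even index among $\{k-1,k\}$ (both of which the loop provably executed), whereas your $k^* \in \{k,k+1\}$ may name an iteration that never ran under the paper's reading of the exit condition --- but since $E_{k-1} = E_k = E_{k+1}$ the relevant attractors and risk measures agree, so this is harmless and the argument is correct.
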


\begin{claimproof}
    After such a history, the strategy profile $\bsigma^{F_\l}_{-i}$ follows the positional strategy profile $\btau^{\dag i}_{-i}$.
    By definition of that strategy profile, we have $\X_i(\bsigma^{F_\l}_{\|hv}, \sigma'_{i\|hv}) \leq \val(v)$.
    On the other hand, the vertex $v$ is accessible from $v_0$ in $(V, F_\l)$, since it is visited with positive probability in $\bsigma^{F_\l}$.
    Therefore, it does not belong to the set $A_k$ (if $k$ is even) or $A_{k-1}$ (if $k$ is odd), and in particular not to the set $V_\bad^k$ or $V_\bad^{k-1}$, which means that we have $\val(v) \leq z_i^k$.
    Hence the conclusion.
\end{claimproof}

In the general case, the payoffs that player $i$ obtains with positive probability in the strategy profile $(\bsigma^{F_\l}_{-i}, \sigma'_i)$ are either obtained using only edges that belong to $F_\l$, or by using an edge that does not: in both cases, we have shown that player $i$ cannot get a payoff greater than $z_i^k$, which proves that the strategy profile $\bsigma^{F_\l}$ is an XRSE.
\cref{algo:cyclefriendly} answers $\Yes$ only on positive instances, and outputs in that case a succinct representation of an XRSE matching the constraints.
\end{claimproof}

We will now prove the converse.

            \begin{proposition}
                \cref{algo:cycleaverse} recognises all positive instances. 
            \end{proposition}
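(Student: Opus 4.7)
\begin{claimproof}
The plan is to mirror the structure of the cycle-friendly correctness proof. I fix an XRSE $\bsigma$ with $\bx \leq \X(\bsigma) \leq \by$ and aim to show that the algorithm never returns $\No$. The central invariant, parallel to \cref{inv:edgesnotused}, is that for every $k$ every edge used with positive probability by $\bsigma$ belongs to $E_k$. Once this invariant is in place, no test $v_0 \in A_k$ can succeed, since then $\bsigma$ would be forced through a removed edge out of $v_0$; and the final test $z_i^k < x_i$ cannot succeed either, because an analogue of \cref{claim:zik} will yield $\X_i(\bsigma) \leq z_i^k$, hence $z_i^k \geq x_i$.

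The new ingredient over the cycle-friendly case is the cycle-averse hypothesis: there exists $i^\star$ with $y_{i^\star} < 0$, and since the event ``no terminal is ever reached'' contributes the value $0$ to $\oexp[\mu_{i^\star}]$, the inequality $\X_{i^\star}(\bsigma) \leq y_{i^\star} < 0$ forces $\bsigma$ to reach a terminal almost surely. This fact drives the invariant at odd $k$: the edges removed lead into $A_k$, a set of vertices from which \emph{every} strategy profile has positive probability of avoiding all terminals; if $\bsigma$ used such an edge with positive probability, it would inherit that positive probability of never reaching a terminal, contradicting almost-sure termination. At $k=0$, an edge leading into $\Attr(V^0_\bad, E)$ would give positive probability to reaching a terminal $t$ with $\mu_i(t) > y_i$, directly violating $\X_i(\bsigma) \leq y_i$. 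At even $k \geq 2$, the cycle-friendly argument transfers verbatim once the claim below is available.

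The main obstacle will be adapting \cref{claim:zik} to the cycle-averse setting, because the strategy profile $\bsigma^{E_k}$ now re-randomises independently at each visit, so infinite plays can carry probability zero. I intend to circumvent this using the odd-step effect expressed in \cref{inv:terminalsaccessible}: for every even $k \geq 2$, every vertex accessible from $v_0$ in $(V, E_k)$ has a terminal reachable from it, so in $\bsigma^{E_k}$ it is almost sure that some terminal is eventually reached, and by independence of re-randomisations every terminal accessible in $(V, E_k)$ is reached with positive probability. Consequently, any payoff vector generated with positive probability by a strategy profile $\bsigma'$ restricted to $E_k$ is also generated with positive probability by $\bsigma^{E_k}$, giving $\X_i(\bsigma') \leq \X_i(\bsigma^{E_k}) = z_i^k$ for each optimist $i$. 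With this lemma, the induction for the invariant at even $k$ closes and the final inequality $z_i^k \geq x_i$ follows, so the algorithm must eventually answer $\Yes$.
\end{claimproof}
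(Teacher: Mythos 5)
Your proposal follows essentially the same route as the paper: the same invariant that $\bsigma$ only ever uses edges of $E_k$, the same base case via $V^0_\bad$, the same odd-step argument from almost-sure termination (forced by $y_{i^\star}<0$), and the same even-step argument comparing $\bsigma$ against $\bsigma^{E_k}$, concluding that neither $\No$-branch can fire.

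One caveat: the intermediate lemma you state — that \emph{any} strategy profile $\bsigma'$ restricted to $E_k$ satisfies $\X_i(\bsigma') \leq z_i^k$ because every payoff vector it generates with positive probability is also generated by $\bsigma^{E_k}$ — is false as literally stated in the cycle-averse setting. A $\bsigma'$ supported on $E_k$ may cycle forever with positive probability and thereby realise the payoff vector $(0)_{i\in\Pi}$, which $\bsigma^{E_k}$ (almost surely terminating at even $k\geq 2$ by \cref{inv:terminalsaccessible}) need not realise; if all reachable terminals give player $i$ a negative payoff, then $\X_i(\bsigma') = 0 > z_i^k$. Your only application of the lemma is to the XRSE $\bsigma$ itself, which does terminate almost surely by the cycle-averse hypothesis, so the argument survives — but you should restrict the lemma to strategy profiles that almost surely reach a terminal (or state it only for $\bsigma$, as the paper does, rather than as a general analogue of \cref{claim:zik}).
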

    
            \begin{claimproof}
           Let us assume that we have a positive instance, i.e., that there exists an XRSE $\bsigma$ with $\bx \leq \X(\bsigma) \leq \by$.
        Let us show that the algorithm will answer $\Yes$.
        To do so, we first prove the following invariant: if an edge is removed at some step \emph{before the final refinements}, then it is never taken by the XRSE $\bsigma$.

        \begin{invariant} \label{inv:edgesnotused_bis}
            For each $k \geq 0$, every edge that has positive probability to be eventually taken in $\bsigma$ belongs to $E_k$.
        \end{invariant}

        \begin{claimproof}
        We prove the invariant by induction. 
        
        \subparagraph*{Base case.} The case $k=0$ is immediate, since we have $E_0 = E$.
        
        Further, at step $k=1$, if the strategy profile $\bsigma$ uses eventually, with positive probability, an edge that does not belong to $E_1$, then it goes with positive probability to a vertex $v \in \Attr(V^0_\bad, E_0)$.
        Then, with positive probability, a terminal vertex will be reached that gives to some player $i$ a payoff greater than $y_i$, which is impossible.
        Therefore, such an edge cannot be taken in $\bsigma$.

        \subparagraph*{Induction step.} Let us assume that the invariant is true until step $k \geq 1$, and let us show that it holds at step $k+1$.    
        Let $uv$ be an edge that is used with positive probability when following $\bsigma$, and let us assume toward contradiction that it does not belong to $E_{k+1}$.
        Since the invariant is true at each step until $k$, we can assume that $uv$ has been removed at step $k$, i.e., that we have $uv \in E_k \setminus E_{k+1}$.
        Then, we have $u \not\in A_k$ and $v \in A_k$.
        The strategy profile $\bsigma$ has therefore positive probability of visiting the set $A_k$.
        
        We must now distinguish the cases where $k$ is even or odd.
        If $k$ is even, then the strategy profile $\bsigma$ has therefore positive probability of visiting a vertex $v \in V_\bad^k$.
        If player $i$ is the player controlling $v$, then that player has a deviation in which they get risk measure at least $\val(v) > z_i^k$.
        Let us now note that when the players follow the strategy profile $\bsigma$, it is almost sure that a terminal vertex will eventually be reached, and that all the terminal vertices that have positive probability of being reached also have positive probability of being reached in $\bsigma^{E_k}$, since the invariant is true at step $k$: therefore, we have $z_i^k \geq \X_i(\bsigma)$, and player $i$ has a profitable deviation after $v$, which contradicts the fact that $\bsigma$ is an XRSE.

        If $k$ is odd, then, by definition of $A_k$, when the strategy profile $\bsigma$ is followed, there is a positive probability of reaching no terminal.
        But that is impossible in the cycle-averse case.

        The invariant is therefore necessarily still true at step $k+1$.
\end{claimproof}

        We are now able to conclude.
        The answer $\No$ can be given in the two following cases:
            \begin{itemize}
                \item \emph{If at step $k$, we have $v_0 \in \Attr(V^k_\bad, E_k)$.}
                Then, the strategy profile $\bsigma$ visits the set $V^k_\bad$ with positive probability.
                With the same arguments that were used in the proof of \cref{inv:edgesnotused_bis}, that is not possible.
                
                \item \emph{If during steps $k-1$ and $k$, no edge is removed, but we have $z^k_i < x_i$ for some player $i$.}
                Since $\bsigma$ uses only edges of $E_k$ by \cref{inv:edgesnotused_bis} and reaches almost surely a terminal (since we are in the cycle-averse case), we have $\X_i(\bsigma) \leq z^k_i$, and therefore $\X_i(\bsigma) < x_i$: that case is therefore also impossible by definition of $\bsigma$.
            \end{itemize}
        None of those cases is possible, hence our algorithm will eventually answer $\Yes$.
    \end{claimproof}

\paragraph*{Complexities.}
We consider here the complexity of the two algorithms.
Since at least one edge is removed every two steps, there are $\Oh(m)$ steps.
In each of them, we need $\Oh(p)$ calls to simple algorithms: computation of $z_i^k$, of $V_\bad^k$, of $A_k$.
Hence, the complexity $\Oh(pm^2)$.

In the cycle-averse case, when an output is asked, we need to add the final refinements loop: which consist of $\Oh(m)$ additional steps in which we check, for each of the $\Oh(m)$ remaining edges, whether they satisfy Conditions~\ref{itm:cuttableedge},~\ref{itm:nolessterminals}, and~\ref{itm:nocycle} in the finishing touches), which can be done in time $\Oh(m)$.
Hence the complexity $\Oh(pm^2 + m^3)$.
\end{proof}

\subsection{Proof of \cref{lm:ptimelowerbound}}\label{app:ptimelowerbound}
\ptimelowerbound*
\begin{proof}[Proof of \cref{lm:ptimelowerbound}]
        Given a deterministic two-player (between players $\Circle$ and $\Square$) zero-sum reachability game $\Game_{\|v_0}$ with target set of vertices $T$, we construct a simple stochastic game (with no stochastic vertices) where there is an XRSE $\bsigma$ satisfying $\X_\circ(\bsigma) = 1$ and $\X_\Box(\bsigma) = -1$ if and only if player $\Circle$ wins the game.  

        The game is simply obtained by assigning rewards on the zero-sum two player game as follows: we make all nodes in the target set $T$ of the reachability game as a terminal node where player $\Circle$ gets reward $1$ and player $\Square$ the reward $-1$. Recall that if no terminal is reached, both players get reward $0$.  

        If $\Circle$ has a strategy to win the reachability game, it is easy to see that the same strategy for $\Circle$, along with any strategy for $\Square$, will be an XRSE in that new game, and that it satisfies the constraint.
        Similarly, if on the other hand, player $\Square$ has a strategy to avoid the states $T$, then no strategy of $\Circle$ that gives her payoff $+1$ and gives player $\Square$ the payoff $-1$ will be an equilibrium, since $\Square$ can always deviate to the winning strategy in the reachability game that offers him the better payoff of $0$.
\end{proof}

\end{document}